\def\Xint#1{\mathchoice
	{\XXint\displaystyle\textstyle{#1}}%
	{\XXint\textstyle\scriptstyle{#1}}%
	{\XXint\scriptstyle\scriptscriptstyle{#1}}%
	{\XXint\scriptscriptstyle\scriptscriptstyle{#1}}%
	\!\int}
\def\XXint#1#2#3{{\setbox0=\hbox{$#1{#2#3}{\int}$}
		\vcenter{\hbox{$#2#3$}}\kern-.5\wd0}}
\def\dashint{\Xint-}
\newcommand{\vtwo}[1]{#1}
\renewcommand{\[}{\begin{equation}}
\renewcommand{\]}{\end{equation}}
\newtheorem{definition}{Definition}
\newtheorem{example}[definition]{Example}
\newtheorem{theorem}[definition]{Theorem}
\newtheorem{lemma}[definition]{Lemma}
\newtheorem{remark}[definition]{Remark}
\newtheorem{corollary}[definition]{Corollary}
\newtheorem{proposition}[definition]{Proposition}
\def\d{\mathrm{d}}
\DeclareMathOperator{\Res}{Res}
\newcommand{\Q}{\mathbb{Q}}
\renewcommand{\H}{\mathcal{H}}
\newcommand{\V}{\mathcal{V}}
\newcommand{\Hc}{\overline{\mathcal{H}}}
\newcommand{\Hint}{\mathcal{H_{\textrm{int}}}}
\newcommand{\Hext}{\mathcal{H}_{\textrm{ext}}}
\newcommand{\ei}{\alpha}
\newcommand{\eic}{\bar{\alpha}}
\newcommand{\sigmac}{\bar{\sigma}}
\newcommand{\ib}{\iota}
\newcommand{\ep}{\pi}
\newcommand{\sgr}{\subset}
\renewcommand{\Res}{\mathbf{R}}
\newcommand{\res}{\mathrm{res}}
\newcommand{\skl}{\mathrm{skl}}
\newcommand{\sdd}{\omega^{\textrm{sd}}}
\newcommand{\rg}{G}
\newcommand{\sg}{H}
\newcommand{\cg}{{\tilde{G}}}
\newcommand{\RG}[2]{\mathbf{G}_{#1}^{#2}}
\newcommand{\opi}{\textsc{1pi}}
\newcommand{\uca}{\mathcal{A}}
\newcommand{\btg}{\mathcal{G}}
\newcommand{\hfd}{\mathcal{H}_{\textsc{ck}}}
\newcommand{\graft}{B_+}
\newcommand{\maxf}{\mathrm{maxf}}
\newcommand{\bij}{\mathbf{bij}}
\newcommand{\cop}{\Delta}
\newcommand{\rcop}{\tilde\Delta}
\newcommand{\one}{\mathbbm{1}}
\newcommand{\id}{\mathrm{id}}
\newcommand{\cou}{\epsilon}
\newcommand{\conp}{*}
\newcommand{\anti}{S}
\newcommand{\aut}{\mathrm{Aut}}
\newcommand{\nv}{V}
\newcommand{\nei}{I}
\newcommand{\nf}{F}
\newcommand{\ce}{c^e}
\newcommand{\cv}{c^v}
\newcommand{\xe}{X^e}
\newcommand{\xv}{X^v}
\newcommand{\xr}{X^\bullet}
\newcommand{\pe}{P^e}
\newcommand{\pv}{P^v}
\newcommand{\qe}{Q^e}
\newcommand{\qv}{Q^v}
\begin{document}

\include{fdiagrams}

\title{Combinatorial Dyson-Schwinger Equations of Quartic Matrix Field Theory}

\author[A. Hock]{Alexander Hock \href{https://orcid.org/0000-0002-8404-4056}{\scaleto{\includegraphics{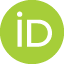}}{10pt}}}
\author[J. Thürigen]{Johannes Thürigen \href{https://orcid.org/0000-0002-6262-1430}{\scaleto{\includegraphics{ORCID-iD_icon-64x64.png}}{10pt}}}

\address{Mathematical Institute, University of Oxford,
  \newline
  Andrew Wiles Building, Woodstock Road, OX2 6GG, Oxford, United Kingdom
  \newline
  {\itshape e-mail:} \normalfont \texttt{alexander.hock@maths.ox.ac.uk}}

\address{Mathematical Institute\\ University of Münster
\newline
Einsteinstr.~62\\ D-48149 M\"unster\\ Germany
\newline
{\itshape e-mail:} \normalfont 
\texttt{johannes.thuerigen@uni-muenster.de }}

\begin{abstract}
  Matrix field theory is a combinatorially non-local field theory which
  has recently been found to be a non-trivial but solvable QFT example.
  To generalize such non-perturbative structures to other models, a more combinatorial understanding of Dyson-Schwinger equations and their solutions is of high interest.
  To this end we consider combinatorial Dyson-Schwinger equations manifestly relying on the Hopf-algebraic structure of perturbative renormalization.
  We find that these equations are fully compatible with renormalization, relying only on the superficially divergent diagrams which are planar ribbon graphs, i.e.~decompleted dual combinatorial maps. 
  Still, they are of a similar kind as in realistic models of local QFT, featuring in particular an infinite number of primitive diagrams as well as graph-dependent combinatorial factors. 
\end{abstract}

\subjclass[2020]{81Txx,81T32,16Txx,05-xx}
\keywords{}

\maketitle

\markboth{\hfill\textsc\shortauthors}{%
\textsc{Combinatorial Dyson-Schwinger Equations of Quartic Matrix Field Theory}\hfill}


\section{Introduction}

Quartic $\phi^4$ matrix field theory (MFT) has proven to be an instructive case of a field theory which is at the same time non-trivial and analytically solvable \cite{Grosse:2012uv,Grosse:2019jnv,Hock:2020rje}, 
related to integrable structures~\cite{Borot:2023thu,Grosse:2023jcb} 
and topological recursion~\cite{Hock:2021tbl,Branahl:2020yru,Branahl:2022uge}.
It provides a matrix representation of scalar $\phi^4$ Euclidean Quantum Field Theory (QFT) on noncommutative Moyal space at the self-dual point, 
also known as the model of Grosse and Wulkenhaar who have proven renormalizability in $D=4$ dimensions to all orders \cite{Grosse:2003aj,Grosse:2004yu,Rivasseau:2005bh}. 
It can be viewed as the quartic \emph{generalization} of Kontsevich's cubic matrix model \cite{Kontsevich:1992ti}.
On the other hand, in the framework of Tensorial and Group Field Theory  \cite{Oriti:1110, BenGeloun:1111, BenGeloun:1306, Carrozza:13},
or even more general combinatorially non-local field theory \cite{Oriti:1409, Thurigen:2102},
matrix field theory is the \emph{special case} of tensor fields of rank two.

Understanding of the analytic structure of MFT relies crucially on \emph{analytic Dyson-Schwinger equations} \cite{Grosse:2012uv}.
In the first place, these equations are a coupled system of infinitely many integro-differential equations derived from the formal path integral.
However, they decouple when applying Ward identities \cite{Disertori:2006nq} and a formal $1/N$ expansion.
It is then possible to solve the equation for the 2-point function \cite{Grosse:2019jnv,Grosse:2019qps}
and find all other planar correlation functions recursively \cite{deJong:2019oez}.
While extremely successful for quartic MFT, 
application of this method to tensorial field theories of higher rank has turned out to be difficult despite various attempts \cite{PerezSanchez:1608, Pascalie:1706,Pascalie:1810,Pascalie:1903}.
One reason might be the much more intricate combinatorial structure at higher rank \cite{Gurau:16}.

On the other hand, \emph{combinatorial Dyson-Schwinger equations} are a complementary way to understand correlation functions in a renormalizable QFT via recursive equations \cite{Bergbauer:2005fb, Kreimer:2005rw, Kreimer:2009iy, Foissy:1112, Yeats:17}.
They are based on the underlying Hopf algebra $\hfd$ of divergent Feynman diagrams which describes perturbative renormalization of such QFT \cite{Connes:1998qv, Connes:1999yr}.
Since it is entirely built on the combinatorics of Feynman diagrams which is well understood in tensorial theories \cite{Gurau:16, BenGeloun:1306}, 
combinatorial Dyson-Schwinger equations might be better suited to treat such theories.
In fact, Connes-Kreimer type Hopf algebras describe renormalization not only in point-like interacting QFT but much more general also in field theories with combinatorially non-local interactions \cite{Thurigen:2102} and can be used for explicit calculations of amplitudes \cite{Thurigen:2103} which match results in perturbative MFT \cite{Hock:2020rje}.
For the case of MFT such a Hopf algebra has been first constructed in \cite{Tanasa:0707}.

MFT is the ideal testing ground for the applicability of combinatorial Dyson-Schwinger equations to combinatorially non-local field theories since, on the one hand, its perturbation theory is described by a Hopf algebra $\hfd$ and, on the other hand, we already know its non-perturbative solutions.
One can therefore address the question how the two types of Dyson-Schwinger equations (analytic and combinatorial) are related and whether combinatorial Dyson-Schwinger equations are an appropriate tool to find non-perturbative solutions for a non-trivial QFT,
as has been shown for various special models in QFT \cite{Broadhurst:2000dq, Bergbauer:2005fb, Kreimer:0612}.
In particular, one a priori expects Ward identities to play a crucial role like in the path-integral setup where they are relevant for decoupling the infinite tower of analytic Dyson-Schwinger equations. 
In the Hopf-algebraic picture, Ward identities define Hopf ideals $\mathcal{I}$ such that the relevant Hopf Algebra eventually is the quotient $\hfd/\mathcal{I}$ on which also the combinatorial Dyson-Schwinger equations usually simplify drastically \cite{Kreimer:2005rw, vanSuijlekom:0610, Kreimer:2009iy}.

\vspace{1ex}

Motivated by these questions, we arrive at the following results about the combinatorial Dyson-Schwinger equations for quartic MFT.
First of all, Thm.~\ref{theorem:Hopf algebra} states that there is a Hopf algebra $\hfd$ of 4-regular ribbon graphs $\rg$ of genus zero with a single boundary.
This is the Hopf algebra that captures the perturbative renormalization of quartic MFT \cite{Hock:2020rje, Thurigen:2103}.
It has already been sketched as an example in \cite{Thurigen:2102}, but for the current purpose it is necessary to work out the details:
\vtwo{The basis is a combinatorial definition of ribbon graphs via permutations such that the completion of a ribbon graph is dual to a combinatorial map (worked out in detail in App.~\ref{sec:ribbon graphs}.}
We give the definition of the contraction $\rg/\sg$ of a ribbon subgraph $\sg\sgr\rg$ directly at the level of permutations and \vtwo{argue} that $\rg/\sg$ 
\vtwo{gives the proper ribbon graph in the QFT sense only
when all components of $\sg$ have a single boundary;
for more than one boundary this leads to multi-trace vertices which cannot be} covered by combinatorial maps. 
Crucially, the Hopf algebra~$\hfd$ of MFT contains only connected ribbon graphs with a single boundary such that the coproduct $\cop:\rg\mapsto\sum_{\sg\sgr\rg} \sg \otimes \rg/\sg$ is closed in $\hfd$.
This aspect is not completely clear in the earlier construction of~\cite{Tanasa:0707} and seems to be the reason for explicitly including planar ribbon graphs with multiple boundaries (coined ``planar irregular'') in the Hopf algebra in \cite{Tanasa:2009hb}, 
\vtwo{also}~\cite{Tanasa:21}.

The second main result, Thm.~\ref{thm:DSE}, are the combinatorial Dyson-Schwinger equations for the series $\xe$ and $\xv$ over 2-point and 4-point ribbon graphs, 
\begin{align}
    \xe(\alpha) &= \edge 
    - \alpha \graft^{\edge}(Q\xe)
= \edge 
- \alpha (\graft^{\tadpoleup}+\graft^{\tadpoledown}) (Q\xe) \\
\xv(\alpha) &= \vertex 
+ \!\! \sum_{\substack{\Gamma \text{ primitive} \\ \res(\Gamma)=\vertex}} \!\!\!\! \alpha^{\nf_\Gamma} \graft^{\Gamma}(Q^{\nf_\Gamma} \xv)
=\vertex 
+ \alpha (\graft^{\fishright} + \graft^{\fishup})(Q\xe) + ...
\end{align}
wherein $Q = (\xe)^{-2} \xv$.
While these equation have the same form as in \cite{Tanasa:2009hb}, 
they differ in the definition of the grafting operator $\graft$.
Here we introduce a new definition
\footnote{
We thank Michael Borinsky for a discussion leading to this definition. 
}, Def.~\ref{def:graftingOP},
\[
\graft^{\Gamma}(\sg) := \sum_{\cg\in [\Gamma]_{\sim_2}} \frac1{|\mathcal{I}(\sg,\cg)|} \sum_{\ib\in \mathcal{I}(\sg,\cg)} \frac{\cg\circ_\ib \sg}{\maxf(\cg\circ_\ib \sg)} \, ,
\]
which is based on a mathematically rigorous notion of insertion $\cg\circ_\ib \sg$ with respect to an isomorphism $\ib\in\mathcal{I}(\sg,\cg)$, Def.~\ref{def:insertion}, according to \cite{Borinsky:2018,Thurigen:2102}.
It is essential that these isomorphisms allow only insertions of ribbon graphs $\sg$ in $\cg$ for which the external structure of~$\sg$ matches the vertex structure of~$\cg$.
The expansion of $\graft$ in terms of ribbon graphs weighted by combinatorial factors common in the literature~\cite{Kreimer:2005rw} then follows as a consequence, Prop.~\ref{prop:grafting expansion}.

Two aspects in the combinatorial Dyson-Schwinger equation hint at the non-trivial structure of MFT.
On the one hand, it is necessary to include the factor $1/\maxf$ 
in the definition of $\graft$ due to the presence of overlapping divergences. 
That is, there are ribbon graphs $\rg$ in the theory which have different, usually overlapping subgraphs $\sg_1,\sg_2$ whose contraction yields primitive ribbon graphs~$\rg/\sg_i$, i.e.~graphs without subdivergences. Their number is counted by $\maxf(\rg)$.
On the other hand, we find that there are infinitely many such primitives in the 4-point series $\xv$.
These facts are expected to relate to a specific anomalous dimension. 

For a precise statement, we use the known analytic solution of the 2-point function, Thm.~\ref{thm:hypergeom} to calculate the anomalous dimension $\gamma$ of quartic MFT 
and find in Cor.~\ref{cor:anomalous dimension} that 
\[
\gamma=-\frac{1}{\pi}\arctan (\lambda \pi) 
\]
corroborating the earlier results of a dimension drop in the spectral dimension of the non-commutative Grosse-Wulkenhaar model \cite{Grosse:2019qps}. 
This reduction of dimension below the otherwise critical dimension of quartic MFT is  why the theory avoids quantum triviality, i.e.~it shows that the theory is renormalizable as an interacting Euclidean QFT.

\vspace{1ex}

Mathematically, the combinatorial Dyson-Schwinger equations are supposed to follow from the relation of the Hopf algebra to Hochschild cohomology \cite{Connes:1998qv}.
In the related Hopf algebra of rooted trees, $\graft$ is a Hochschild 1-cocyle,
\[\label{eq:Hochschild1}
\cop\graft=\graft\otimes\bullet + (\id\otimes\graft)\cop \, ,
\]
from which recursive relations and Hopf subalgebras follow,
in particular the relation
\[\label{eq:coproduct of coefficients}
\Delta (c_n^{\bullet})=\sum_{k=0}^n P^\bullet_{n,k}\otimes c_{n-k}^{\bullet}
\] 
with polynomials $P^\bullet_{n,k}$ of order $k$ in the coefficients 
of the series $\xr= \sum_{j\ge0} \alpha^j c^\bullet_j$ for $\bullet=e,v$.
These results are expected to generalize to the Hopf algebra $\hfd$ of divergent Feynman diagrams of a perturbatively renormalizable QFT since $\hfd$ relates to a Hopf algebra of decorated rooted trees, or more precisely decorated posets \cite{Borinsky:2015mga}.
This has been worked out in specific cases \cite{Bergbauer:2005fb, Kreimer:2009iy} and it is exactly here where the above mentioned Ward identities might become relevant as the Hochschild property Eq.~\ref{eq:Hochschild1} can be true only on the quotient $\hfd/\mathcal{I}$ \cite{Kreimer:2005rw, vanSuijlekom:0610}. 
However, we are not aware of a complete proof for more realistic QFTs, in particular not for MFT. 

Here we revert the usual logic and start with the subalgebra structure to then determine the conditions for the Hochschild property to be true.
We find that the subalgebra structure is completely independent of Hochschild cocyles and recursive relations.
In particular, we prove the coproduct Eq.~\eqref{eq:coproduct of coefficients} in Thm.~\ref{thm:coproduct} providing also the explicit form of the polynomials $P^\bullet_{n,k}(c)$.
In fact, such a formula of the coproduct also holds more generally for any monomials in $1/\xe$ and $\xv$ as we prove in Prop.~\ref{prop:coproductmonomial}.
Together with the combinatorial Dyson-Schwinger equations it is then straightforward to prove the Hochschild property Eq.~\eqref{eq:Hochschild1} for a sum of~$\graft^\Gamma$ over all primitives $\Gamma$ of the theory, Thm.~\ref{Thm:Hochschild}.
However it is not clear whether this also holds for $\graft$ to any given order, that is, whether also
\[
\Delta \graft^{\bullet,n}=\graft^{\bullet,n}\otimes \bullet+(\id\otimes \graft^{\bullet,n})\Delta 
\]
is true at any loop order $n$ for 
$\graft^{\bullet,n} = \sum_\Gamma \graft^\Gamma$ summing over primitive $\bullet$-point ribbon graphs $\Gamma$ with $n$ faces.
What we can prove in Prop.~\ref{prop:Hochschildn} is that this is true if and only if for any single ribbon graph $\rg\in\hfd$ all primitive cographs $\cg=\rg/\sg$ have the same number of faces.
Though possibly true in quartic MFT, this is however a highly non-trivial statement to show and we leave its proof, or proof of a counter example, for the future.

It is important to note that Ward identities cannot improve this situation in quartic MFT.
Ward identities in the Hopf algebra are always of the type $Q_i \sim Q_j$  relating different types of monomials $Q_i$ in the perturbative series $\xr$ of a theory related to different types of vertices \cite{Kreimer:2009iy,Prinz:2001}. 
However, in quartic MFT we have one type of interactions, the quartic one, and thus only one type of divergent $n$-point functions next to the $2$-point functions, resulting in a single $Q=(\xe)^{-2} \xv$.
Thus, the known analytic Ward identities (Eq.~\eqref{DSE2W}, \eqref{W4P}) do not have a direct algebraic analogue.

\vspace{1ex}

Combinatorial Dyson-Schwinger equations in the series $\xe, \xv$ map to analytic equations of Green's functions by the character, given by the Feynman rules Def.~\ref{def:Feynman rules}.
In principle, this yields non-perturbative equations derived from the structure of perturbative renormalization.
A proper non-perturbative treatment is challenged in the case of MFT mainly for two reasons:
One is that the grafting operator consists of insertions weighted by graph-dependent combinatorial factors, in particular $\maxf(\rg)$, the number of maximal forests in the resulting ribbon graph $\rg$;
this means that in practice the equation has to be evaluated on each graph individually.
The other reason is that there are infinitely many 4-point primitive graphs, and thus infinitely many terms in the equation for $\xv$;
explicit evaluation is thus only possible up to a finite number of loops.
From this perspective, the question whether for any graph all primitive cographs have the same loop order seems less important since it becomes relevant only at higher loops.

These limitations of the applicability of the combinatorial Dyson-Schwinger equations is not specific to MFT but typical for more realistic QFTs.
Given the specific structure of MFT as a non-trivial Euclidean QFT with analytic solutions existing, one could still have expected some peculiar, possibly simplifying algebraic structure.
In this respective, our results so far are negative:
Even though the diagrammatics are considerably simplified due to the reduction on planar ribbon graphs, the Hopf algebra and recursive equations remain of a similar level of complexity as compared to, for example, local quartic scalar field theory.
Thus, it could be that perturbative series in Feynman diagrams are simply not the \vtwo{appropriate} starting point to reveal the analytic structure of MFT.
Still, it is also possible that some crucial aspects, e.g. the right implementation of the analytic Ward identities in the algebraic language, only remain to be uncovered. 
We will leave this for future research.

\section*{Acknowledgement}
We are grateful to Michael Borinsky, David Broadhurst, Henry Kißler, Dirk Kreimer, Erik Panzer and David Prinz for helpful discussions. A.H. is grateful to Harald Grosse and Raimar Wulkenhaar for previous joint work on the Grosse-Wulkenhaar model. A.H.'s work is funded by the German Research Foundation (Deutsche Forschungsgemeinschaft, DFG) through a Walter-Benjamin fellowship, project number 465029630,
while J.T.'s work is funded by DFG through the project ``Non-perturbative group field theory from combinatorial Dyson-Schwinger equations and their algebraic structure'', project number 418838388, and furthermore embedded in Germany's Excellence Strategy EXC~2044--390685587, Mathematics M\"unster: Dynamics–Geometry–Structure.

\section{4-dimensional $\phi^4$ matrix field theory/Grosse-Wulkenhaar model}

This section reviews the developments of $\phi^4$ matrix field theory (see \cite{Branahl:2021slr} for an extended review). 
This theory is also known as the Grosse-Wulkenhaar model since it is a matrix representation of scalar $\phi^4$ Euclidean QFT on $D=4$ dimensional, noncommutative Moyal space at the self-dual point for which Grosse and Wulkenhaar have proven renormalizability to all orders \cite{Grosse:2003aj,Grosse:2004yu,Rivasseau:2005bh}. 
We aim to minimize derivations while providing enough detail to understand how this model differs from ordinary QFT. More literature with additional details and proofs will be provided.
We start with the matrix representation and refer the reader to \cite{Wulkenhaar2019,Hock:2020rje} for information about the connection between QFT on noncommutative geometry and matrix field theory. 
The complex analogue of the $\phi^4$ matrix model with complex fields was considered and analysed in \cite{Branahl:2022xdm,Branahl:2022uge}, and behaves very similar at the simplest topologies. In this article, we will just consider the hermitian case.

Let $H_N$ be the space of hermitian $N\times N$-matrices and let $E\in H_N$ have positive eigenvalues ($E$ plays the role of the Laplacian). Then, we define the measure on the space of matrices $M\in H_N$
\begin{align}\label{measure}
    \d\mu(M)=\frac{\d M \exp[-N\mathrm{Tr}( E M^2+\frac{\lambda}{4}M^4)]}{\int_{H_N} \d M \exp[-N\mathrm{Tr}( E M^2+\frac{\lambda}{4}M^4)]},
\end{align}
where $\d M$ is the Lebesgue measure and  $\lambda\in \mathbb{R}_+$ the coupling constant.  Moments are defined in the sense of a formal matrix model, i.e. the exponential of the interaction term $\exp[-N\mathrm{Tr}( \frac{\lambda}{4}M^4)]$ is expanded and interchanged with the integration. We denote the moments by
\begin{align}\label{moment}
    \langle M_{a_1b_1}...M_{a_nb_n}\rangle:=\int_{H_N}\d\mu(M)\,M_{a_1b_1}...M_{a_nb_n},
\end{align}
where $M_{a b}$ are the components of the matrix $M\in H_N$. Accoording to classical probability theory the moments decompose into cumulants (connected components) $\langle M_{a_1b_1}...M_{a_nb_n}\rangle_c$ through partitions. 
For matrix models, more specifically, these partitions vanish unless the indices form a permutation.
This observation goes essentially back to a work of Brezin, Itzykson, Parisi and Zuber \cite{Brezin:1977sv}. We will denote such a non-vanishing cumulant by the disjoint cycles of the permutation.  Thus, it is natural to define the \textit{correlation function} by
\begin{align}\label{Npoint}
    N^{2-b} G_{|a_1^1...a_{n_1}^1|...|a_1^b...a_{n_b}^b|}:=N^n\langle \prod_{j=1}^b\prod_{i=1}^{n_j}M_{a_i^ja_{i+1}^j}\rangle_c,
\end{align}
where $a_i^j$ are pairwise different, $b$ is the number of disjoint cycles and $n_j$ the length of the $j$-th cycle summing up to $n=n_1+...+n_b$. 
We call the correlation function of \eqref{Npoint} a \textit{$(n_1,...,n_b)$-point function}, which is normalised to large-$N$ asymptotics. 
Each correlation function has a further genus expansion through a natural embedding into genus-$g$ Riemann surfaces which are suppressed by $N^{-2g}$ \cite{tHooft:1973alw,Brezin:1977sv},
\begin{align}
G_{|a_1^1...a_{n_1}^1|...|a_1^b...a_{n_b}^b|}=:\sum_{g=0}^{\infty}N^{-2g}G^{(g)}_{|a_1^1...a_{n_1}^1|...|a_1^b...a_{n_b}^b|}.
\end{align}
All correlation functions are defined for pairwise different $a_i^j$, but have a well-defined limit for coinciding indices.

The dimension of a QFT can be defined through the spectral dimension of the Laplace operator, here $E$, in the limit $N\to \infty$. 
This operator-algebraic definition due to Weyl \cite{Weyl1911} defines the dimension to be the smallest number $D\in\mathbb{R}$ such that
\begin{align}\label{specD}
    \lim_{N\to \infty} \frac{1}{N^{D/2}}\sum_{k=1}^N E_k^{-D/2-\epsilon}
\end{align}
converges for all $\epsilon>0$, where $E_k$'s are the eigenvalues (spectrum) of $E$.

\subsection{Perturbative series}\label{sec:perturbative}
In this subsection, we recall the perturbation theory for this matrix model and introduce its Feynman diagrams and their additional structure of boundary components and genus. We refer to \cite{Hock:2020rje,Branahl:2020uxs} for a comprehensive derivation and much more examples. 
Roughly speaking, this additional structure appears because the action
\begin{align}\label{action}
    N\mathrm{Tr}( E M^2+\frac{\lambda}{4}M^4)]
\end{align}
is invariant \textit{just} under cyclic permutations due to the cyclic invariance of the trace, whereas in ordinary QFTs the action is invariant under any permutation.

Let $(E_1,...,E_N)$ be the eigenvalues of $E$.
The covariance of the \textit{free theory}, i.e. $\lambda=0$, is
\begin{align*}\label{propagator}
    N\langle M_{a,b}M_{c,d}\rangle_{\lambda=0}=\frac{\delta_{a,d}\delta_{b,c}}{E_a+E_b},
\end{align*}
which is typically called the \textit{free propagator}.
Note that the Kronecker $\delta$'s in the free propagator are responsible for the fact that a moment \eqref{moment}  factorises into partitions which necessarily have to form a permutation \eqref{Npoint}, otherwise, it will vanish.

Since we are looking at a formal matrix model, any correlation function can be computed, in principle, from Feynman rules (coming from Wick contraction together with the free propagator) applied to Feynman diagrams. The Feynman diagrams for matrix models are \textit{ribbon graphs} which take into account cyclic ordering at a vertex. This observation goes originally back to \cite{Brezin:1977sv}. 
More formally, we define a ribbon graph to be a connected graph embedded in a Riemann surface (with genus and boundary components) such that the complement of the graph is a disjoint union of disks 
and any disk is adjacent to at most one boundary. 
These disks are called \textit{faces}. A face is called \textit{external} if it is adjacent to the boundary of the ribbon graph, otherwise it is called \textit{internal}. 
In this article we will consider the special case of 4-valent ribbon graphs coming from the quartic interaction \eqref{action}, i.e. all vertices of a ribbon graph have degree 4, except for the vertices ending in a boundary which are univalent. See Fig. \ref{Fig:2+2G1} for an example with genus $g=1$ and $b=2$.

For a Riemann surface of genus $g$ and $b$ boundary components, the number of topologically non-equivalent ribbon graphs for a fixed number $V$ of vertices is finite. In other words, the number of Feynman graphs contributing at a fixed order in perturbation theory for a fixed topology $(g,b)$ is finite. This is easily shown by the Euler characteristic
\begin{align}
    \chi=-2g-b+2
\end{align}
together with Euler's formula
\begin{align}
    \chi=V-E+F,
\end{align}
where $E$ is the number of edges and $F$ the number of faces.

\begin{figure}[htb]
\begin{center}
\includegraphics{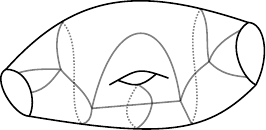}
\caption{\small This is an example of a ribbon graph in the sense above. The Riemann surface has genus $g=1$ and two boundary components $b=2$, thus Euler characteristic $\chi=-2g-b+2=-2$. 
The complement of the grey ribbon graph consists of 4 external faces and one internal face, thus $F=5$. 
The number of vertices is $V=5$ and edges $E=12$, which gives of course the same Euler characteristic $\chi=5-12+5=-2$.}
\label{Fig:2+2G1}
\end{center}
\end{figure}

Similarly to ordinary QFT, there is a fixed external structure labelled by external momenta. 
In matrix theory, these momenta are the eigenvalues $E_n$ of the Laplacian $E$ labelling all \emph{faces} (not edges). 
An edge adjacent to the faces labelled by $E_n$ and $E_m$ receives the weight $\frac{1}{E_n+E_m}$ coming from the free propagator Eq.~\eqref{propagator}. 
The internal faces are the loops and we have to sum over all eigenvalues, i.e.~the spectrum of the Laplacian $E$. \vtwo{Important examples appear if the eigenvalues degenerate, that is, eigenvalues appear with multiplicities. Thus, it is more general to assume that weo have eigenvalues $(E_1,....,E_d)$ together with their multiplicities $(r_1,...,r_d)$ with $\sum_nr_n=N$.}
In summary, we find the following Feynman rules:\\

\begin{definition}[Feynman rules]\label{def:Feynman rules}
Let $\Gamma$ be a ribbon graph with labelled faces and $\varpi$ be a weight defined as the product of
\begin{itemize}
    \item weights $\frac{1}{E_n+E_m}$ for each edge adjacent to faces labelled by $E_{n}$ and $E_m$,
    \item a factor $-\lambda$ for each 4-valent vertex, 
    \item a factor $\frac{1}{N}$ for each internal face. Finally, there is a sum over all the eigenvalues $E_{n}$ \vtwo{with multiplicity $r_n$} for each such internal face.
\end{itemize}
    
\end{definition} 

\begin{example}
    We take the example of Fig. \ref{Fig:Rainbow} and obtain from the Feynman rules the contribution
    \begin{align*}
        \varpi = \frac{1}{(E_a+E_b)^2}\frac{(-\lambda)^2}{N^2}\sum_{n,m}\frac{r_nr_m}{(E_a+E_n)(E_b+E_m)}.
    \end{align*}
\end{example}

\begin{figure}[htb]
{\hspace*{15ex}
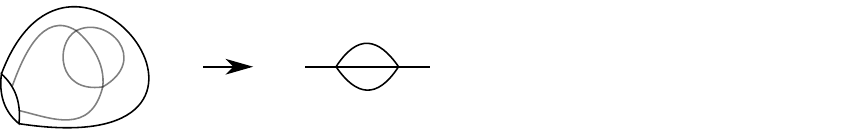}
\caption{\small This is an example of a ribbon graph with one boundary and genus zero. It has two external faces labelled by $E_a$ and $E_b$ and two internal labelled by $E_n$ and $E_m$. Since the ribbon graph is of genus zero, it has an embedding into the plane.}
\label{Fig:Rainbow}
\end{figure}

\begin{example}
    We take the example of Fig. \ref{Fig:2+2G1}. Label the left two external faces by $E_{a_1},E_{a_2}$ and the right two by $E_{b_1},E_{b_2}$. The remaining internal face is labelled by $E_n$. There are four different ways to label the external faces, one of them gives for instance the contribution
    \begin{align*}
        \varpi = \frac{1}{(E_{a_1}+E_{a_2})^2(E_{b_1}+E_{b_2})^2}\frac{(-\lambda)^5}{N}\sum_{n}\frac{r_n}{(E_{a_1}+E_n)^2(E_{b_1}+E_n)^2(E_n+E_n)^4}.
    \end{align*}
\end{example}

It is straightforward to show that any correlation function can be determined from the Feynman diagrams (see, for instance, \cite{Hock:2020rje}):
\begin{proposition}
    An $(n_1,...,n_b)$-point 
    function of genus $g$ has a perturbative expansion in ribbon graphs $\rg\in\RG{g,b}{n_1,...,n_b}$ of genus $g$ with $b$ boundaries,  
    \begin{align}\label{eq:correlation genus g}
        G^{(g)}_{|a_1^1..|...|..a_{n_b}^b|}=\sum_{\rg\in\RG{g,b}{n_1,...,n_b} }\varpi(\rg) \, ,
    \end{align}
    where the external faces attached to the $j$-th boundary are labelled with $E_{a^j_i}$ with $i\in \{1,...,n_j\}$ and $j\in \{1,...,b\}$.
\end{proposition}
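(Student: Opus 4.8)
The plan is to derive the expansion directly from the formal definition of the connected correlation function, treating the matrix integral as a Gaussian integral perturbed by the quartic term. First I would expand the interaction, writing $\exp[-N\mathrm{Tr}(\tfrac{\lambda}{4}M^4)]=\sum_{V\ge 0}\tfrac{1}{V!}(-\tfrac{N\lambda}{4})^V(\mathrm{Tr}\,M^4)^V$ inside the cumulant in Eq.~\eqref{Npoint}. Each order $V$ reduces the computation to a free (Gaussian) moment of a product of $2n+4V$ matrix entries, to which Wick's theorem applies: the moment equals the sum over all pairings of these entries, each pairing weighted by a product of free propagators $\langle M_{ab}M_{cd}\rangle_{\lambda=0}=\tfrac1N\tfrac{\delta_{ad}\delta_{bc}}{E_a+E_b}$.

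Next I would identify each such pairing with a ribbon graph $\rg$. Since every entry $M_{ab}$ carries two indices, it is drawn as a ribbon half-edge (a double strand), and the two Kronecker deltas of the propagator glue paired half-edges so that each index is conserved along a strand; the cyclic trace at each $\mathrm{Tr}\,M^4$ fixes the cyclic order of its four half-edges, producing a $4$-valent ribbon vertex, while the $b$ cycles of external insertions in Eq.~\eqref{Npoint} form the $b$ boundary components carrying the prescribed face labels $E_{a_i^j}$. Reading off the value of a fixed pairing yields precisely the rules of Def.~\ref{def:Feynman rules}: a factor $-\lambda$ per vertex, a factor $1/(E_n+E_m)$ per edge adjacent to the faces $E_n,E_m$, and a free index sum $\sum_n r_n$ over each closed strand loop, i.e.~each internal face.

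The organization by topology follows from power counting in $N$. Attributing $N^{-1}$ to each propagator, $N^{+1}$ to each vertex, and $N^{+1}$ to each summed internal index loop, the raw Wick weight of a connected graph equals $N^{V-E+F_{\mathrm{int}}}$ times an index-dependent factor, where the $n$ external faces are unsummed so that $F_{\mathrm{int}}=F-n$. Since $\chi=V-E+F=2-2g-b$, this power is $N^{2-2g-b-n}$; multiplying by the normalization $N^{\,n+b-2}$ of Eq.~\eqref{Npoint} leaves $N^{-2g}$, and the accompanying index-dependent factor, with one $1/N$ redistributed to each internal face, is exactly the weight $\varpi(\rg)$ of Def.~\ref{def:Feynman rules}. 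Hence the coefficient of $N^{-2g}$ is the finite sum $\sum_{\rg\in\RG{g,b}{n_1,\dots,n_b}}\varpi(\rg)$, finiteness being the Euler-characteristic argument already given, and only connected graphs occur because $\langle\cdots\rangle_c$ is by definition the connected part of the moment (the linked-cluster theorem).

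The main obstacle is the bookkeeping of combinatorial prefactors: I must show that the $1/V!$ from the exponential, the $1/4^V$, and the cyclic redundancy of each trace combine with the number of Wick pairings producing a given ribbon graph so that each isomorphism class appears with weight $\varpi(\rg)$ and no leftover symmetry factor. Here the rigidity introduced by the \emph{labelled} external faces is what makes this clean, as it removes the automorphisms that would otherwise act on the boundary; the internal symmetry factors are then matched against the vertex and propagator multiplicities in the usual way. The only other point requiring care is that the $N$-power counting is exact for each individual graph rather than merely asymptotic, which is guaranteed because $V-E+F$ is a topological invariant, so that the genus grading is unambiguous order by order.
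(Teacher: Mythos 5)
Your proof is correct and follows essentially the same route the paper relies on: the Proposition is stated there without an explicit proof, deferring to the cited literature (\cite{Hock:2020rje}, going back to Brezin--Itzykson--Parisi--Zuber), and that standard derivation is precisely your argument --- formal expansion of the interaction, Wick pairings read as double-line ribbon graphs, exact $N$-power counting via $\chi=V-E+F=2-2g-b$, and the linked-cluster theorem for connectedness. Two minor points: the Gaussian moment at order $V$ involves $n+4V$ matrix entries (with $n=n_1+\dots+n_b$), not $2n+4V$; and your observation that the labelled external faces rigidify the graph (a dart-fixing automorphism of a connected map is the identity) is indeed the correct reason why no residual symmetry factors appear and each $\rg\in\RG{g,b}{n_1,...,n_b}$ contributes exactly $\varpi(\rg)$.
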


\begin{remark}
    The number of ribbon graphs for a given genus $g$, a given number of boundary components and a given number $n_1+...+n_b$ of faces attached to the boundaries can be computed via topological recursion \cite{Eynard:2007kz}. The ribbon graphs considered here are dual to quadrangulations of so-called fully simple maps \cite{Borot:2021eif,Bychkov:2021hfh}, which is related to the $x-y$ duality transformation in topological recursion \cite{Eynard_2008,Hock:2022wer,Hock:2022pbw,Alexandrov:2022ydc}.
\end{remark}

The planar (genus $g=0$) 2-point function has up to order $\lambda^2$ the expansion shown in Fig. \ref{Fig:2pointPerturb} consisting of 12 graphs. If we assign to the top face the eigenvalue $E_a$ and bottom face $E_b$, the 2-point function has the leading order expansion
\begin{align*}
    G^{(0)}_{|ab|}=\frac{1}{E_a+E_b}+\frac{1}{(E_a+E_b)^2}\frac{-\lambda}{N}\sum_n \bigg(\frac{r_n}{E_a+E_n}+\frac{r_n}{E_b+E_n}\bigg)+\mathcal{O}(\lambda^2).
\end{align*}
\begin{figure}[htb]
{\hspace*{10ex}
\begingroup%
  \makeatletter%
  \providecommand\color[2][]{%
    \errmessage{(Inkscape) Color is used for the text in Inkscape, but the package 'color.sty' is not loaded}%
    \renewcommand\color[2][]{}%
  }%
  \providecommand\transparent[1]{%
    \errmessage{(Inkscape) Transparency is used (non-zero) for the text in Inkscape, but the package 'transparent.sty' is not loaded}%
    \renewcommand\transparent[1]{}%
  }%
  \providecommand\rotatebox[2]{#2}%
  \newcommand*\fsize{\dimexpr\f@size pt\relax}%
  \newcommand*\lineheight[1]{\fontsize{\fsize}{#1\fsize}\selectfont}%
  \ifx\svgwidth\undefined%
    \setlength{\unitlength}{396.50689461bp}%
    \ifx\svgscale\undefined%
      \relax%
    \else%
      \setlength{\unitlength}{\unitlength * \real{\svgscale}}%
    \fi%
  \else%
    \setlength{\unitlength}{\svgwidth}%
  \fi%
  \global\let\svgwidth\undefined%
  \global\let\svgscale\undefined%
  \makeatother%
  \begin{picture}(1,0.33173155)%
    \lineheight{1}%
    \setlength\tabcolsep{0pt}%
    \put(0,0){\includegraphics[width=\unitlength,page=1]{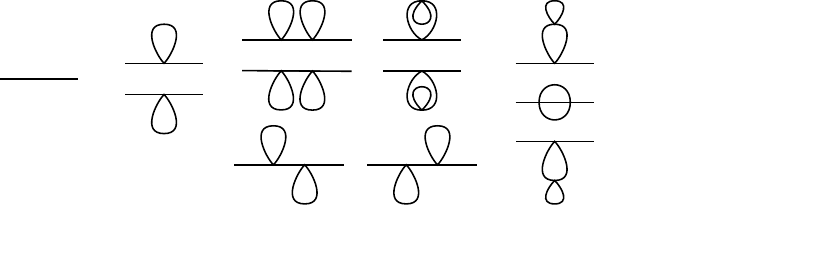}}%
    \put(0.00999943,0.01135115){\color[rgb]{0,0,0}\makebox(0,0)[lt]{\lineheight{1.25}\smash{\begin{tabular}[t]{l}$\lambda^0$\end{tabular}}}}%
    \put(0.19154437,0.01112002){\color[rgb]{0,0,0}\makebox(0,0)[lt]{\lineheight{1.25}\smash{\begin{tabular}[t]{l}$\lambda^1$\end{tabular}}}}%
    \put(0.48947645,0.01380817){\color[rgb]{0,0,0}\makebox(0,0)[lt]{\lineheight{1.25}\smash{\begin{tabular}[t]{l}$\lambda^2$\end{tabular}}}}%
  \end{picture}%
\endgroup%
}
\caption{\small Ribbon graphs contributing to the planar 2-point function up to second order in $\lambda$.}
\label{Fig:2pointPerturb}
\end{figure}
In Fig. \ref{Fig:4pointPerturb}, we have listed all graphs contributing to the planar 4-point function up to the second order in $\lambda^2$, where additional permutations permuting the external faces have to be taken into account.
\begin{figure}[htb]
{\hspace*{10ex}
\begingroup%
  \makeatletter%
  \providecommand\color[2][]{%
    \errmessage{(Inkscape) Color is used for the text in Inkscape, but the package 'color.sty' is not loaded}%
    \renewcommand\color[2][]{}%
  }%
  \providecommand\transparent[1]{%
    \errmessage{(Inkscape) Transparency is used (non-zero) for the text in Inkscape, but the package 'transparent.sty' is not loaded}%
    \renewcommand\transparent[1]{}%
  }%
  \providecommand\rotatebox[2]{#2}%
  \newcommand*\fsize{\dimexpr\f@size pt\relax}%
  \newcommand*\lineheight[1]{\fontsize{\fsize}{#1\fsize}\selectfont}%
  \ifx\svgwidth\undefined%
    \setlength{\unitlength}{357.06219539bp}%
    \ifx\svgscale\undefined%
      \relax%
    \else%
      \setlength{\unitlength}{\unitlength * \real{\svgscale}}%
    \fi%
  \else%
    \setlength{\unitlength}{\svgwidth}%
  \fi%
  \global\let\svgwidth\undefined%
  \global\let\svgscale\undefined%
  \makeatother%
  \begin{picture}(1,0.25481487)%
    \lineheight{1}%
    \setlength\tabcolsep{0pt}%
    \put(0,0){\includegraphics[width=\unitlength,page=1]{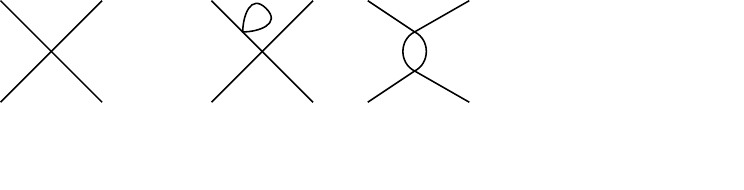}}%
    \put(0.06375688,0.01234845){\color[rgb]{0,0,0}\makebox(0,0)[lt]{\lineheight{1.25}\smash{\begin{tabular}[t]{l}$\lambda^1$\end{tabular}}}}%
    \put(0.43307886,0.01624396){\color[rgb]{0,0,0}\makebox(0,0)[lt]{\lineheight{1.25}\smash{\begin{tabular}[t]{l}$\lambda^2$\end{tabular}}}}%
  \end{picture}%
\endgroup%
}
\caption{\small Ribbon graphs contributing to the planar 4-point function up to second order in $\lambda$ up to permutations. In total there are 10 graphs at the second order, 8 different permutations of the left and 2 of the right. The third order has already 90 contributing ribbon graphs.}
\label{Fig:4pointPerturb}
\end{figure}

For completeness, we list some known generating series for the number of Feynman diagrams (see \cite{Garcia-Failde:2018ylj} for some details).
Let $c=\sqrt{\frac{\sqrt{1+12\lambda}-1}{6 \lambda}}$
and denote $|G^{(0)}_{|ab|}|$ the generating function counting the number of Feynman diagrams contributing to the planar 2-point function at given loop order. This is
\begin{align*}
    |G^{(0)}_{|ab|}|=\lambda c^6+c^2=1+2(-\lambda)+9(-\lambda)^2+54(-\lambda)^3+... \,.
\end{align*}
Similarly, for
the planar 4-point function one has
\begin{align*}
    |G^{(0)}_{|abcd|}|=2c^{12}\lambda^2 + c^8 \lambda=-\lambda+10(-\lambda)^2+90(-\lambda)^3+810(-\lambda)^4+... \,.
\end{align*}
The convergence radius is $|\lambda|<\frac{1}{12}$ due to the definition of $c$. 
This extends to all correlation functions and any genus in this model.
Resummability of the genus series itself for matrix models is a very active research field and related to resurgence, but beyond the scope of this article.

In the same way one can also count the number of 1PI diagramms, that is bridgeless ribbon graphs, which are the crucial objects for renormalization (see Sec. \ref{Sec:Hopf}). 
These numbers follow from the previous results using the inverse of the geometric series. 
Let $|\Pi^{(0)}_{|ab|}|$  be the generating function counting the number of planar 1PI 2-point graphs at given loop order and $|\Pi^{(0)}_{|abcd|}|$ for planar 1PI 4-point graphs. These are given by
\begin{align}\label{eq:correlations1PI}
    |\Pi^{(0)}_{|ab|}| &= \frac{|G^{(0)}_{|ab|}|-1}{|G^{(0)}_{|ab|}|}=\frac{\lambda c^6+c^2-1}{\lambda c^6+c^2}=2(-\lambda)+5(-\lambda)^2+26(-\lambda)^3+...\\
    |\Pi^{(0)}_{|abcd|}| &= \frac{|G^{(0)}_{|abcd|}|}{|G^{(0)}_{|ab|}|^4}=\frac{2c^{12}\lambda^2 + c^8 \lambda}{(\lambda c^6+c^2)^4}=(-\lambda)+2(-\lambda)^2+14(-\lambda)^3+... \,,
\end{align}
where $c=\sqrt{\frac{\sqrt{1+12\lambda}-1}{6 \lambda}}$ implies the same radius of convergence.

\subsection{Dyson-Schwinger equations and Ward identity}
Unlike ordinary QFT models, the $\phi^4$ matrix model \eqref{measure} has a closed, convergent (in $\lambda$) expression for any $(n_1+...+n_b)$-point function at any genus $g$. 
This is possible due to the additional expansion in the genus which separates some part of the resummability issue apparent in QFT from renormalization. 

We will focus on the explicit result of the planar 2- and 4-point function since these are the only ones directly affected by renormalization. The first step towards exact solutions are Dyson-Schwinger equations (DSE), which are in general equations between different correlation functions. The DSEs for the $\phi^4$ matrix model decouple due to the genus expansion for large $N$ and due to Ward identities. 
The closed DSE for the planar 2-point function was derived in \cite{Grosse:2009pa} (and generalised in \cite{Hock:2020rje,Branahl:2020yru}) based on the Ward identity derived in \cite{Disertori:2006nq} (and generalised in \cite{Hock:2018wup}).

We recall for pedagogical reasons the DSE for the planar 2-point function in matrix basis (before applying the Ward identity), which is essentially achieved by classical methods of integration by parts:
\begin{align}
G^{(0)}_{|ab|}&=\frac{1}{E_a+E_b}- 
\frac{\lambda}{E_a+E_b}
\Big\{ 
\frac{1}{N^2}
\sum_{k,l} r_kr_l G^{(0)}_{|aklb|}
+ \frac{1}{N}
\sum_{k} r_k (G^{(0)}_{|ab|}G^{(0)}_{|ak|}+G^{(0)}_{|ab|} 
G^{(0)}_{|kb|})\Big\}.
\label{DS2}
\end{align}
The DSE has the following graphical interpretation\\
\vspace*{2ex}
\begingroup%
  \makeatletter%
  \providecommand\color[2][]{%
    \errmessage{(Inkscape) Color is used for the text in Inkscape, but the package 'color.sty' is not loaded}%
    \renewcommand\color[2][]{}%
  }%
  \providecommand\transparent[1]{%
    \errmessage{(Inkscape) Transparency is used (non-zero) for the text in Inkscape, but the package 'transparent.sty' is not loaded}%
    \renewcommand\transparent[1]{}%
  }%
  \providecommand\rotatebox[2]{#2}%
  \newcommand*\fsize{\dimexpr\f@size pt\relax}%
  \newcommand*\lineheight[1]{\fontsize{\fsize}{#1\fsize}\selectfont}%
  \ifx\svgwidth\undefined%
    \setlength{\unitlength}{316.50005799bp}%
    \ifx\svgscale\undefined%
      \relax%
    \else%
      \setlength{\unitlength}{\unitlength * \real{\svgscale}}%
    \fi%
  \else%
    \setlength{\unitlength}{\svgwidth}%
  \fi%
  \global\let\svgwidth\undefined%
  \global\let\svgscale\undefined%
  \makeatother%
  \begin{picture}(1,0.15553999)%
    \lineheight{1}%
    \setlength\tabcolsep{0pt}%
    \put(0,0){\includegraphics[width=\unitlength,page=1]{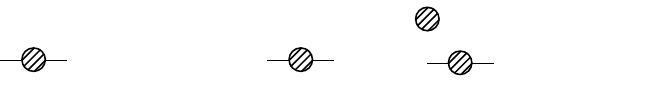}}%
    \put(0.11587144,0.05550796){\color[rgb]{0,0,0}\makebox(0,0)[lt]{\lineheight{1.25}\smash{\begin{tabular}[t]{l}$=$\end{tabular}}}}%
    \put(0.00632205,0.08352454){\color[rgb]{0,0,0}\makebox(0,0)[lt]{\lineheight{1.25}\smash{\begin{tabular}[t]{l}$a$\end{tabular}}}}%
    \put(0.00712049,0.01393101){\color[rgb]{0,0,0}\makebox(0,0)[lt]{\lineheight{1.25}\smash{\begin{tabular}[t]{l}$b$\end{tabular}}}}%
    \put(0.2925224,0.05629161){\color[rgb]{0,0,0}\makebox(0,0)[lt]{\lineheight{1.25}\smash{\begin{tabular}[t]{l}$+$\end{tabular}}}}%
    \put(0.52823479,0.05928833){\color[rgb]{0,0,0}\makebox(0,0)[lt]{\lineheight{1.25}\smash{\begin{tabular}[t]{l}$+$\end{tabular}}}}%
    \put(0.79814903,0.05967797){\color[rgb]{0,0,0}\makebox(0,0)[lt]{\lineheight{1.25}\smash{\begin{tabular}[t]{l}$+$\end{tabular}}}}%
    \put(0,0){\includegraphics[width=\unitlength,page=2]{2PDSE.pdf}}%
  \end{picture}%
\endgroup%
\\
\noindent
where 
\begingroup%
  \makeatletter%
  \providecommand\color[2][]{%
    \errmessage{(Inkscape) Color is used for the text in Inkscape, but the package 'color.sty' is not loaded}%
    \renewcommand\color[2][]{}%
  }%
  \providecommand\transparent[1]{%
    \errmessage{(Inkscape) Transparency is used (non-zero) for the text in Inkscape, but the package 'transparent.sty' is not loaded}%
    \renewcommand\transparent[1]{}%
  }%
  \providecommand\rotatebox[2]{#2}%
  \newcommand*\fsize{\dimexpr\f@size pt\relax}%
  \newcommand*\lineheight[1]{\fontsize{\fsize}{#1\fsize}\selectfont}%
  \ifx\svgwidth\undefined%
    \setlength{\unitlength}{32.25000189bp}%
    \ifx\svgscale\undefined%
      \relax%
    \else%
      \setlength{\unitlength}{\unitlength * \real{\svgscale}}%
    \fi%
  \else%
    \setlength{\unitlength}{\svgwidth}%
  \fi%
  \global\let\svgwidth\undefined%
  \global\let\svgscale\undefined%
  \makeatother%
  \begin{picture}(1,0.37212954)%
    \lineheight{1}%
    \setlength\tabcolsep{0pt}%
    \put(0,0){\includegraphics[width=\unitlength,page=1]{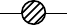}}%
  \end{picture}%
\endgroup%
 corresponds to the planar 2-point function, the straight line to the free propagator and 
\begingroup%
  \makeatletter%
  \providecommand\color[2][]{%
    \errmessage{(Inkscape) Color is used for the text in Inkscape, but the package 'color.sty' is not loaded}%
    \renewcommand\color[2][]{}%
  }%
  \providecommand\transparent[1]{%
    \errmessage{(Inkscape) Transparency is used (non-zero) for the text in Inkscape, but the package 'transparent.sty' is not loaded}%
    \renewcommand\transparent[1]{}%
  }%
  \providecommand\rotatebox[2]{#2}%
  \newcommand*\fsize{\dimexpr\f@size pt\relax}%
  \newcommand*\lineheight[1]{\fontsize{\fsize}{#1\fsize}\selectfont}%
  \ifx\svgwidth\undefined%
    \setlength{\unitlength}{21.75001228bp}%
    \ifx\svgscale\undefined%
      \relax%
    \else%
      \setlength{\unitlength}{\unitlength * \real{\svgscale}}%
    \fi%
  \else%
    \setlength{\unitlength}{\svgwidth}%
  \fi%
  \global\let\svgwidth\undefined%
  \global\let\svgscale\undefined%
  \makeatother%
  \begin{picture}(1,0.99999948)%
    \lineheight{1}%
    \setlength\tabcolsep{0pt}%
    \put(0,0){\includegraphics[width=\unitlength,page=1]{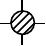}}%
  \end{picture}%
\endgroup%
 to the planar 4-point function. A closed face in the graphical formula corresponds to a summation in \eqref{DS2} and the first separate vertex to the factor $-\lambda$.

The Ward identity is another way to relate correlation functions due to symmetry transformations under which the partition function is invariant, but neither the correlation function nor the action. From the Ward identity \vtwo{coming from unitary transformation of $M$ in \eqref{measure}}, one deduces (see \cite{Grosse:2012uv,Hock:2020rje,schuermann_wulkenhaar_2022,deJong:2019oez} for details and more general formulas)
\begin{align}\label{DSE2W}
    \frac{1}{N} \sum_{k}r_k G^{(0)}_{|akpb|} 
+G^{(0)}_{|ab|} G^{(0)}_{|pb|} 
&=-\frac{G^{(0)}_{|pb|} - G^{(0)}_{|ab|}}{E_p-E_a}
 \;,\\\label{W4P}
 G^{(0)}_{|abcd|} &=-\lambda\frac{G^{(0)}_{|ab|}G^{(0)}_{|cd|}-G^{(0)}_{|ad|}G^{(0)}_{|cb|}}{(E_a-E_c)(E_b-E_d)}.
\end{align}

Note that these two identities lead to two further relations between the planar 4-point function and the planar 2-point function. Even more interestingly, the planar 4-point function is entirely determined by an algebraic formula form planar 2-point functions in \eqref{W4P}. 

Inserting the first relation \eqref{DSE2W} obtained from the Ward identity into the DSE \eqref{DS2}, we derive the closed DSE for the planar 2-point function
\begin{align}
G^{(0)}_{|ab|}&=\frac{1}{E_a+E_b}- 
\frac{\lambda}{E_a+E_b}
\Big\{ 
\frac{1}{N}
\sum_{k}r_k \Big(-\frac{G^{(0)}_{|pb|} - G^{(0)}_{|ab|}}{E_p-E_a}+
G^{(0)}_{|ab|}G^{(0)}_{|ak|}\Big)\Big\}.
\label{cDS2}
\end{align}

For finite $N$, the solution of the 2-point function and more fundamental correlation functions was derived in \cite{Hock:2021tbl} for genus $g=0$ and in \cite{Hock:2023nki} for genus $g=1$. The algebraic structure behind this is \textit{blobbed topological recursion} \cite{Borot:2015hna} a generalization of topological recursion \cite{Eynard:2007kz}. However, these theories do not play a role from renormalization perspective.

\subsection{Renormalization \& exact solution in 4 dimensions}
More interestingly, going to a continuum limit corresponding to the 4-dimensional Moyal space, the correlation functions become well-defined after appropriate renormalization. In other words, renormalization is compatible, in a certain sense, with the exact solution already established at finite $N$ at the level of a formal expansion.

Now, we will focus on the 4-dimensional Moyal space. The renormalized energy eigenvalues for the Laplacian $E$ are given by the sequence 
\[ 
(E_1,...,E_N)=(e_1,e_2,e_2,e_3...) \, ,
\quad\textrm{ where } 
e_k=Z \left(\frac{k}{N}+\frac{\mu^2_{bare}}{2} \right) \, ,
\]
and each $e_k$ has multiplicity $r_k=k$, i.e. $e_1$ appears once, $e_2$ twice etc. The constant $Z$ is the field renormalization constant and $\mu_{bare}$ the mass renormalization constant. The largest eigenvalue $e_d$ fixes the cut-off $\Lambda^2$. \vtwo{First a limit $N,d\to \infty$ with constant $\Lambda^2:=\lim_{d,N\to \infty}e_d=\lim_{d,N\to \infty}Z \left(\frac{d}{N}+\frac{\mu^2_{bare}}{2} \right)$ is performed. In this limit the size of the matrices increases as well as the number of eigenvalues, while the largest eigenvalue $e_{d\to \infty}$ stays constant.} Renormalization is needed to perform an additional well-defined limit, when $\Lambda\to \infty$. 
Since the eigenvalue $e_k$ itself increases linearly in $k$, as well as its multiplicity, we find from \eqref{specD} a 4-dimensional theory in the large $N$ limit
\begin{align*}
    &\lim_{N\to \infty}\frac{1}{N^{D/2}}\sum_{k=1}^N E_k^{-D/2-\epsilon}=\lim_{N\to \infty} \frac{1}{N^{D/2}}\sum_{k=1}^N e_k^{-D/2-\epsilon}\cdot k\sim \sum_{k}k^{-D/2-\epsilon+1}<\infty\\&
    \Rightarrow\qquad D=4.
\end{align*}

A canonical analytic continuation of the 2-point function $G^{(0)}(z,w)$ is defined through Dyson-Schwinger equation to coincide at the points $e_a$, i.e. $G^{(0)}(e_a,e_b)=G^{(0)}_{|ab|}$. In this setting the Dyson-Schwinger equation for the planar 2-point function together with all renormalization constants reads \cite{Grosse:2012uv}:
\begin{align}
&\bigg[z+w+\mu^2_{bare}+\lambda\dashint_0^{\Lambda^2}\d t\,\varrho_0(t)\bigg(ZG^{(0)}(z,t)+\frac{1}{t-z}\bigg)\bigg]Z G^{(0)}(z,w)
\label{eq:GcomplexCont}
\\
&=1+\lambda\dashint_0^{\Lambda^2}
dt\,\varrho_0(t)
\Big(
 \frac{ZG^{(0)}(t,w)  }{t-z}\Big) ,
\nonumber
\end{align}
where $\dashint$ is the Cauchy principal value, the measure
$\varrho_0(t):=\frac{1}{N}\sum_{k=1}^d k\,\delta(t-e_k)$ and we have neglected all $\frac{1}{N^2}$ contributions. 

The large $N$-limit can be seen as a continuum limit in which the eigenvalues $e_k$ converge to the interval $[0,\Lambda^2]$ with fixed $\Lambda^2$ \vtwo{coming from fixing the largest eigenvalue $e_d$ to $\Lambda^2$, but sending simultaneously $d,N\to \infty$}. The measure $\varrho_0(t)$ converges to $ \varrho_0(t)=t$ and the renormalization constants $Z$ and 
$\mu_{bare}$ obtain a dependence on the cut-off $\Lambda$ such that the UV-limit $\Lambda\to \infty$ is well-defined on the level of correlation functions.

The exact solution of the nonlinear DSE \eqref{eq:GcomplexCont} was derived in \cite{Grosse:2019jnv} based on an idea of the 2-dimensional solution \cite{Panzer:2018tvy}. This was achieved by essentially using advanced complex analysis with singular integral equations theory together with Lagrange-B\"urmann formula. We will state the result for the 4-dimensional case first for a more general measure $\varrho_0(t)\sim t$. Define implicitly for a given $\varrho_0(t)$ the two functions $\varrho_\lambda(t)$ and $R(t)$ via the system of equations
\begin{align}\label{R}
    R(z)=&z-\lambda z^2\int_0^{\infty}\frac{dt\, \varrho_\lambda(t)}{(\mu^2+t)^2(\mu^2+t+z)},\\\nonumber
    \varrho_\lambda(t)=&\varrho_0(R(t)),
\end{align}
where $\mu^2$ is the renormalized mass. The function $\varrho_\lambda(t)$ takes the role of a \textit{ $\lambda$-deformed measure} with $\lim_{\lambda\to 0}\varrho_\lambda(t)=\varrho_0(t)$.

\begin{theorem}[\cite{Grosse:2019jnv}]\label{thm:hypergeom}
     The renormalized 2-point function of the $\phi^4$ matricial QFT-model in 4 dimensions is given by
     \begin{align}\label{2Pasy}
  G^{(0)}(a,b) \!=\!\frac{(\mu^2{+}a{+}b)\exp\Bigg\{\!\!
    \mbox{\small$\displaystyle\frac{1}{2\pi \mathrm{i}} \int_{-\infty}^\infty \!\!\!\!\!dt \log\bigg(\frac{a-R(-\frac{\mu^2}{2}{-}\mathrm{i} t)}{a-(-\frac{\mu^2}{2}{-}\mathrm{i} t)}\bigg)\frac{d}{dt}\log\bigg(\frac{b-R(-\frac{\mu^2}{2}{+}\mathrm{i} t)}{
   b-(-\frac{\mu^2}{2}{+}\mathrm{i} t)}\bigg)$}
   \!\!\Bigg\}}{(\mu^2+b+R^{-1}(a))(\mu^2+a+R^{-1}(b))},
	\end{align}
 where $R(t)$ is implicitly defined via \eqref{R}.
\end{theorem}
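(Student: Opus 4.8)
The plan is to solve the closed nonlinear Dyson--Schwinger equation \eqref{eq:GcomplexCont} directly, treating it as a scalar Riemann--Hilbert problem in one of the two spectral variables. First I would fix $w=b$ and regard \eqref{eq:GcomplexCont} as an equation for the analytic continuation $z\mapsto G^{(0)}(z,b)$, which carries a branch cut along the support $[0,\Lambda^2]$ of the measure $\varrho_0$. The central difficulty is the nonlinearity: the self-energy term $\lambda\dashint_0^{\Lambda^2}\d t\,\varrho_0(t)\,ZG^{(0)}(z,t)$ multiplies $ZG^{(0)}(z,w)$, so the equation is not linear in the unknown. The key idea I would use, following the strategy of \cite{Panzer:2018tvy} in two dimensions, is that this nonlinear term can be absorbed into a deformation of the spectral variable. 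Concretely, I would define $R(z)$ so that the bracket on the left-hand side of \eqref{eq:GcomplexCont} collapses to a clean affine expression in the deformed variable; this forces precisely the self-consistency system \eqref{R} together with the deformed measure $\varrho_\lambda=\varrho_0\circ R$.

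Before exploiting this, I would establish that \eqref{R} actually determines an analytic function. For $\lambda$ in a neighbourhood of $0$ the map $R\mapsto z-\lambda z^2\int_0^\infty \d t\,\varrho_\lambda(t)/[(\mu^2+t)^2(\mu^2+t+z)]$ is a contraction on an appropriate space of functions with $R(z)=z+O(\lambda)$, so a fixed-point argument gives existence, uniqueness and local invertibility of $R$; this is what makes $R^{-1}$ in \eqref{2Pasy} meaningful. I would also track how the bare constants $\mu^2_{bare}$ and $Z$ are tuned so that the $\Lambda\to\infty$ limit of the subtracted equation stays finite, i.e.\ that renormalization is compatible with the deformation.

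With the nonlinearity linearized, what remains is a dominant singular (Carleman-type) integral equation for $z\mapsto G^{(0)}(z,b)$ whose coefficients are boundary values of $R$ across the cut. Taking boundary values $G^{(0)}(x\pm\mathrm{i}0,b)$ and forming their ratio turns the equation into a multiplicative jump condition; the Plemelj--Sokhotski formula then solves the homogeneous problem as the exponential of a Cauchy integral of the logarithm of that jump. After rotating the integration contour onto the symmetric line $-\tfrac{\mu^2}{2}\pm\mathrm{i}t$, this is exactly the double-logarithm exponential appearing in \eqref{2Pasy}. The residual freedom, an entire/rational prefactor left undetermined by the homogeneous Riemann--Hilbert problem, I would fix by imposing the symmetry $G^{(0)}(a,b)=G^{(0)}(b,a)$, the correct pole and zero structure (reduction to the free propagator $1/(\mu^2+a+b)$ as $\lambda\to 0$, together with the zeros at $\mu^2+b+R^{-1}(a)=0$), and the large-$|z|$ decay; this pins down the rational factor $(\mu^2+a+b)/[(\mu^2+b+R^{-1}(a))(\mu^2+a+R^{-1}(b))]$.

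Finally, I would verify the candidate \eqref{2Pasy} by inverting $R$ with the Lagrange--B\"urmann formula, expanding $R^{-1}$ and the exponential as power series in $\lambda$, and checking order by order that \eqref{eq:GcomplexCont} together with the Ward identity \eqref{DSE2W} is satisfied; the same expansion should reproduce the perturbative ribbon-graph coefficients. I expect the main obstacle to be the linearization step: showing rigorously that the nonlinear self-energy can be traded for the deformation $R$, and that the resulting Carleman equation is well posed on the cut uniformly as $\Lambda\to\infty$, is the delicate heart of the argument, whereas the subsequent Plemelj solution and the prefactor matching are comparatively mechanical.
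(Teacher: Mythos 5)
This theorem is not proved in the paper at all: it is imported from \cite{Grosse:2019jnv}, and the paper only records that the proof there rests on ``advanced complex analysis with singular integral equations theory together with Lagrange--B\"urmann formula,'' following the two-dimensional idea of \cite{Panzer:2018tvy}. Your sketch --- trading the nonlinear self-energy in \eqref{eq:GcomplexCont} for the deformation $R$ of \eqref{R}, solving the resulting Carleman/Riemann--Hilbert problem by Plemelj to obtain the exponentiated Cauchy integral, fixing the rational prefactor via symmetry and pole structure, and verifying with Lagrange--B\"urmann --- reconstructs essentially that same strategy, so it matches the approach of the cited proof rather than offering a different route.
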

We emphasise that this solution is exact for any H\"older continuous measure $\varrho_0(t)$ and simplifies even further on the 4-dimensional Moyal space, where $\varrho_0(t)=t$. This implies $\varrho_\lambda(t)=R(t)$ and thus
\begin{align}\label{linEQ}
    R(z)=&z-\lambda z^2\int_0^{\infty}\frac{dt\,  R(t)}{(\mu^2+t)^2(\mu^2+t+z)}.
\end{align}
The solution has a closed expression in terms of a Gaussian hypergeometric function:
\begin{theorem}[\cite{Grosse:2019qps}]
The linear integral equation \eqref{linEQ} is solved by
\begin{align}
		R(z)=z  \;_2F_1\Big(\genfrac{}{}{0pt}{}{
			\alpha_\lambda,\;1-\alpha_\lambda}{2}\Big|-\frac{z}{\mu^2}\Big),\quad
		\text{where } ~
		\alpha_\lambda:=\begin{cases}
                  \frac{\arcsin(\lambda\pi)}{\pi} & \text{for }|\lambda|\leq
                  \frac{1}{\pi} \;,\\ \frac{1}{2}
                  +\mathrm{i} \frac{\mathrm{arcosh}(\lambda\pi)}{\pi}
                  & \text{for } \lambda \geq \frac{1}{\pi}\;.
                \end{cases}
              \end{align}
\end{theorem}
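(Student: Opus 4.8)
The plan is to verify the claimed solution by substituting the ansatz into the linear integral equation \eqref{linEQ}, reducing the verification to a single closed-form integral evaluation from which the transcendental condition fixing $\alpha_\lambda$ drops out. First I would record the boundary behaviour that selects the hypergeometric branch. Writing $x=-z/\mu^2$ and $y(x)={}_2F_1(\alpha,1-\alpha;2;x)$, the ansatz is $R(z)=z\,y(-z/\mu^2)$; since $y(0)=1$ this gives $R(z)=z+O(z^2)$ as $z\to 0$, which is exactly the behaviour forced by \eqref{linEQ} because its integral term carries the explicit prefactor $z^2$. The second solution of the hypergeometric equation with $c=2$ behaves like $x^{1-c}=x^{-1}$ near $x=0$, i.e.\ it would give $R(z)=O(1)$; hence regularity together with $R(z)\sim z$ singles out $z\,{}_2F_1(\alpha,1-\alpha;2;-z/\mu^2)$ uniquely, and it remains only to fix the parameter $\alpha=\alpha_\lambda$.

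The computational heart is to insert the ansatz into $I(z):=\int_0^\infty \frac{R(t)\,\d t}{(\mu^2+t)^2(\mu^2+t+z)}$ and evaluate it in closed form. I would use the Euler integral representation
\begin{equation}
  {}_2F_1(\alpha,1-\alpha;2;x) = \frac{\sin(\pi\alpha)}{\pi\alpha}\int_0^1 s^{-\alpha}(1-s)^{\alpha}(1-xs)^{-\alpha}\,\d s ,
\end{equation}
in which the Beta-function prefactor $\Gamma(2)/[\Gamma(1-\alpha)\Gamma(1+\alpha)]$ equals $\sin(\pi\alpha)/(\pi\alpha)$. Substituting this with $x=-t/\mu^2$ and exchanging the order of integration, the inner $t$-integral is a power-weighted rational integral that evaluates in closed form to hypergeometric functions of $z$. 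After simplification, $-\lambda z^2 I(z)$ reproduces $R(z)-z$ as an identity in $z$ precisely when the scalar normalization matches, and that matching is the relation
\begin{equation}
  \sin(\pi\alpha_\lambda) = \lambda\pi .
\end{equation}
Solving this with $\Re\alpha_\lambda\in[0,\tfrac12]$ produces the two stated branches at once: the real root $\alpha_\lambda=\arcsin(\lambda\pi)/\pi$ for $|\lambda|\le 1/\pi$, and its analytic continuation $\alpha_\lambda=\tfrac12+\mathrm{i}\,\mathrm{arcosh}(\lambda\pi)/\pi$ for $\lambda>1/\pi$, using $\sin(\tfrac{\pi}{2}+\mathrm{i}\theta)=\cosh\theta$ so that $\sin(\pi\alpha_\lambda)=\cosh(\mathrm{arcosh}(\lambda\pi))=\lambda\pi$ again. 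In practice it already suffices to match the coefficient of $z^2$, i.e.\ $\tfrac{\alpha_\lambda(1-\alpha_\lambda)}{2\mu^2}=\lambda I(0)$ with $I(0)=\int_0^\infty R(t)(\mu^2+t)^{-3}\,\d t$; evaluating $I(0)$ in closed form already forces $\sin(\pi\alpha_\lambda)=\lambda\pi$, and the full identity then follows from the structural ODE argument below.

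As a cross-check and a more structural route, I would convert \eqref{linEQ} directly into an ordinary differential equation: recognizing $I(z)$ as a Stieltjes transform in $w=\mu^2+z$, the kernel identity $\partial_z(\mu^2+t+z)^{-1}=-(\mu^2+t+z)^{-2}$ together with partial fractions lets one eliminate the $t$-integrals in favour of $R$ itself, yielding a second-order linear ODE for $R(z)$; after the substitution $x=-z/\mu^2$ this is the hypergeometric equation with $a+b+1=2$, $ab=\alpha_\lambda(1-\alpha_\lambda)$ and $c=2$, confirming the hypergeometric \emph{type} of the solution. The expected main obstacle is twofold. First, the individual pieces produced by partial-fracturing the kernel (equivalently, the separate moments of the Stieltjes transform) diverge at $t\to\infty$, so the elimination and the closed-form evaluation must be carried out with a cutoff $\Lambda$ and the limit $\Lambda\to\infty$ controlled, working throughout with the manifestly convergent combination. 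Second, and more essential, the ODE alone fixes only the combination $\alpha_\lambda(1-\alpha_\lambda)$, whereas the genuinely transcendental value $\sin(\pi\alpha_\lambda)=\lambda\pi$ comes from the normalization encoded in the closed-form integral; pinning that down via the exact evaluation of the Stieltjes integral of ${}_2F_1$, and justifying the interchange of integrations and the analyticity needed to pass between the two regimes of $\lambda$, is the step requiring the most care.
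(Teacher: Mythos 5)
The paper itself contains no proof of this theorem: it is imported verbatim from \cite{Grosse:2019qps}, so your proposal has to be judged as a self-contained verification rather than against an argument in the text. The skeleton of your verification is sound. Regularity at $z=0$ does select the branch (the $c=2$ hypergeometric equation has indices $0$ and $-1$ at the origin, and \eqref{linEQ} forces $R(z)=z+O(z^2)$); the normalization $\sin(\pi\alpha_\lambda)=\lambda\pi$ is exactly equivalent to the two stated branches; and your claim that matching the $z^2$ coefficient already forces this relation is correct, since your Euler representation plus Fubini plus Gauss summation gives in closed form
\begin{equation}
I(0)=\int_0^\infty\frac{t\,{}_2F_1\bigl(\alpha,1-\alpha;2;-\tfrac{t}{\mu^2}\bigr)}{(\mu^2+t)^3}\,\d t
=\frac{\Gamma(1+\alpha)\Gamma(2-\alpha)}{2\mu^2}
=\frac{\alpha(1-\alpha)}{2\mu^2}\,\frac{\pi}{\sin(\pi\alpha)}\,,
\end{equation}
so that $\lambda I(0)=\alpha(1-\alpha)/(2\mu^2)$ holds iff $\sin(\pi\alpha)=\lambda\pi$.

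The genuine gap is that nothing in your proposal establishes the identity for \emph{all} $z$. Your primary route (Euler representation and Fubini at general $z$) is asserted, not executed: the inner $t$-integral produces ${}_2F_1$'s in both $s$ and $z$, and the remaining $s$-integral is a nontrivial ${}_3F_2$-type evaluation --- this \emph{is} the content of the theorem. More importantly, the fallback ``structural ODE argument'', on which your $z^2$-shortcut leans, fails as described: differentiating the kernel and partial-fractioning does not close into a second-order ODE for $R$, because every $z$-derivative of the Stieltjes integral produces a new, independent moment $\int_0^\infty R(t)(\mu^2+t+z)^{-k}\,\d t$, and partial fractions splits the integral into individually divergent pieces (e.g.\ $\int_0^\infty R(t)(\mu^2+t)^{-1}\,\d t$ diverges, since $R(t)\sim t^{1-\alpha}$ with $\Re\alpha\le\tfrac12$). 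The statement you actually need --- that the Stieltjes transform of a solution of a polynomial-coefficient ODE satisfies an \emph{inhomogeneous} transformed ODE whose inhomogeneity is built from convergent moment combinations --- is true, but proving it requires exactly the cutoff bookkeeping and closed-form moment evaluations you were trying to bypass. As stated, the shortcut is circular: matching one Taylor coefficient suffices only once both sides are known to solve the same ODE, and that is the unproven part.

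A clean way to close the gap is a dispersion argument, which is also closer in spirit to the solution theory of the cited reference. Both sides of \eqref{linEQ} with the ansatz inserted are holomorphic on $\mathbb{C}\setminus(-\infty,-\mu^2]$. By Sokhotski--Plemelj, the jump of the right-hand side across the cut at $z=-\mu^2-x$, $x>0$, equals $2\pi\mathrm{i}\,\lambda R(x)$; the standard discontinuity formula for ${}_2F_1$ (DLMF 15.2.3) combined with the Euler transformation ${}_2F_1(2-\alpha,1+\alpha;2;\xi)=(1-\xi)^{-1}{}_2F_1(\alpha,1-\alpha;2;\xi)$ gives for the left-hand side the jump $2\mathrm{i}\sin(\pi\alpha)\,R(x)$ (note that $w=-z/\mu^2$ exchanges upper and lower half-planes, which fixes the sign). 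The jumps agree at every point of the cut precisely when $\sin(\pi\alpha)=\lambda\pi$; the difference of the two sides is then entire, grows slower than $|z|^{3/2}$, and vanishes to second order at $z=0$, hence is identically zero by Liouville. This supplies the missing all-orders step and yields the parameter condition once, pointwise on the cut, instead of order by order.
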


 An important fact is that the linear dependence of $\lambda$ within
the integral equation \eqref{R} is collected into a highly nonlinear
dependence given by the $\arcsin$-function into the coefficients of
the hypergeometric function. Thus, the convergence radius turns out to be $|\lambda|<\frac{1}{\pi}$.
The two functions $\varrho_\lambda=R$ and $\varrho_0$ have a different asymptotic behaviour. A hypergeometric function behaves like
\begin{align}
	\,_2F_1\Big(\genfrac{}{}{0pt}{}{
          a,\;1-a}{2}\Big|-x\Big) \stackrel{x\to \infty}{\sim}
        \frac{1}{x^a}\quad \text{for } |a|<\frac{1}{2}.
\end{align} 

Together with the definition of the spectral dimension
\eqref{specD}, we conclude:
\begin{corollary}[\cite{Grosse:2019qps}]\label{Cor:eff}
  For $|\lambda|<\frac{1}{\pi}$, the $\lambda$-deformed measure
  $\varrho_\lambda=R$ of four-dimensional Moyal space has effectively the
  spectral dimension $D_\lambda=4-2\frac{\arcsin(\lambda \pi)}{\pi}$.
\end{corollary}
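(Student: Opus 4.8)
The plan is to read the spectral dimension $D_\lambda$ off the large-argument growth of the deformed measure $\varrho_\lambda=R$, exactly as $D=4$ was read off from the linear growth $\varrho_0(t)=t$ of the undeformed measure through the Weyl criterion \eqref{specD}. The computation following \eqref{specD} identifies the spectral dimension of a density behaving like $t^{D/2-1}$ at large $t$ with $D$, so that the undeformed $\varrho_0(t)=t=t^{4/2-1}$ yields $D=4$. It therefore suffices to determine the growth exponent of $R$ and match it against $t^{D_\lambda/2-1}$.

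First I would extract the large-$z$ behaviour of $R$. From the closed form $R(z)=z\,{}_2F_1\Big(\genfrac{}{}{0pt}{}{\alpha_\lambda,\,1-\alpha_\lambda}{2}\Big|-\tfrac{z}{\mu^2}\Big)$ together with the stated hypergeometric asymptotic ${}_2F_1\Big(\genfrac{}{}{0pt}{}{a,\,1-a}{2}\Big|-x\Big)\sim x^{-a}$, valid for $|a|<\tfrac12$, one obtains
\[
R(z)\ \stackrel{z\to\infty}{\sim}\ \mu^{2\alpha_\lambda}\,z^{\,1-\alpha_\lambda}.
\]
The hypothesis $|a|<\tfrac12$ is met precisely because $|\lambda|<\tfrac1\pi$ forces $\lambda\pi\in(-1,1)$ and hence $\alpha_\lambda=\tfrac{\arcsin(\lambda\pi)}{\pi}\in(-\tfrac12,\tfrac12)$; this is the single point at which the restriction $|\lambda|<\tfrac1\pi$ enters.

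Matching $\varrho_\lambda(t)=R(t)\sim t^{\,1-\alpha_\lambda}$ against the reference growth $t^{D_\lambda/2-1}$ then forces $D_\lambda/2-1=1-\alpha_\lambda$, i.e.
\[
D_\lambda=4-2\alpha_\lambda=4-\frac{2\arcsin(\lambda\pi)}{\pi}.
\]
The asymptotic estimate of $R$ is routine once the hypergeometric expansion is granted, and the exponent-matching is immediate. The step I expect to be the genuine obstacle is making the first, heuristic, move rigorous within the discretely-formulated definition \eqref{specD}: one must exhibit effective eigenvalues carrying the continuum density $\varrho_\lambda$ and control the $N\to\infty$ limit, including the principal-value structure inherited from the defining system \eqref{R}, so that the power-law growth of $\varrho_\lambda$ indeed pins down the spectral dimension. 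I would import this identification from \cite{Grosse:2019qps}.
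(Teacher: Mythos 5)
Your proposal is correct and follows essentially the same route as the paper: the paper also deduces the corollary by combining the stated asymptotic $\,_2F_1\big(\genfrac{}{}{0pt}{}{a,\;1-a}{2}\big|-x\big)\sim x^{-a}$ for $|a|<\tfrac12$ (which is where $|\lambda|<\tfrac1\pi$ enters, via $\alpha_\lambda=\tfrac{\arcsin(\lambda\pi)}{\pi}\in(-\tfrac12,\tfrac12)$) with the Weyl definition \eqref{specD}, so that $\varrho_\lambda(t)=R(t)\sim t^{1-\alpha_\lambda}$ matches the reference density $t^{D_\lambda/2-1}$ and yields $D_\lambda=4-2\alpha_\lambda$. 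Your explicit exponent matching and the caveat about rigorously passing between the discrete eigenvalue sum and the continuum density (imported from \cite{Grosse:2019qps}) are exactly the content the paper leaves implicit.
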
\noindent
The 4-dimensional matricial $\phi^4$ model admits a dimensional
drop to an effective spectral dimension related to an effective
spectral measure. This is revealed by resummation of the planar connected 2-point function.

From the QFT perspective, this dimension drop is the
most important result. 
It means \emph{that the $\phi^{4}$ theory on 4D Moyal space is non-trivial} after degeneration into a genus expansion. 
In other words, it is well-defined at any (energy) scale $\Lambda^2\to \infty$. 
So far, the relation of this dimensional drop to the anomalous dimension of the theory has not been studied%
\footnote{We are grateful to David Broadhust and Michael Borinsky for raising this equation on the conference "From perturbative to non-perturbative QFT" which took place in Münster 2023}.
However, here we find that it is essentially the same:
\begin{corollary}\label{cor:anomalous dimension}
    The anomalous dimension $\gamma$ of the field renormalization constant $Z$ is half of the effective dimension drop of Corollary \ref{Cor:eff}, i.e.
    \begin{align}\label{eq:anomalous dimension}
        \gamma=\lim_{\Lambda^2\to \infty} -\frac{\partial \log Z}{\partial \log \Lambda^2}=-\frac{\arctan (\lambda \pi)}{\pi }.
    \end{align}
    \begin{proof}
        The field renormalization constant $Z$ is given in \cite{Grosse:2019jnv} as
        \begin{align*}
            Z=C_r e^{\mathcal{H}^{\Lambda}_r[\tau_r(\bullet)]},
        \end{align*}
        where $C_r$ is some finite constant, $\mathcal{H}^{\Lambda}_r[f(\bullet)]$ is the Hilbert transform $\mathcal{H}^{\Lambda}_r[f(\bullet)]=\frac{1}{\pi }\int_{[0,\Lambda^2]\setminus [r-\epsilon,r+\epsilon]}\frac{dt\, f(t)}{t-r}$ 
        and $\tau_r(t)=\arctan\frac{\lambda\pi t}{\mathrm{Re}(r+I(t+i \epsilon))}$ an auxilliary function. 
        Furthermore, the function $I(x)$ is defined through $R(z)$ of Theorem \ref{thm:hypergeom} by $I(x)=-R(-\mu^2-R^{-1}(x))$. With $I(x)\overset{x\to \infty}{\sim} x +\mathcal{O}(1)$, which was also shown in \cite{Grosse:2019jnv}, we finally find
        \begin{align*}
            \gamma=&\lim_{\Lambda^2\to \infty} -\frac{\partial \log Z}{\partial \log \Lambda^2}=\lim_{\Lambda^2\to \infty}- \Lambda^2\frac{\partial \mathcal{H}^{\Lambda}_r[\tau_r(\bullet)]}{\partial \Lambda^2}\\
            =&\lim_{\Lambda^2\to \infty} -\frac{\Lambda^2}{\pi}\frac{\tau_r(\Lambda^2)}{\Lambda^2-r}=\lim_{\Lambda^2\to \infty} -\frac{1}{\pi}\arctan\frac{\lambda\pi \Lambda^2}{\mathrm{Re}(r+I(\Lambda^2+i \epsilon))}\\
            =&-\frac{1}{\pi}\arctan \lambda \pi.
        \end{align*}
    \end{proof}
\end{corollary}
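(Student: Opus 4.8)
The plan is to start from the closed-form expression for the field renormalization constant derived in \cite{Grosse:2019jnv}, namely $Z = C_r\, e^{\mathcal{H}^{\Lambda}_r[\tau_r(\bullet)]}$, and to compute the logarithmic derivative $-\partial \log Z/\partial \log \Lambda^2$ directly. Since $C_r$ is finite and independent of the cutoff, it drops out upon differentiation, so that only the Hilbert-transform exponent contributes. Writing $\partial/\partial\log\Lambda^2 = \Lambda^2\,\partial/\partial\Lambda^2$, the anomalous dimension reduces to $\gamma = -\lim_{\Lambda^2\to\infty} \Lambda^2\, \partial_{\Lambda^2} \mathcal{H}^{\Lambda}_r[\tau_r(\bullet)]$, and the whole problem becomes the evaluation of this single limit.

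The key observation is that the cutoff $\Lambda^2$ enters the Hilbert transform $\mathcal{H}^{\Lambda}_r[f(\bullet)] = \frac{1}{\pi}\int_{[0,\Lambda^2]\setminus[r-\epsilon,r+\epsilon]} \frac{f(t)\,dt}{t-r}$ \emph{only} through the upper endpoint of integration. First I would differentiate under the integral sign, invoking the fundamental theorem of calculus to isolate the boundary contribution $\partial_{\Lambda^2}\mathcal{H}^{\Lambda}_r[\tau_r(\bullet)] = \frac{1}{\pi}\frac{\tau_r(\Lambda^2)}{\Lambda^2 - r}$. Multiplying by $-\Lambda^2$ and sending $\Lambda^2\to\infty$, the prefactor $\Lambda^2/(\Lambda^2 - r)\to 1$, so the computation collapses to determining $\lim_{\Lambda^2\to\infty}\tau_r(\Lambda^2)$.

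The remaining step is to control the auxiliary function $\tau_r(t) = \arctan\frac{\lambda\pi t}{\mathrm{Re}(r + I(t + i\epsilon))}$ for large $t$. Here I would use the asymptotics $I(x) \sim x + \mathcal{O}(1)$, which follows from the definition $I(x) = -R(-\mu^2 - R^{-1}(x))$ together with the explicit hypergeometric solution for $R$ stated in Theorem \ref{thm:hypergeom}. This gives $\mathrm{Re}(r + I(\Lambda^2 + i\epsilon)) \sim \Lambda^2$, whence the argument of the $\arctan$ converges to $\lambda\pi$ and $\tau_r(\Lambda^2)\to \arctan(\lambda\pi)$, yielding $\gamma = -\frac{1}{\pi}\arctan(\lambda\pi)$ as claimed.

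The hard part will be the analytic justification of the two inputs: the large-argument behaviour $I(x) \sim x + \mathcal{O}(1)$ and the legitimacy of differentiating a singular principal-value integral solely through its upper limit. The former rests on a careful expansion of the hypergeometric function $R$ and its inverse at infinity; the latter requires verifying that the $\epsilon$-regularized excision around $t = r$ contributes nothing to the cutoff derivative, so that only the endpoint term survives. Both facts are inherited from the exact analytic solution of \cite{Grosse:2019jnv, Grosse:2019qps} rather than being re-derived here, which is why the present argument can remain essentially a direct differentiation followed by an asymptotic evaluation.
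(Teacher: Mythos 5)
Your proposal is correct and follows essentially the same route as the paper's proof: differentiate $\log Z = \log C_r + \mathcal{H}^{\Lambda}_r[\tau_r(\bullet)]$ with respect to $\log\Lambda^2$, pick up only the boundary term $\frac{1}{\pi}\frac{\tau_r(\Lambda^2)}{\Lambda^2-r}$ from the upper endpoint of the Hilbert transform, and evaluate the limit using $I(x)\sim x+\mathcal{O}(1)$ to get $\tau_r(\Lambda^2)\to\arctan(\lambda\pi)$. The paper likewise imports the asymptotics of $I$ from \cite{Grosse:2019jnv} rather than re-deriving them, so there is no substantive difference.
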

The difference of the factor of two between the spectral and anomalous dimension,  $D_\lambda = 2(2+\gamma)$, has an explanation in terms of the relation between $\phi_D^4$ noncommutative field theory and $\phi_{d,r=2}^4$ matrix field theory:
As a noncommutative QFT, the Grosse-Wulkenhaar model has space-time dimension $D=4$.
Its matrix representation Eq.~\eqref{measure} viewed as an $r=2$ tensor (i.e.~matrix) field theory \cite{Thurigen:2102}, however, is a field theory on a $d=2$ dimensional domain since variables are integrated with measure $\varrho(t)\d t = t \d t$, Eq.~\eqref{eq:GcomplexCont}.
The anomalous dimension is the renormalization correction to this dimension and therefore half the value of the dimensional drop of the spectral dimension related to the space-time dimension $D=4$ of the non-commutative QFT.

The Landau ghost problem \cite{Landau:1954??}, or triviality, is a fundamental problem in QFT. The Standard Model however is rescued by the discovery of asymptotic
freedom coming from non-Abelian Yang-Mills theories. However, a simpler 4D QFT-model without the triviality problem is
not known so far. For the ordinary scalar $\phi^4$-model, triviality was proved
in $D=4+\epsilon$ dimensions \cite{Aizenman:1981du, Frohlich:1982tw} and recently by Aizenman and Duminil-Copin
in $D=4$ \cite{Aizenman:2019yuo} as (marginal) triviality.
Therefore, the construction and detailed understanding of the renormalization procedure of a simple, solvable and non-trivial
QFT-model in four dimensions is a
major task for renormalization theory.

The perturbative renormalization procedure invented by Connes and Kreimer using the algebraic structure of Hopf algebras \cite{Connes:1998qv,Connes:1999yr} was applied efficiently in higher order perturbation theory \cite{Broadhurst:2000dq}. This theory gave a new perspective on renormalization since it includes not just Zimmermann's forest formula but also other renormalization schemes into a Hopf algebra. The precise Hopf algebraic structure of the Grosse-Wulkenhaar model was claimed to be already analysed in \cite{Tanasa:2009hb}. We will give a more comprehensive analysis and fix the artificially included terms by the correct consideration of higher boundary structures and the correct way of contracting them.

\

\tikzstyle{v} =  [circle, draw=black, line width=.2pt, fill=black, inner sep=0pt, minimum size=1.5mm]

\section{Hopf algebra and Hochschild cocyles}\label{Sec:Hopf}

Renormalization of divergent amplitudes in perturbative QFT has an underlying Hopf-algebraic structure.
One may disentangle combinatorics and analytics by considering the free algebra $\btg$ generated by the set of connected ribbon graphs~$\RG{}{}$. 
Then, the Feynman rules Def.~\ref{def:Feynman rules} can be seen as an algebra homomorphism $\varpi:\btg\to\uca$. 
This allows to understand correlation functions \eqref{eq:correlation genus g} as evaluations 
\[
G^{(g)}_{|a_1^1..|...|..a_{n_b}^b|}= \varpi\left(X^{(g)}_{n_1,...,n_b}\right) (\vec{a}^1,...,\vec{a}^b)
\]
of formal series $X^{(g)}_{n_1,...,n_b} = \sum_{\substack{\rg\in \RG{g,b}{}}} \rg$ 
in the subalgebra $\btg_{g,b}$ generated by connected ribbon graphs with given genus $g$ and $b$ labelled boundaries.

Since ribbon graphs are the special case of strand graphs with two strands per edge, this is an instance of the algebra of strand graphs, the Feynman diagrams of the more general class of combinatorially non-local field theories (cNLFT).
It has been shown that, like for field theory with point-like interactions, such algebras extend to Hopf algebras \cite{Thurigen:2102}.
Thus, perturbative renormalization of matrix theory can be described by a Hopf algebra of divergent bridgeless (1PI) ribbon graphs which we detail in the following.

\tikzstyle{vb} = [coordinate] 
\subsection{Hopf algebra of perturbative renormalization}

The central operation in perturbative renormalization is to identify for a graph $\sg$ certain subgraphs $\sg\sgr\rg$ and their counterpart, that is, \emph{contraction} $\rg/\sg$.
\vtwo{To be able to define these operations on ribbon graphs rigorously, we need to introduce a combinatorial definition of ribbon graphs:}

\begin{definition}
\label{def:ribbon graph}
    A \emph{ribbon graph} $G=(\H,\sigma,\ei)$ is a triple of a finite set $\H$ of \emph{half edges} and two permutations $\sigma, \ei:\H\to\H$ where $\ei$ is an involution, \emph{including} fixed points. 
    The graph interpretation is the following:
    \begin{itemize}
    \item  Each cycle of $\sigma$ defines an oriented \emph{vertex} and
    \item  each cycle of $\ei$ with two elements defines an \emph{(internal) edge}. 
    \item Fixed points $\ei(h)= h$ define \emph{external edges}, i.e.~open half edges~$h$. 
    This yields a partition into internal and external half edges, $\H = \Hint\sqcup\Hext$.
    \item \vtwo{Each cycle of $\sigma^{-1}\circ\ei$ defines either an \emph{internal face} if it consist only of internal half edges, 
    \item or a \emph{boundary} if it contains an external half edge.}
    \end{itemize}
\end{definition}

This is 
the combinatorial definition of the above notion of embedded ribbon graphs in Sec.~\ref{sec:perturbative}, more precisely, of their equivalence class under homeomorphisms.
Thus, such a ribbon graph is dual to a discrete surface, more precisely to a combinatorial map as explained in detail in App.~\ref{sec:ribbon graphs}.

\

There are several diagrams isomorphic under relabelling, but we can choose a canonical labelling in our case for simplicity.
In general, a combinatorial map is considered as the equivalence class of isomorphic relabellings \cite{Eynard:2016yaa}.
In field theory, however, we are always dealing with open diagrams where external labels are fixed. In the case of ribbon graphs, this allows to canonically choose a representative of this equivalence class in the following way:
Denote the unique external half-edge that belongs to the external face of momentum $p_1$ as $1\in\Hext$.
Define all vertices as cycles with subsequent labels, $\mathcal{C}(\sigma^{-1})=(1234)(5678)...$ . 
Finally, starting with the first vertex, define edges by involutions from the half edge with the lowest possible label to the lowest next possible one 
and repeat with yet unpaired half-edges of the next vertex.
In this way, we obtain a canonical, ordered labelling for any ribbon graph.
For example, the planar fish graph is
\[\label{eq:fish graph}
\fishl \equiv \left(\{1,2,3,4,5,6,7,8\},(4321)(8765),(1)(2)(35)(48)(6)(7))\right) \, .
\]

Ribbon graphs as defined in Def.~\ref{def:ribbon graph} 
are a special case of 2-graphs with two strands at each edge \cite{Thurigen:2102}. 
Thus, the concept of contraction can be directly imported from the 2-graph definition:

\newpage

\begin{definition}\label{def:contraction}
    $\sg=(\H',\sigma',\ei')$ is a ribbon subgraph  of $\rg = (\H,\sigma,\ei)$, write $\sg\sgr\rg$, iff  $\H'=\H$, $\sigma'=\sigma$ and the set of pairs of $\ei'$ 
    is a subset of those of $\ei$.

    Then, contraction $\rg/\sg = (\H_{\rg/\sg},\sigma_{\rg/\sg},\ei_{\rg/\sg})$ 
    is defined as shrinking the edges of $\sg$ in $\rg$, that is 
    \begin{itemize}
        \item $\H_{\rg/\sg}= \H'_\textrm{ext}$, internal half edges of the subgraph are deleted,
        \item $\mathcal{C}(\sigma_{\rg/\sg}) = \mathcal{C}(\ei'\circ\sigma) \vert_{\H_{\rg/\sg}}$ where the restriction to $\H_{\rg/\sg}$ means deleting all other elements in the cycles $\mathcal{C}(\ei'\circ\sigma)$,
        \item and $\ei_{\rg/\sg} = \ei\vert_{\H_{\rg/\sg}}$.
    \end{itemize}
\end{definition}
According to this definition following the convention of \cite{Borinsky:2018}, the notion of subgraph refers to a graph possibly with several connected components, covering all the vertices of the original graph $\rg$. 
In this sense, one might also call it a \emph{decomposition} of $\rg$. 

\vtwo{It has to be noted that the framework of combinatorial ribbon graphs can match the physics only for contractions which do not lead to so-called \emph{multi-trace} vertices.
To properly take these into account, the more general framework of 2-graphs \cite{Thurigen:2102} would be needed.
The reason is that, in physics,}
each connected component of the subgraph $H$ is \vtwo{meant to shrink} to a single vertex in $\rg/\sg$ with the vertex structure given by its boundary. 
\vtwo{However, if a component $H_i$ of $H$ has more than one boundary, $b_i>1$, the contraction $G/H$ as defined in \ref{def:contraction} yields $b_i$ cycles; according the definition of combinatorial ribbon graphs, this defines $b_i$ vertices in $G/H$ but from a physics perspective it is understood as a single $b_i$-fold trace vertex.
}

For example,
contracting the above fish diagram (with a single boundary) yields a single quartic vertex
\[
\fishl \bigg/ \fishl \cong \left(\{1,2,6,7\},(1267), \id \right)
\cong \vertexl
\]
Contracting a ribbon graph within itself yields its external structure, the \emph{residue}  
    \[\label{eq:residue}
    \res:\RG{}{}\to\Res^{*},\quad \rg\mapsto\rg/\rg.
    \]
where we denote \vtwo{$\Res^{*}\subset \RG{}{}$ the subset of the set $\RG{}{}$ of ribbon graphs} with only external edges, i.e.~with $\ei = \id$, and $\Res \subset \Res^{*}$ the subset of those with single connected components, that is just single vertices.

On the other hand, the ribbon graph
\[\label{eq:2-2-fish}
\fishtwobl \cong \left(\{1,2,3,4,5,6,7,8\},(4321)(8765),(25)(47)\right)
\]
has two boundaries
\[
\mathcal{C}(\sigma^{-1} \circ \ei) = (1267) (3485)
\]
such that also its residue
\[\label{eq:multitrace residue}
\fishtwobl \bigg/ \fishtwobl \cong \left(\{1,3,6,8\},(16)(38), \id \right)
\cong \multivertexl
\]
has two cycles in $\sigma$ even though this represents a single vertex in the physics' sense. 
Such a 
\emph{multi-trace} vertices is not properly included in the definition of combinatorial \vtwo{ribbon graphs Def.~\ref{def:ribbon graph} in line with the common concept of combinatorial maps; instead,} 
the recent more general definition of 2-graphs \cite{Thurigen:2102} would be necessary to cover these cases;
for the current purpose the framework of combinatorial maps is sufficient, though, as we will deal only with connected diagrams with one boundary component \vtwo{in the renormalization Hopf algebra of quartic MFT}.

\

\tikzstyle{v} =  [circle, draw=black, line width=.2pt, fill=black, inner sep=0pt, minimum size=1mm]
    
Contraction gives rise to the general notion of a coproduct
\[\label{eq:coproduct}
\Delta_{\RG{}{}} : \rg \mapsto \sum_{\sg\sgr\rg} \sg \otimes \rg/\sg  \, ,
\]
which defines a coalgebra that can be extended to a Hopf algebra \cite{Borinsky:2018, Thurigen:2102}.
However, for renormalization one is not interested in any subgraphs but specifically in the subclass of Feynman diagrams which are superficially divergent and 1PI (i.e.~bridgeless)  \cite{Kreimer:1997dp, Connes:1998qv}.

In the quartic matrix theory, the superficial degree of divergence $\sdd$ of a dia\-gram $\rg$ of genus $g_\rg$ with $\nv^{(4)}_\rg$ four-valent-vertices and boundaries $j=1,...,b_\rg$  of lengths $\vec{n}=(n_1,...,n_{b_\rg})$ is given by\cite{Hock:2020rje}
\[
2\sdd(\rg) = D -\frac{D-2}{2} \sum_{j=1}^{b_\rg} n_j 
+(D-4)\nv^{(4)}_\rg - D \left(2g_\rg + b_\rg -1 \right) \, ,
\]
where we have included not just four-valent but also two-valent vertices, the latter with a vertex weight $\omega(\edgevertex)=1$ cancelling the dependence on their number $\nv^{(2)}_\rg$.
The four-valent vertices come with a weight $\omega(\vertex)=4-D$ such that they are relevant in $D>4$ and marginal in $D=4$ where
\[
2\sdd(\rg) = 4 - \sum_{j=1}^{b_\rg} n_j - 4(2g_\rg + b_\rg -1) .
\]
Thus, $\sdd$ can be non-negative only for ribbon graphs $\rg\in\RG{0,1}{n_1}$, i.e.~with $g_\rg = 0$ and $b_\rg=1$, for which
\[
\sdd(\rg) = \frac{4 - n_1}{2} 
\]
As a consequence, only $\vec{n}=(2)$ and $\vec{n}=(4)$ point functions need renormalization.
In particular, $\vec{n}=(2,2)$ point functions are convergent, including for example the diagram \eqref{eq:2-2-fish}.
Accordingly, the renormalization Hopf algebra has to be generated only by ribbon graphs of genus zero with a single boundary.

\begin{theorem}[Connes-Kreimer Hopf algebra of $D=4$ quartic MFT]
\label{theorem:Hopf algebra}
    \vtwo{Let ${}^\opi\RG{0,1}{n}(\vertex)$ be the set of connected, bridgeless ribbon graphs of genus zero and a single boundary with length $n$ and vertices of degree four.}
    Let
    \[
    \hfd=\langle {}^\opi\RG{0,1}{2}(\vertex)\cup {}^\opi\RG{0,1}{4}(\vertex) \rangle
    \]
    be the $\Q$-algebra freely generated by 
    those such ribbon graphs with $n=2$ or $n=4$. 
    With multiplication~$m$ given by disjoint union,
    this is a unital commutative algebra with unit $u:\Q\to\hfd, q\mapsto q\one$ where $\one$ is the empty ribbon graph.

    {Let $\ep$ be the projection deleting all bivalent vertices in a ribbon graph and define}
    \[
    \cop:\hfd \to \hfd\otimes\hfd , \quad
    \rg\mapsto \sum_{\substack{\sg\sgr\rg\\\sg\in\hfd}} \sg \otimes \ep(\rg/\sg) \, .
    \]
    This coproduct together with the counit
    \[
    \cou: \hfd \to \mathbb{Q} ,  \quad  \rg \mapsto \begin{cases}
    1 \textrm{ if }  \rg \in \Res^{*} \\
    0 \textrm{ else }
    \end{cases} 
    \]
    further defines on $\hfd$ the structure of a coassociative counital coalgebra,
    as well as a bialgebra.
    
    Finally, there is a unique inverse $\anti$ to the identity $\id:\rg\mapsto\rg$,
    \[
    \anti \conp \id = \id \conp \anti = u \circ \cou \, 
    \]
    with respect to the convolution product 
    \[
    \phi \conp \psi := m \circ (\phi\otimes\psi) \circ \cop
    \]
    of automorphisms $\phi,\psi:\hfd\to\hfd$. 
    This turns $\hfd$ into a Hopf algebra.

\begin{proof}
    As shown in \cite{Thurigen:2102}, ribbon graphs are the subset of 2-graphs, also called strand graphs, with two strands at each edge and whose vertex graphs are polygons. 
    Restricting to vertex degree 
    four the vertex graphs are 
    4-gons \vtwo{(cycle graphs of length 4)}.
    Since $\hfd$ is generated by ribbon graphs with single boundary of length $n=2$ or $n=4$, contraction yields ribbon graphs with vertices which have as vertex graphs 2-gons or 4-gons.
    {However, the resulting bivalent vertices on the right in the coproduct~$\cop$ are deleted by the projection $\ep$.%
    \footnote{For the purpose of renormalization, e.g.~in the BPHZ momentum scheme \cite{Thurigen:2103}, it is in fact necessary to keep the bivalent vertices to account for the right number of propagators. The necessary map $\cop$ without projection $\ep$ is then a coaction on the left $\hfd$-comodule algebra which includes ribbon graphs with bivalent vertices.}
    }
    Furthermore, contraction does not change the boundary and preserves the genus of a ribbon graph.
    Together this means that $\hfd$ is contraction closed, i.e.~any ribbon graph resulting of the contraction of two ribbon graphs in $\hfd$ is element of $\hfd$ itself.
    Thus, according to Thm.~2 in \cite{Thurigen:2102} $\hfd$ is a Hopf algebra because it is a contraction-closed subalgebra of the Hopf algebra of 2-graphs.
\end{proof}

\end{theorem}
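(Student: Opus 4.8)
The plan is to realise $\hfd$ as a contraction-closed subalgebra of the Hopf algebra of $2$-graphs (strand graphs) of \cite{Thurigen:2102} and to inherit all structure from there via Thm.~2 of that reference, so that the only genuine work is verifying closure. The algebra axioms are immediate: $\hfd$ is by construction the free commutative $\Q$-algebra on the stated generating set ${}^\opi\RG{0,1}{2}(\vertex)\cup{}^\opi\RG{0,1}{4}(\vertex)$, disjoint union is associative and commutative with the empty graph $\one$ as unit, and $\cou$ is the character of the statement, equal to $1$ on the residues $\Res^{*}$ and $0$ on any graph carrying an internal edge. Thus the statement reduces to (i) showing $\cop$ maps $\hfd$ into $\hfd\otimes\hfd$, and (ii) deducing coassociativity, the counit law, bialgebra compatibility and the antipode once (i) holds.

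First I would prove closure. Fix $\rg\in\hfd$ and a subgraph $\sg\sgr\rg$ with $\sg\in\hfd$; the left tensor factor already lies in $\hfd$ by the summation condition, so only $\ep(\rg/\sg)\in\hfd$ must be checked. Working directly from Definition~\ref{def:contraction}, I would verify the four defining properties of $\hfd$ in turn. Connectedness of $\rg/\sg$ follows since each connected component of $\sg$ collapses to a point while $\rg$ is connected. Preservation of genus $g=0$ and of the single boundary $b=1$ I would obtain from the Euler relation $\chi=\nv-\nei+\nf=2-2g-b$ of Sec.~\ref{sec:perturbative}, tracking how shrinking the internal edges of $\sg$ changes the three counts; the crucial input is that, because each component of $\sg$ has a single boundary, contraction produces no multi-trace (multi-cycle) vertex and the cograph stays a genuine single-boundary ribbon graph, as stressed in the discussion after Definition~\ref{def:contraction}. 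Finally, contracting a $4$-valent graph along a $2$- or $4$-point subgraph yields only $2$-gon and $4$-gon vertices, and the projection $\ep$ removes exactly the bivalent ones, so $\ep(\rg/\sg)$ has only genuine $4$-valent vertices. Hence $\ep(\rg/\sg)\in\hfd$ and $\cop$ closes on $\hfd$.

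With closure established, coassociativity $(\cop\otimes\id)\cop=(\id\otimes\cop)\cop$ and the counit identities for $\cop$ on $\hfd$ follow by restriction from the ambient $2$-graph coproduct, after checking that post-composition with the projection $\ep$ (which only erases bivalent vertices and hence preserves the loop number, genus and boundary) does not disturb them; multiplicativity of $\cop$ over disjoint union gives the bialgebra axiom. For the antipode I would either restrict the antipode of the ambient Hopf algebra, again legitimate by closure, or argue directly: grading $\hfd$ by loop number makes it a connected graded bialgebra whose degree-zero part is spanned by the grouplike residues of $\Res^{*}$, so the convolution inverse of $\id$ exists and is unique, given recursively by $\anti(\rg)=-\rg-\sum\anti(\sg)\,\ep(\rg/\sg)$ over the nontrivial terms of the reduced coproduct, realising $\anti\conp\id=\id\conp\anti=u\circ\cou$.

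The main obstacle is step (i), and within it the topological invariance under contraction: proving at the purely combinatorial level of the permutations $(\sigma,\ei)$ that contraction keeps $g=0$ and $b=1$, and that the single-boundary hypothesis on the components of $\sg$ is precisely what prevents the appearance of multi-trace vertices lying outside the combinatorial-map framework. This is where the superficial degree of divergence $\sdd$ is indispensable: the computation showing that only genus-zero, single-boundary graphs with $\vec{n}=(2)$ or $\vec{n}=(4)$ are divergent guarantees that all generators of $\hfd$ have this form, so the problematic subgraphs never occur and $\cop$ genuinely closes. I expect the edge-by-edge bookkeeping of $\nv,\nei,\nf$ under shrinking, rather than any step of the algebraic verification, to require the most care.
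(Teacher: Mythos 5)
Your proposal is correct and takes essentially the same approach as the paper: both realise $\hfd$ as a contraction-closed subalgebra of the Hopf algebra of 2-graphs and invoke Thm.~2 of \cite{Thurigen:2102}, the substantive work being the closure check, which you carry out in somewhat more detail (connectedness, Euler-characteristic bookkeeping for genus and boundary, vertex degrees after $\ep$) than the paper's brief assertion that contraction preserves genus and boundary. One small correction of framing: the $\sdd$ computation is motivation for the definition of $\hfd$ rather than a logical ingredient of the proof — the condition $\sg\in\hfd$ in the coproduct sum already excludes multi-boundary subgraph components by definition, which you in fact note yourself earlier in the argument.
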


The relevance of the Connes-Kreimer Hopf algebra for renormalization lies in the fact that the coinverse $\anti$ induces a group structure on the set of algebra homomorphisms $\phi,\psi:\hfd \to \mathcal{A}$ to any unital commutative algebra $\mathcal{A}$, e.g.~an algebra of Feynman amplitudes and Green's functions. 
To this end one extends the convolution product to 
$\phi \conp \psi := m_\mathcal{A} \circ (\phi\otimes\psi) \circ \cop$.
It is then straightforward to check that $\anti^\phi:=\phi\circ S$ defines an inverse element to~$\phi$.
This inverse is the crucial part in the definition of counter terms subtracting the divergent part of a Feynman amplitude \cite{Kreimer:1997dp, Connes:1998qv}.

\begin{remark}
    The Connes-Kreimer Hopf algebra of Thm.~\ref{theorem:Hopf algebra} in essence is the same as the one of \cite{Tanasa:0707} for the Grosse-Wulkenhaar model. 
    There, ribbon graphs with $(g,b)=(0,1)$ are called planar ``regular'' while those with $b>1$ are coined planar ``irregular''. 
    Furthermore, the coproduct is defined with respect to subgraphs which are ``shrinkable'' with respect to the given set of residues, i.e.~their external structure must be of the 4-gon or 2-gon vertex type. 
    The crucial step in the proof is then to show for coassociativity the possibility to insert planar regular graphs into each other in such a way that again a planar regular graph results.
    This is the analogue to contraction completeness in our proof.
    However, they allow the possibility to insert graphs also in different ways but show that there exist the right (cyclic) insertion to prove coassociativity \cite{Tanasa:0707}.
    However, issues in the definition of Dyson-Schwinger equations are addressed in \cite{Tanasa:2009hb} by allowing to insert planar irregular graphs into single-trace vertices, see e.g. Fig.~9 therein. 
    
    Contrary, this is not possible with our mathematically rigorous definitions here and, as we show in the following, it is also not necessary for a consistent definition of combinatorial Dyson-Schwinger equations.
    With the proper definition of ribbon graphs, Def.~\ref{def:ribbon graph}, consistent with combinatorial maps (App.~\ref{sec:ribbon graphs}), residues are fixed in a unique way, Eq.~\eqref{eq:residue}, and always have to match the structure of vertices in contraction and insertion operations.  
\end{remark}

\subsection{Combinatorial Dyson-Schwinger equations} 

Dyson-Schwinger equations directly determine the Green's functions of a QFT.
Here we want to focus on the planar 2- and 4-point function or the $4D$ quartic matrix theory which need renormalization. 
Their corresponding combinatorial perturbative series 
\begin{align}
\xe &= \edge - \sum_{\substack{\rg\in{}^\opi\RG{0,1}{2}(\vertex)\\  \res(\rg)=\,\edgevertex}} 
\alpha^{\nf_\rg} {\rg} 
= \edge - \sum_{j=1}^\infty \alpha^j \ce_j 
\label{eq:seriesXe}\\
\xv &= \vertex + \sum_{
\substack{\rg\in{}^\opi\RG{0,1}{4}(\vertex)\\ \res(\rg)=\vertex}} 
\alpha^{\nf_\rg} {\rg} 
= \sum_{j=0}^\infty \alpha^j \cv_j.
\label{eq:seriesXv}
\end{align}
are element of the Connes-Kreimer Hopf algebra $\hfd$.
Therein, the parameter $\alpha$ counts the number of internal faces $\nf_\rg$ which is the analogue to the number of loops in point-like interacting field theories.
For the quartic theory this is a trivial redefinition of the quartic coupling $\alpha=\lambda$.
More generally, for a single $k$-valent interaction the Euler formula yields
\[
\nf = E-V+2-2g-b = \frac{k-2}{2}V -\left(\frac{1}{2}\sum_{j=1}^b n_j -1 \right) - (2g+b-1) 
\]
using the handshake relation $2E= k V - \sum_j n_j$.
For given boundary and genus this is an affine relation in the number of vertices $V$.
For $k=4$ one has simply $F=V+cons$, in particular with $(g,b)=(0,1)$ it is $F=V$ for $n_1=2$ and $F=V-1$ for $n_1=4$.

\

Combinatorial Dyson-Schwinger equations rely on the fact that divergent diagrams can be constructed iteratively by inserting divergent diagrams into each other.
In this way one may obtain all elements in the series $X^e, X^v$.
To this end, the starting point are primitive diagrams, i.e. those which do not contain any divergent subgraphs in $\hfd$:
\begin{definition}
    A ribbon graph $\rg\in \hfd$ is called primitive if 
    there exists no ribbon subgraph $\sg\subsetneq \rg$ in $\hfd\setminus\Res^*$. 
\end{definition}

While there is only one kind of primitive 2-point diagrams in quartic MFT, there are infinitely many primitive 4-point diagrams:

\begin{lemma}[primitive 2-point diagrams]\label{lemma:primitives2}
The tadpole diagrams $\tadpoleup$ and $\tadpoledown$
are the only primitive superficially divergent ribbon graphs with $\vec{n}=(2)$.  
\begin{proof}
A 1PI 2-point graph can be 
\vtwo{of} the form \def\svgwidth{20pt}
\begingroup%
  \makeatletter%
  \providecommand\color[2][]{%
    \errmessage{(Inkscape) Color is used for the text in Inkscape, but the package 'color.sty' is not loaded}%
    \renewcommand\color[2][]{}%
  }%
  \providecommand\transparent[1]{%
    \errmessage{(Inkscape) Transparency is used (non-zero) for the text in Inkscape, but the package 'transparent.sty' is not loaded}%
    \renewcommand\transparent[1]{}%
  }%
  \providecommand\rotatebox[2]{#2}%
  \newcommand*\fsize{\dimexpr\f@size pt\relax}%
  \newcommand*\lineheight[1]{\fontsize{\fsize}{#1\fsize}\selectfont}%
  \ifx\svgwidth\undefined%
    \setlength{\unitlength}{82.50109976bp}%
    \ifx\svgscale\undefined%
      \relax%
    \else%
      \setlength{\unitlength}{\unitlength * \real{\svgscale}}%
    \fi%
  \else%
    \setlength{\unitlength}{\svgwidth}%
  \fi%
  \global\let\svgwidth\undefined%
  \global\let\svgscale\undefined%
  \makeatother%
  \begin{picture}(1,0.56825796)%
    \lineheight{1}%
    \setlength\tabcolsep{0pt}%
    \put(0,0){\includegraphics[width=\unitlength,page=1]{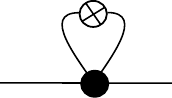}}%
  \end{picture}%
\endgroup%
 or \def\svgwidth{20pt}
\begingroup%
  \makeatletter%
  \providecommand\color[2][]{%
    \errmessage{(Inkscape) Color is used for the text in Inkscape, but the package 'color.sty' is not loaded}%
    \renewcommand\color[2][]{}%
  }%
  \providecommand\transparent[1]{%
    \errmessage{(Inkscape) Transparency is used (non-zero) for the text in Inkscape, but the package 'transparent.sty' is not loaded}%
    \renewcommand\transparent[1]{}%
  }%
  \providecommand\rotatebox[2]{#2}%
  \newcommand*\fsize{\dimexpr\f@size pt\relax}%
  \newcommand*\lineheight[1]{\fontsize{\fsize}{#1\fsize}\selectfont}%
  \ifx\svgwidth\undefined%
    \setlength{\unitlength}{82.50067894bp}%
    \ifx\svgscale\undefined%
      \relax%
    \else%
      \setlength{\unitlength}{\unitlength * \real{\svgscale}}%
    \fi%
  \else%
    \setlength{\unitlength}{\svgwidth}%
  \fi%
  \global\let\svgwidth\undefined%
  \global\let\svgscale\undefined%
  \makeatother%
  \begin{picture}(1,0.56823689)%
    \lineheight{1}%
    \setlength\tabcolsep{0pt}%
    \put(0,0){\includegraphics[width=\unitlength,page=1]{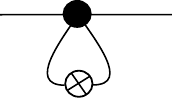}}%
  \end{picture}%
\endgroup%
. 
The two vertices (2- and 4-valent vertex) correspond to a not necessarily connected 2-point subgraphs and a 4-point subgraph, respectively. A 2-point subgraph has superficial degree of divergence $\sdd=2$. It is either not primitive, than it has a primitive subgraph or it is primitive. The same hold for the 4-point subgraph where the superficial degree of divergence is $\sdd=0$. 
\end{proof}
\end{lemma}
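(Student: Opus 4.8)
My plan is to recast primitivity through the superficial degree of divergence, dispose of the single-vertex case directly, and then show that any graph with at least two vertices carries a proper subdivergence. Because $\hfd$ is generated only by genus-zero, single-boundary $2$- and $4$-point ribbon graphs (Thm.~\ref{theorem:Hopf algebra}), a subgraph $\sg\sgr\rg$ lies in $\hfd\setminus\Res^{*}$ exactly when every connected component is one of these and at least one component keeps an internal edge. Since $\sdd(\rg)=(4-n_1)/2\ge 0$ for $n_1\in\{2,4\}$, each such component is superficially divergent, so $\rg$ is primitive if and only if no proper subset of its internal edges opens up a nontrivial $2$- or $4$-point, genus-zero, single-boundary component. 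I would control the combinatorics with $\nf=\nv$ (for $n_1=2$) and the handshake relation $2E=4\nv-n_1$.

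For $\nv=1$ the statement is immediate. A four-valent vertex with two external half-edges has exactly two internal half-edges, and with no further vertex these must close into a self-loop; the cyclic datum $\sigma$ admits precisely two inequivalent planar pairings, yielding $\tadpoleup$ and $\tadpoledown$, both of genus zero with a single boundary, and these are the only $\nv=1$ graphs. Their unique proper subgraph deletes the loop and returns a bare vertex in $\Res^{*}$, so no subdivergence exists and each is primitive.

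For $\nv\ge 2$ I would produce a subdivergence. Pick a vertex $v$ incident to an external leg and open its internal half-edges; if $x\in\{1,2\}$ denotes the number of external legs at $v$, the complement then carries $6-2x$ external half-edges, i.e.\ a $2$-point blob when $x=2$ and a $4$-point blob when $x=1$, in either case on $\nv-1\ge 1$ vertices and hence with an internal edge. Whenever a self-loop or a suitably nested double edge is present this blob is visibly the tadpole or the fish $\fishl$, so $\cop\rg$ contains a term $\sg\otimes\rg/\sg$ with $\sg\notin\Res^{*}$ and $\rg$ fails to be primitive; note that bridgelessness (1PI) already forbids a self-loop at a vertex carrying an external leg once $\nv\ge 2$, as that would create a bridge.

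The main obstacle is the single-boundary requirement hidden in this last step. Opening half-edges can split the one boundary of $\rg$ into two, turning the candidate blob into a convergent $(2,2)$-type piece rather than a single-boundary $2$- or $4$-point graph in $\hfd$; indeed even a self-loop whose two half-edges sit at opposite corners of its vertex yields a two-boundary, genus-zero component instead of a tadpole. Planarity is automatic, since genus cannot increase under edge deletion, but single-boundedness must be forced from bridgelessness together with the fact that $\rg$ has only one boundary. I expect the cleanest route is to dualise (App.~\ref{sec:ribbon graphs}): a four-valent genus-zero single-boundary $2$-point graph is a quadrangulation of a digon-bounded disk, and a subdivergence is a separating cycle of length two or four enclosing at least one internal vertex, so the lemma reduces to the irreducibility statement that the only such quadrangulation without a separating short cycle is a single quadrangle, i.e.\ a tadpole. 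Establishing this irreducibility, rather than the mere inevitability of some subdivergence, is where the real work lies.
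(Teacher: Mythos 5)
Your one-vertex analysis is correct, and you have identified exactly the right subtlety, but the proposal does not prove the lemma: the case of two or more vertices is left open. What has to be shown there is that every bridgeless, genus-zero, single-boundary 2-point ribbon graph with $\nv_\rg\ge 2$ contains a proper subgraph in $\hfd\setminus\Res^{*}$ — in particular one with a \emph{single} boundary. Your surgery (open the internal half-edges at a vertex carrying an external leg and take the complement as a $2$- or $4$-point blob) only produces a candidate, and, as you yourself observe, that candidate can come out with two boundaries (a $(2,2)$-type piece, exactly like $\fishtwob\sgr\sunrise$ in the paper), in which case it is convergent, lies outside $\hfd$, and witnesses nothing; for the sunrise one must instead pick an adjacent pair of edges to get the single-boundary fish, and proving that such a good choice always exists is precisely the missing step. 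Your fallback — dualizing to quadrangulations and proving that the only digon-bounded quadrangulation without a short separating cycle is a single quadrangle — is a reasonable reformulation, but it is stated, not proved; your closing sentence concedes this is ``where the real work lies.'' Since non-primitivity of everything other than $\tadpoleup$ and $\tadpoledown$ \emph{is} the content of the lemma, what you have is a plan with an identified hole, not a proof.

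For comparison, the paper argues top-down rather than by local surgery: it asserts (via two displayed skeleton forms) that any 1PI 2-point graph is a tadpole skeleton carrying a possibly disconnected 2-point blob on the loop line and a 4-point blob at the vertex; these blobs have $\sdd=2$ and $\sdd=0$ respectively, and each is either primitive or contains a primitive subgraph, so any graph with a nontrivial blob has a proper divergent subgraph and cannot be primitive. In that decomposition the blobs have, by construction, the external structure of single-trace $2$- and $4$-valent vertices, i.e.\ single-boundary residues — which is exactly the point where your vertex-opening gets stuck. (To be fair, the paper is itself terse about justifying this decomposition; your difficulty makes explicit what the figures there gloss over.) To complete your route you would need to actually prove the quadrangulation-irreducibility claim, or equivalently establish the skeleton decomposition that the paper's proof takes as its starting point.
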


\begin{lemma}[primitive 4-point diagrams]\label{lemma:primitives4}
There are infinitely many primitive divergent ribbon graphs with $\vec{n} = (4)$.
\begin{proof}
The following series is an infinite series of primitive 4-point graphs:
\[
\boxgraph \, , \quad 
\boxxgraph\, , ... \quad
\boxxxxgraph
\]
This series of 4-point graphs has no 1PI subgraph in the Connes-Kreimer Hopf algebra, i.e. there is no 2- or 4-point subgraph. Note that there are 1PI 6-point subgraphs which however do not belong to the Connes-Kreimer Hopf algebra since there superficial degree of divergence is $\sdd=-2$. To have 4-point subgraphs, the graph needs to be 3-cut irreducible, where all three cuts are not allowed to be at a vertex at a external leg. It is easy to see that the constructed class of 4-point graphs does not obey this property. 
\end{proof}
\end{lemma}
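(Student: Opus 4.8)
The plan is to take the explicit infinite family displayed in the statement, the ladders of boxes $\boxgraph,\ \boxxgraph,\ \ldots$, and to verify that every member is a primitive $4$-point generator of $\hfd$. Write $L_k$ for the $k$-th ladder, obtained by chaining $k$ elementary boxes along their shared pairs of four-valent corner vertices, so that $L_1=\boxgraph$, $L_2=\boxxgraph$, and external legs survive only at the two extremal boxes. First I would record the structural data: each $L_k$ is connected, all vertices are four-valent, it has genus $g=0$, a single boundary $b=1$, and exactly $n_1=4$ external legs. By the marginal-case formula $\sdd(L_k)=\frac{4-n_1}{2}=0$, each $L_k$ is therefore superficially divergent and lies in $\hfd$; it is moreover bridgeless, since every internal edge sits on a four-cycle, hence $1$PI. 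As the face number $\nf_{L_k}$ strictly increases with $k$, the $L_k$ are pairwise non-isomorphic, so the family is infinite.

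The core of the argument is primitivity. By Def.~\ref{def:contraction} a subgraph $\sg\sgr L_k$ is a choice of a subset of the internal edges, so $L_k$ is primitive precisely when no proper such choice produces a connected component that is a nontrivial ($1$PI, edge-carrying) generator of $\hfd$, i.e.\ a genus-zero, single-boundary ribbon graph with $n\in\{2,4\}$ external half-edges. It therefore suffices to prove the quantitative statement that \emph{every proper, connected, $1$PI, genus-zero, single-boundary sub-ribbon-graph of $L_k$ has at least six external half-edges}. I would split this into the absence of $2$-point and of $4$-point subdivergences.

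For the $2$-point case I would argue that $L_k$ contains no self-energy insertion: a $1$PI $2$-point subgraph would be a nontrivial connected piece joined to its complement by exactly two cut edges, i.e.\ a $2$-edge cut isolating a bridgeless block, which a ladder assembled from boxes (rather than from chained bubbles) does not admit. For the $4$-point case -- the genuine obstacle -- I would count external legs of candidate subgraphs directly against the box structure. The key observation, already visible for a single box sitting inside the double box $\boxxgraph$, is that isolating one box forces six external half-edges: the two original legs together with the four edges running to the neighbouring centre and shared corners. This is exactly the convergent $6$-point subgraph with $\sdd<0$ flagged in the statement, and it is the failure of $3$-cut irreducibility that a true $4$-point subdivergence would require. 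I would then promote this single computation to all $L_k$, showing by induction on the number of boxes (and using that adjacent boxes overlap in two corner vertices) that any proper connected $1$PI subgraph containing a full box is bounded by at least six external edges, while any piece not containing a full box either carries a bridge (hence is not $1$PI) or again has $n\ge 6$.

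The step I expect to be the main obstacle is this last leg-counting for $4$-point subgraphs. The difficulty is precisely that adjacent boxes are glued at shared four-valent corner vertices rather than along a clean edge cut, so the number of external legs of a candidate subgraph is a \emph{nonlocal} function of which edges are retained, and the bookkeeping has to track how a cut interacts with these shared corners. Once the lower bound $n\ge 6$ is established for every proper nontrivial connected $1$PI genus-zero single-boundary subgraph, primitivity of each $L_k$ is immediate, and the infinitude of the family $\{L_k\}_{k\ge 1}$ completes the proof.
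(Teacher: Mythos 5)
Your proposal is correct and follows essentially the same route as the paper: exhibit the same box-ladder family, observe that its only proper connected 1PI subgraphs are 6-point ones (which are convergent, $\sdd=-2$, hence not in $\hfd$), and conclude primitivity, with the absence of 2- and 4-point subdivergences reduced to a cut/leg-counting argument on the box structure. You merely make explicit the counting framework and the induction that the paper dismisses with ``it is easy to see,'' so there is no substantive difference in method.
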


The main ingredient to the combinatorial DSE is the grafting operator $\graft$
which acts as a weighted insertion of one graph into another one.
To insert a ribbon graph $\sg$ into another ribbon graph $\cg$ the external structure of $\cg$ has to match the vertex structure of $\sg$.
We follow \cite{Thurigen:2102} generalized from \cite{Borinsky:2018} defining this vertex structure as the \emph{skeleton} of the ribbon graph, given by the projection deleting all internal edges
\[
\skl:\RG{}{}\to\Res^{*} 
\, , \, 
(\H,\sigma,\ei) \mapsto (\H,\sigma,\id) \, .
\]
Only when residue of $\sg$ and skeleton of $\cg$ agree one can choose a specific identification of them to insert $\sg$ into $\cg$:

\begin{definition}\label{def:insertion}
Let $\sg, \cg$ be ribbon graphs and $\ib:\res(\sg)\to\skl(\cg)$ an isomorphism on $\Res^{*}$,
\vtwo{i.e. $\ib:\H_{\res(\sg)}\rightarrow \H_{\cg}$ is bijective and $\sigma_{\res(\sg)} = \ib^{-1}\circ \sigma_\cg \circ\ib$.}
The insertion of $\sg$ into $\cg$ according to $\ib$ is 
\[
\cg\circ_\ib \sg := \left(\H_\sg, \sigma_\sg,\ei \right) \, 
\]
where
\[
\ei = \begin{cases}
    \ib^{-1} \circ \ei_{\cg}\circ \ib & \text{ on } \H_{\res(\sg)} \\
    \ei_\sg & \text{ otherwise}
\end{cases}
\]
and we denote the set of possible insertions, i.e.~isomorphisms $\ib$, $\mathcal{I}(\sg,\cg)$.
\end{definition}

The set of possible insertions factorizes into automorphisms of the structure of each vertex and permutations of vertices of the same kind. 
For ribbon graphs, vertices have the structure of polygons and their automorphisms are cyclic permutations:
\begin{lemma}\label{lem:factorization}
    Let $\sg, \cg$ be ribbon graphs with $\res(\sg)\cong\skl(\cg)$, i.e. it is possible to insert $\sg$ into $\cg$.
    Then
    \[
    \mathcal{I}(\sg,\cg)\cong \aut(\res(\sg)) \cong \aut(\skl(\cg)) =
    \vtwo{\prod_{k\ge 2} \aut(\V^{(k)}_\cg) 
    \times\prod_{v\in\V_\cg} \aut(v) }
    \]
    There are $k$ automorphisms for a $k$-valent vertex $v\in\V_\cg = \mathcal{C}(\sigma_\cg)$ which is a cycle of $\sigma_\cg$ with $k$ elements.
    There are $V^{(k)}_{\cg}!$ permutations of the $V^{(k)}_{\cg} = |\V^{(k)}_\cg|$ vertices in $\cg$ of degree $k$.
    In particular, 
    \[
    |\mathcal{I}(\sg,\cg)|
    = \prod_{k\ge 2} |\aut(\V^{(k)}_\cg)| \prod_{v\in\V_\cg} |\aut(v)|
    = \prod_{k\ge 2} V^{(k)}_{\cg}! \, k^{V^{(k)}_{\cg}} \, .
\]
\begin{proof}
    The set of isomorphisms between $\res(\sg)$ and $\res(\cg)$ is isomorphic to the set of automorphisms on either of the two exactly because they 
    are isomorphic (otherwise $\mathcal{I}(\sg,\cg)=\emptyset$).
    Since $\res(\cg)\in\Res^{*}$ is a union of graphs with single vertices, its automorphism group factorizes accordingly.
\end{proof}
\end{lemma}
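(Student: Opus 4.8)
The plan is to prove the two isomorphisms in turn and then read off the cardinality. The identification $\mathcal{I}(\sg,\cg)\cong\aut(\res(\sg))\cong\aut(\skl(\cg))$ is a formal torsor statement and requires no structure beyond the hypothesis $\res(\sg)\cong\skl(\cg)$. Since the two objects of $\Res^{*}$ are isomorphic, $\mathcal{I}(\sg,\cg)$ is nonempty (otherwise it is empty and insertion is undefined); fixing any $\ib_0\in\mathcal{I}(\sg,\cg)$, the assignments $\phi\mapsto\ib_0\circ\phi$ and $\psi\mapsto\psi\circ\ib_0$ furnish bijections from $\aut(\res(\sg))$ and from $\aut(\skl(\cg))$, respectively, onto $\mathcal{I}(\sg,\cg)$.

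The substantive step is to compute $\aut(\skl(\cg))$. First I would observe that an object $(\H,\sigma,\id)$ of $\Res^{*}$ carries no internal edges, so it is determined entirely by the pair $(\H,\sigma)$, and by Def.~\ref{def:insertion} an automorphism is precisely a bijection $\phi\colon\H\to\H$ with $\phi\circ\sigma=\sigma\circ\phi$. Hence $\aut(\skl(\cg))$ is exactly the centralizer of $\sigma_\cg$ in the symmetric group on $\H_\cg$. It is important here that isomorphisms in $\Res^{*}$ act by conjugation of $\sigma$ only; this is what forbids reflections of a vertex, consistent with the vertices being oriented polygons.

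Next I would invoke the classical description of the centralizer of a permutation via its cycle type. The cycles of $\sigma_\cg$ are the vertices $\V_\cg=\mathcal{C}(\sigma_\cg)$, a $k$-valent vertex being a $k$-cycle, and the centralizer splits as a product over cycle lengths of wreath products $(\mathbb{Z}/k\mathbb{Z})\wr S_{V^{(k)}_\cg}$. Concretely, cycles of the same length may be permuted among one another, giving the factor $\aut(\V^{(k)}_\cg)\cong S_{V^{(k)}_\cg}$, while each individual cycle may be cyclically rotated, giving the factor $\aut(v)\cong\mathbb{Z}/k\mathbb{Z}$ for every $k$-valent $v$. This is the asserted factorization, where the product symbol is to be read as the corresponding wreath (semidirect) product. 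Taking orders yields $|\aut(v)|=k$ and $|\aut(\V^{(k)}_\cg)|=V^{(k)}_\cg!$, whence $|\mathcal{I}(\sg,\cg)|=\prod_{k\ge2}V^{(k)}_\cg!\,k^{V^{(k)}_\cg}$.

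The only genuinely nontrivial ingredient is the centralizer computation, which is standard group theory; the point that needs care is the reduction in the second paragraph, namely that in $\Res^{*}$ the involution constraint is vacuous ($\ei=\id$), so that ``isomorphism'' degenerates to conjugation of $\sigma$ and $\aut$ is the full centralizer with vertex automorphisms exactly the cyclic rotations. Once this is established the remainder is bookkeeping of cycle types, and the main obstacle is simply to state the wreath-product structure precisely enough that the cardinality formula becomes transparent.
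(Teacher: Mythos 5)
Your proof is correct and follows the same two-step route as the paper's own proof: the torsor argument identifying $\mathcal{I}(\sg,\cg)$ with $\aut(\res(\sg))\cong\aut(\skl(\cg))$, followed by factorizing that automorphism group over the vertices of $\cg$. The only difference is one of packaging: where the paper merely asserts that the automorphism group of a disjoint union of single-vertex graphs ``factorizes accordingly,'' you make this precise by identifying $\aut(\skl(\cg))$ with the centralizer of $\sigma_\cg$ in the symmetric group on $\H_\cg$ (the involution constraint being vacuous since $\ei=\id$ on $\Res^{*}$, which is also what excludes reflections) and invoking the classical cycle-type description of centralizers --- a welcome sharpening that additionally lets you note, correctly, that the displayed ``product'' is really a wreath product and the factorization formula is exact at the level of cardinalities.
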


The insertion of ribbon graphs $\sg$ with edge-structure, that is $\res(\sg)=\edgevertex$, 
needs a special treatment.
Our definition of insertions demands a matching $\res(\sg) \cong \skl(\cg)$ and thus bivalent vertices in $\cg$ to allow for edge-diagram insertions. 
Such bivalent vertices are not present in the graphs of the series $X^e, X^v$.
Still, we have to include the possibility to insert 2-point graphs to obtain all relevant graphs by insertion.
To cover this, let us define the projection 
\[
\ep: \RG{}{} \to \RG{}{}
\]
which contracts (in the usual graph-theoretic sense) all bivalent vertices, $\edgevertex \mapsto \edge$. 
This yields an equivalence relation $\sim_2$ identifying all diagrams $\rg,\rg'$ for which $\ep(\rg)=\ep(\rg')$ and accordingly an equivalence class $[\rg]_{\sim_2} = [\rg']_{\sim_2}$.
Using this, we can define the properly weighted insertion operator which allows arbitrary many edge insertions on each edge:

\begin{definition}\label{def:graftingOP}
The grafting operator is the map 
\[
\graft: \hfd \times \hfd \to \hfd \, , \, (\sg,\Gamma) \mapsto \graft^{\Gamma}(\sg) \nonumber
\]
linear in both arguments 
and defined by  
\[\label{eq:grafting definition}
\graft^{\Gamma}(\sg) := \sum_{\cg\in [\Gamma]_{\sim_2}} \frac1{|\mathcal{I}(\sg,\cg)|} \sum_{\ib\in \mathcal{I}(\sg,\cg)} \frac{\cg\circ_\ib \sg}{\maxf(\cg\circ_\ib \sg)} \, ,
\]
wherein 
$\maxf(\rg)$ is the number of subgraphs $\sg\subsetneq\rg$ in $\hfd$
such that $\rg/\sg$ is primitive (corresponding to ``maximal $\rg$-forests'' in the language of Zimmermann's forest formula \cite{Zimmermann:1969up}). 
\end{definition}

Note that only  subgraphs $\sg\in \hfd$ are counted to obtain $\maxf$.
For example, the ribbon graph of Eq.~\eqref{eq:2-2-fish} is a ribbon subgraph $\fishtwob \sgr \sunrise$ but having two boundaries it is not in $\hfd$.
For this reason%
\footnote{In \cite{Tanasa:2009hb} this example is also given after Def.~4.3 therein. However, based on the Hopf algebra construction of \cite{Tanasa:0707}, the contraction of $\fishtwob \sgr \sunrise$ is allowed. The authors of \cite{Tanasa:2009hb} argue that the reason to exclude this case from $\maxf(\sunrise)$ is only that the resulting contracted graph would be non-planar (which is actually wrong, the point is rather that contraction of $\fishtwob$ leads to a multi-trace vertex, Eq.~\eqref{eq:multitrace residue}).

In the end, this wrong conception of allowed ribbon subgraphs and $\maxf$ in \cite{Tanasa:2009hb} seems to be the reason why the Dyson-Schwinger equations do not work out without including ``planar irregular" graphs such as $\fishtwob$ in the Hopf algebra.
Specifically, in Eq.~(6.7) of \cite{Tanasa:2009hb} there is the equation 
$\maxf(\tadpoleud)=2$ though the fish-type ribbon subgraph is not $\fishup$ but $\fishtwobb\sgr\tadpoleud$ which should not be in $\hfd$ and thus the correct result is $\maxf(\tadpoleud)=1$.
As a consequence, \cite{Tanasa:2009hb} comes to the conclusion that $\fishtwobb$ and, more generally, all planar irregular ($g=0, b>1$) ribbon graphs should be included in the Hopf algebra.
However, this does not only lead to a Hopf algebra which is not the one renormalizing quartic MFT; it is also inconsistent because this extended Hopf algebra is not contraction closed. That is, contraction of planar irregular ribbon graphs, $b>1$, leads to $b$-fold multi-trace vertices which are still not part of that extended Hopf algebra.
}
$\maxf(\sunrise)=2$ in MFT in contrast to local scalar QFT where the sunrise diagram has three maximal forests. 

The grafting operator has an expansion in the ribbon graphs which result from the insertions summed over in the definition.
Thereby, some insertions turn out to be redundant and can be divided out. 
We follow the literature \cite{Kreimer:2005rw} for the notation of the resulting combinatorial factors:
\begin{proposition}\label{prop:grafting expansion}  
\vtwo{Let $\sg, \Gamma\in\hfd$.}
The grafting operator $\graft$ has an expansion
\[
    \graft^{\Gamma}(\sg) = \sum_{\cg\in [\Gamma]_{\sim_2}} \frac{1}{(\cg|\sg) |\sg|_\vee }\sum_{ \rg\in\hfd 
    }\frac{\bij(\cg,\sg,\rg)}{\maxf(\rg)} \rg,
\]
where the weight factors are defined as follows \cite{Kreimer:2005rw}:
\begin{itemize}
    \item $(\cg|\sg)$ is the number of insertion places for $\sg$ in $\cg$.
    \item $|\sg|_\vee $ is the number of distinct graphs $\sg$ which are equal upon removal of the external edges.
    \item  $\bij(\cg,\sg,\rg)$ is the number of \emph{``non-equivalent''} bijections between $\res(\sg)$ and $\skl(\cg)$ such that $\rg$ is obtained by insertion.
\end{itemize}
\begin{proof}
    Let $\sg, \cg$ be ribbon graphs with $\res(\sg)\cong\skl(\cg)$.
    One can partition the set of insertions according to the resulting ribbon graphs $\rg$,
    \[
    \mathcal{I}(\sg,\cg) = \bigoplus_{\rg\in\RG{}{}} \, \mathcal{I}_\rg(\sg,\cg)
    \, , \quad
    \mathcal{I}_\rg(\sg,\cg) = \{\ib\in \mathcal{I}(\sg,\cg) \vert \cg \circ_\ib \sg = \rg \} \, ,
    \]
    where, as always here, each $\rg$ is canonically labelled as described in the end of Sec.~\ref{sec:ribbon graphs} (otherwise one would have to identify graphs which are identical upon relabelling).
    Now let $\sg, \cg\in\hfd$. Then, for any $\ib\in \mathcal{I}(\sg,\cg)$ also $\rg = \cg \circ_\ib \sg\in\hfd$ such that
    \[\label{eq:graft expansion 1}
    \graft^{\Gamma}(\sg) = \sum_{\cg\in [\Gamma]_{\sim_2}} \frac{1}{|\mathcal{I}(\sg,\cg)|}\sum_{\rg\in\hfd} \frac{|\mathcal{I}_\rg(\sg,\cg)|}{\maxf(\rg)} \rg \, .
    \]

    There are some parts in the factorization of $\mathcal{I}(\sg,\cg)$ according to Lemma \ref{lem:factorization} which are not sensitive to the insertion outcome $\rg=\cg\circ_\ib \sg$. 
    One type is permutations of single-vertex components of $\sg = \prod_i\sg_i$. If there are $j_k$ vertex components $\sg_1=...=\sg_{j_k}\in\Res$ of the same type $k$ (here degree $k=2$ or $k=4$), then the resulting graph $\rg$ does not depend on the $j_k!(\nv_\cg^{(k)}-j_k)!$ permutations how they are inserted.
    Thus one can factor them also in $|\mathcal{I}_\rg(\sg,\cg)|$ and define the number of relevant permutations in the denominator surviving after cancellation with the denominator in Eq.~\eqref{eq:graft expansion 1} as 
    \[
    (\cg|\sg) = \prod_{k\ge 2} \frac{|\aut(\V^{(k)}_\cg)|}{j_k !(\nv_\cg^{(k)}-j_k)!} = \binom{\nv^{(k)}_\cg}{j_k}
    \]
    For $k=2$, this definitions in fact differs from the literature, e.g. \cite{Kreimer:2005rw}, since the definitions here allow edges insertions only in bivalent vertices which are summed in the overall sum over $\cg\in\Gamma$ in $\graft^\Gamma$.
    
    Another type cancellations is redundant automorphisms in $\aut(v)$ for a given vertex $v\in\V_\cg$:
    If a component $\sg_i$ has automorphisms in $\aut(\res(\sg_i))$ which are equivalent upon insertion, these yield always the same graph $\rg$ and can thus be factored out in $|\mathcal{I}_\rg(\sg,\cg)|$.
    Again, for a single-vertex component $\sg_i=\res(H_i)\in\Res$ of degree $k$ all $k$ cyclic permutations give the same result.
    But also $\sg_i\notin\Res$ may have such symmetries (e.g.~the $k=2$ cyclic permutations of $\sunrise$ are equivalent, or two of the $k=4$ cyclic permutations of $\fishright$). 
    These symmetries are the automorphisms of the ribbon graphs themselves since there are no internal automorphisms.
    Thus we define for $\sg = \prod_i\sg_i$ the relevant factor
    \[
    |\sg|_\vee = \prod_i|\sg_i|_\vee = \frac{|\aut(\res(\sg))|}{|\aut(\sg)|} \, .
    \]
    The number of ``non-equivalent'' bijections $\bij(\cg,\sg,\rg)$ is then exactly the number of the subset of all insertions of $\sg$ into $\cg$ which yield $\rg$ with both types of trivial factors described factored out
    \[
    \bij(\cg,\sg,\rg) = \frac{|\mathcal{I}_\rg(\sg,\cg)|}{|\aut(H)|\prod_{k\ge2} j_k !(\nv_\cg^{(k)}-j_k)!}
    \]
    Then the statment follows by cancelling these factors in Eq.~\eqref{eq:graft expansion 1}.
\end{proof}
\end{proposition}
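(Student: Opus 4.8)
The plan is to start from Definition~\ref{def:graftingOP} and reorganise its double sum so that the abstract sum over insertion isomorphisms becomes an explicit sum over the distinct ribbon graphs they produce. For each fixed $\cg\in[\Gamma]_{\sim_2}$ I would partition the insertion set into fibres over the outcome, $\mathcal{I}(\sg,\cg)=\bigsqcup_{\rg}\mathcal{I}_\rg(\sg,\cg)$ with $\mathcal{I}_\rg(\sg,\cg)=\{\ib\in\mathcal{I}(\sg,\cg)\mid \cg\circ_\ib\sg=\rg\}$, each $\rg$ taken in the canonical labelling of Sec.~\ref{sec:ribbon graphs} so that distinct labels encode genuinely distinct graphs. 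Since $\hfd$ is contraction-closed (Thm.~\ref{theorem:Hopf algebra}), every outcome of inserting $\sg\in\hfd$ into $\cg\in\hfd$ again lies in $\hfd$, so the index set reduces to $\rg\in\hfd$ and the grafting operator becomes
\[
\graft^{\Gamma}(\sg)=\sum_{\cg\in[\Gamma]_{\sim_2}}\frac{1}{|\mathcal{I}(\sg,\cg)|}\sum_{\rg\in\hfd}\frac{|\mathcal{I}_\rg(\sg,\cg)|}{\maxf(\rg)}\,\rg .
\]
The whole proposition then reduces to the single combinatorial identity $|\mathcal{I}_\rg(\sg,\cg)|/|\mathcal{I}(\sg,\cg)|=\bij(\cg,\sg,\rg)/((\cg|\sg)\,|\sg|_\vee)$ for each $\cg$ and $\rg$.

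To establish this I would feed in the factorisation of Lemma~\ref{lem:factorization}, $|\mathcal{I}(\sg,\cg)|=\prod_{k\ge2}\nv^{(k)}_\cg!\,k^{\nv^{(k)}_\cg}$, and isolate inside $\mathcal{I}_\rg$ exactly those automorphisms of $\skl(\cg)\cong\res(\sg)$ that leave the outcome $\rg$ unchanged, since only those cancel against the normalisation. There are two structurally different sources of such redundancy. The first is the permutation of identical inserted pieces: writing $\sg=\prod_i\sg_i$ with $j_k$ single-vertex components of degree $k$, the $j_k!\,(\nv^{(k)}_\cg-j_k)!$ relabellings permuting these identical pieces among themselves and permuting the empty degree-$k$ slots of $\cg$ all yield the same $\rg$; quotienting the vertex-permutation factor $\prod_k\nv^{(k)}_\cg!$ by them leaves the genuine number of insertion places $(\cg|\sg)=\prod_k\binom{\nv^{(k)}_\cg}{j_k}$. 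The second is the internal cyclic symmetry of the inserted graph: among the per-vertex cyclic factors $\prod_k k^{\nv^{(k)}_\cg}$, those rotations coinciding with an actual automorphism of a component of $\sg$ are ineffective, and collecting these across all components is precisely $|\aut(\sg)|$, leaving the effective orientation count $|\sg|_\vee=|\aut(\res(\sg))|/|\aut(\sg)|$.

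With both trivial factors peeled off I would \emph{define} the residual count
\[
\bij(\cg,\sg,\rg)=\frac{|\mathcal{I}_\rg(\sg,\cg)|}{|\aut(\sg)|\,\prod_{k\ge2}j_k!\,(\nv^{(k)}_\cg-j_k)!} ,
\]
which is manifestly the number of genuinely inequivalent bijections $\res(\sg)\to\skl(\cg)$ realising $\rg$. Substituting this expression together with $(\cg|\sg)$, $|\sg|_\vee$ and $|\mathcal{I}(\sg,\cg)|$ into the reduced identity, every spurious factorial and cyclic factor cancels and the claimed expansion follows. The sum over $\cg\in[\Gamma]_{\sim_2}$ is what supplies, in the $k=2$ case, the edge-insertion places that are absent as honest bivalent vertices in the graphs of $\xe,\xv$; this is why $(\cg|\sg)$ for $k=2$ departs from the convention of \cite{Kreimer:2005rw}.

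I expect the main obstacle to be the bookkeeping in the second step: proving rigorously that the two redundancies are independent and jointly exhaust the stabiliser of $\rg$ inside $\mathcal{I}(\sg,\cg)$, so that the permutation redundancy collected in $(\cg|\sg)$ and the cyclic/internal redundancy collected in $|\sg|_\vee$ never overlap or double-count. This rests on the fact---already underlying Lemma~\ref{lem:factorization}---that a ribbon graph has no automorphisms beyond those visible on its vertex (polygon) structure, so that $|\aut(\sg)|$ genuinely captures all ineffective rotations and $\bij(\cg,\sg,\rg)$ is a well-defined non-negative integer, independent of the chosen representative insertion $\ib$. Checking this independence, and handling the degenerate edge-insertion case $\res(\sg)=\edgevertex$ carefully, are where I would expect to spend most of the effort.
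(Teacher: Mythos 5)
Your proposal follows essentially the same route as the paper's proof: the same fibre decomposition $\mathcal{I}(\sg,\cg)=\bigsqcup_\rg\mathcal{I}_\rg(\sg,\cg)$ over canonically labelled outcomes, the same intermediate formula using contraction-closedness of $\hfd$, the same two cancellation mechanisms (permutations of identical single-vertex components giving $(\cg|\sg)$, and internal automorphisms of $\sg$ giving $|\sg|_\vee$), and the same definition of $\bij(\cg,\sg,\rg)$ as the residual quotient. The only difference is presentational—you state the reduction to a single ratio identity explicitly and flag the independence/exhaustiveness of the two redundancies as the delicate point, which the paper leaves implicit—but the substance is identical.
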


\begin{example}
    Let $\cg = \fishright$ and $\sg = \vertex \, \fishright$.
    Then $(\cg|\sg)=\binom{2}{1} = 2$, that is inserting $\fishright$ on the left ($l$) or the right ($r$) vertex of $\cg$. 
    Furthermore, $|\sg|_\vee = |\vertex|_\vee\cdot|\fishright|_\vee = 1\cdot2$, i.e.~inserting $\fishright$ either horizontally ($h$) as it is or rotated 90 degrees vertically ($v$).
    Thus,
    \begin{align}
    \graft^{\fishright}(\vertex \, \fishright) 
    &= 
    \frac{1}{2\cdot2}\sum_{i=l,r}\sum_{j=h,v} \begin{pmatrix}
        \frac{1}{2}\fishrr & \frac{1}{2}\fishrr \\
        \fishlu & \fishru
    \end{pmatrix}_{ij} \\
    &= \frac{1}{4}\left(\fishrr + \fishru + \fishlu \right) \, 
    \end{align}
    since $\maxf(\fishrr)=2$.
    Analogously, 
    \[
    \graft^{\fishup}(\vertex \, \fishright)
    = \frac{1}{4}\left(\fishuu + \fishur + \fishdr \right)
    \]
    Since $\graft^\Gamma(\fishright)=\graft^\Gamma(\fishup)$ in general, we can combine this for $c^v_1=\fishright+\fishup$ to
    \[
    \graft^{c^v_1}(c^v_0 c^v_1) = \frac{1}{2}\left(\fishrr +\fishuu + \fishru + \fishur + \fishlu + \fishdr \right)
    = \frac{1}{2} c^v_2
    \]
\end{example}

\begin{lemma}\label{lem:number of vertices/edges}
Let $\rg\in\hfd$ with $k$ faces. Then the number of internal edges $\nei_\rg$ and vertices $\nv_\rg$ are     
    \begin{align*}
       \nei_\rg=
\begin{cases}
2k-1,& \text{for }\res(\rg)=\edgevertex\\
2k, & \text{for }\res(\rg)=\vertex
\end{cases}\, ,\qquad  \nv_\rg=
\begin{cases}
k,& \text{for }\res(\rg)=\edgevertex\\
k+1,& \text{for }\res(\rg)=\vertex
\end{cases}.
    \end{align*}
    \begin{proof}
        Use Euler's formula $\chi=2-2g-n=V-E+F$. 
        For $\rg\in\hfd$ we have $\chi_\rg=2-1=1$.
        The result follows since $E_\rg=\nei_\rg+4 =\frac{4\nv_\rg+4}{2}$ for $\res(\rg)=\vertex$, and $E_\rg=\nei_\rg+2 =\frac{4\nv_\rg+2}{2}$ for $\res(\rg)=\edgevertex$.
    \end{proof}
\end{lemma}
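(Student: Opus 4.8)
The plan is to pin down the two unknowns $\nv_\rg$ and $\nei_\rg$ from exactly two linear relations: the Euler characteristic of the surface carrying $\rg$ and a degree (handshake) count at its four-valent vertices. Since every $\rg\in\hfd$ has genus $g=0$ and a single boundary, the thickened ribbon graph is a disk, so its Euler characteristic is $\chi=2-2g-b=2-0-1=1$. Two unknowns and two independent linear equations will suffice.

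First I would set up Euler's formula in a bookkeeping that makes the external structure explicit. Writing the $n$ external half-edges as legs ending in univalent external vertices ($n=2$ for the $2$-point residue $\res(\rg)=\edgevertex$, and $n=4$ for the $4$-point residue $\res(\rg)=\vertex$), the disk carries $\nv_\rg+n$ vertices, $\nei_\rg+n$ edges, and $k$ internal faces; here the internal faces are precisely the cycles of $\sigma^{-1}\circ\ei$ on internal half-edges as in Def.~\ref{def:ribbon graph}, while the single boundary cycle is the outer circle of the disk and not an internal $2$-cell. The $n$ legs and their endpoints cancel between the vertex and edge counts, so Euler's relation $\chi=(\nv_\rg+n)-(\nei_\rg+n)+k$ collapses to the single clean identity $\nv_\rg-\nei_\rg+k=1$, valid uniformly for both residue types.

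Second I would record the handshake relation. Every vertex of $\rg$ is four-valent, so there are $4\nv_\rg$ half-edges incident to vertices; these are paired either into $\nei_\rg$ internal edges (using $2\nei_\rg$ half-edges) or into the $n$ external legs (using $n$ half-edges), giving $4\nv_\rg=2\nei_\rg+n$. This is exactly the content of the identities $E_\rg=\nei_\rg+n=\tfrac{4\nv_\rg+n}{2}$ in the statement, and it yields $\nei_\rg=2\nv_\rg-1$ for $\res(\rg)=\edgevertex$ and $\nei_\rg=2\nv_\rg-2$ for $\res(\rg)=\vertex$.

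Finally I would eliminate. Substituting each handshake relation into $\nv_\rg-\nei_\rg+k=1$ gives $\nv_\rg=k$, hence $\nei_\rg=2k-1$, for the $2$-point residue, and $\nv_\rg=k+1$, hence $\nei_\rg=2k$, for the $4$-point residue, which is the claim. The only genuinely delicate point is the bookkeeping in the second paragraph—deciding which half-edges, legs and regions count as vertices, edges and faces so that Euler's formula applies to the disk without double-counting the boundary; once that convention is fixed consistently with Def.~\ref{def:ribbon graph}, the rest is a two-line linear elimination. I would sanity-check the convention against the planar fish graph \eqref{eq:fish graph}, where $\nv_\rg=2$, $\nei_\rg=2$, $n=4$ and $k=1$ indeed satisfy both $\nv_\rg-\nei_\rg+k=1$ and $\nv_\rg=k+1$.
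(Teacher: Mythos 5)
Your proof is correct and follows essentially the same route as the paper: Euler's formula with $\chi=2-2g-b=1$ for genus zero and one boundary, combined with the handshake relation $4\nv_\rg=2\nei_\rg+n$ at the four-valent vertices, then a linear elimination. The only difference is that you spell out the cell-counting conventions (which the paper's terse proof leaves implicit), and your verification on the fish graph confirms the bookkeeping is consistent.
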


Combining the definitions of primitive graphs and the grafting operator yields \emph{the combinatorial DSE} in the renormalization Hopf algebra:

\begin{theorem}\label{thm:DSE} 
Let 
\vtwo{$\xe$ and $\xv$ be the formal power series in ribbon graphs with parameter $\alpha$ counting their faces, defined in \eqref{eq:seriesXe} and \eqref{eq:seriesXv} respectively, and} 
$Q=\frac{\xv}{(\xe)^2}$, where the inverse is symbolically understood as geometric series $\frac{1}{\xe}=1+(\edge-\xe)+(\edge-\xe)^2+...$ 
    Then, the system of combinatorial DSEs is
\begin{align}
    \xe\equiv X^{\edge}(\alpha) &= \edge 
    - \alpha \graft^{\edge}(Q\xe)
= \edge 
- \alpha (\graft^{\tadpoleup}+\graft^{\tadpoledown}) (Q\xe)
\label{2PcDSE}\\
\xv\equiv X^{\vertex}(\alpha) &= \vertex 
+ \!\! \sum_{\substack{\Gamma \text{ primitive} \\ \res(\Gamma)=\vertex}} \! \alpha^{\nf_\Gamma} \graft^{\Gamma}(Q^{\nf_\Gamma} \xv)
=\vertex 
+ \alpha (\graft^{\fishright} + \graft^{\fishup})(Q\xe) + ...
\nonumber 
\end{align}
where the set of primitive 4-point graphs is infinite.
\begin{proof}
    The proof follows from the definitions and the Lemmata above. 
    Each graph in $\xe$ and $\xv$ is either a primitive or can be constructed from insertions into primitive graphs. 
    The primitive graphs for the 2-point function are proven in Lemma \ref{lemma:primitives2}.
    Furthermore, it is shown in Lemma \ref{lemma:primitives4} that the set of primitive 4-point graphs is infinite.

    The weight $1/|\mathcal{I}(\sg,\cg)|$ in the definition of $\graft^\Gamma$, Def.~\ref{def:graftingOP}, exactly cancels multiplicities arising from the argument $Q^{\nf_\Gamma}X^\bullet$.
    According to Lemma~\ref{lem:number of vertices/edges} this argument has a factor $\xv$ for each 4-valent vertex in $\Gamma$ and a factor $1/\xe$ for each edge in $\cg\in\Gamma$ without bivalent vertices.
    Since these factors multiply commutatively, each monomial occurs with a multiplicity $\nv^{(2)}_\cg! \, \nv^{(4)}_\cg!$.
    Furthermore, each individual connected component $\sg$ occurs in $\xv$ and $\xe$ in $|\sg|_\vee$ rotated ways which however yield the same graph under $\graft^\Gamma$. 
    Together with Lemma~\ref{lem:factorization} and Prop.~\ref{prop:grafting expansion} this shows cancellation of multiplicities from the argument $Q^{\nf_\Gamma}X^\bullet$ with $|\mathcal{I}(\sg,\cg)|$ in the denominator of $\graft^\Gamma$.
    
    Additional multiplicities arise if a ribbon graph $\rg$ results from different insertions,  modulo the symmetries just discussed; since only insertions into primitive diagrams are allowed, these are exactly the number of subgraphs $\sg\sgr\rg$ with primitive cograph $\rg/\sg$ and this is cancelled by $1/\maxf$ in the definition of~$\graft^\Gamma$.    
\end{proof}
\end{theorem}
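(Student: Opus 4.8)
The plan is to show that the grafting construction on the right-hand side of each equation reproduces every ribbon graph appearing in $\xe$ and $\xv$ exactly once with coefficient $1$, so that both sides agree order by order in $\alpha$. I would split this into a \emph{coverage} statement (every graph contributing to $\xe$ or $\xv$ is either primitive or arises as an insertion $\cg\circ_\ib\sg$ into a primitive cograph, in the sense of Def.~\ref{def:insertion}) and a \emph{coefficient} statement (the combinatorial weights in $\graft^\Gamma$ collapse the total coefficient of each such graph to $1$).

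For coverage I would induct on the number of faces $\nf_\rg$. A graph $\rg\in\hfd$ that is not primitive contains, by definition, a proper subgraph in $\hfd\setminus\Res^*$; choosing a maximal such forest $\sg$ produces a primitive cograph $\Gamma=\rg/\sg$, and then $\rg$ is recovered as $\cg\circ_\ib\sg$ for a suitable $\cg\in[\Gamma]_{\sim_2}$ and $\ib\in\mathcal{I}(\sg,\cg)$. By Lemma~\ref{lemma:primitives2} the only primitive cographs with residue $\edgevertex$ are $\tadpoleup$ and $\tadpoledown$, which accounts for the single grafting term in the $2$-point equation; Lemma~\ref{lemma:primitives4} supplies the infinite set of primitive $4$-point cographs indexing the sum in the $4$-point equation. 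The structural input is Lemma~\ref{lem:number of vertices/edges}: a $k$-face graph has $k+1$ vertices and $2k$ edges (resp.\ $k$ and $2k-1$), so that the argument $Q^{\nf_\Gamma}\xr=(\xv)^{\nf_\Gamma}(\xe)^{-2\nf_\Gamma}\xr$ supplies exactly one factor $\xv$ per four-valent vertex of $\Gamma$ and one factor $1/\xe$ per edge, i.e.\ precisely the data needed to fill every insertion place.

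For the coefficients I would track the three independent sources of overcounting and cancel each against a factor in Def.~\ref{def:graftingOP}. First, since the factors of the monomial $Q^{\nf_\Gamma}\xr$ multiply commutatively, each way of distributing the $\nv^{(k)}_\cg$ subgraphs of degree $k$ over the like vertices of $\cg$ yields the same term $\nv^{(2)}_\cg!\,\nv^{(4)}_\cg!$ times; by Lemma~\ref{lem:factorization} this is exactly the permutation part of $|\mathcal{I}(\sg,\cg)|$ sitting in the denominator of $\graft^\Gamma$. Second, each connected component occurs in $\xe,\xv$ in $|\sg|_\vee$ rotated copies that produce the same inserted graph; via Prop.~\ref{prop:grafting expansion} these match the automorphism part of $|\mathcal{I}(\sg,\cg)|$. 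After these two cancellations each distinct insertion contributes weight $1/\maxf(\cg\circ_\ib\sg)$.

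The last and hardest step is the $1/\maxf$ cancellation, which is where overlapping divergences enter. A fixed target graph $\rg$ can be produced by several inequivalent insertions $\cg\circ_\ib\sg=\rg$ coming from different primitive cographs $\Gamma=\rg/\sg$; summed over all primitives on the right-hand side, the number of such insertions surviving the previous cancellations is precisely the number of subgraphs $\sg\sgr\rg$ in $\hfd$ with primitive cograph, namely $\maxf(\rg)$ by its definition in Def.~\ref{def:graftingOP}. Dividing by $\maxf(\rg)$ then normalises the total coefficient of $\rg$ to $1$. The delicate point I expect to be the main obstacle is establishing this equality rigorously in the presence of overlapping forests: one must verify that distinct maximal forests $\sg_1,\sg_2\sgr\rg$ correspond to genuinely non-equivalent insertions after quotienting by the symmetries already divided out, so that no forest is double-counted and none is missed, and in particular that only subgraphs lying in $\hfd$ (single-boundary components) are counted. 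This is exactly the subtlety the factor $1/\maxf$ is designed to handle — absent in the point-like case and the source of the discrepancy with \cite{Tanasa:2009hb} — and verifying it is what turns the formal bookkeeping into a proof.
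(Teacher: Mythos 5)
Your proposal is correct and follows essentially the same route as the paper's own proof: coverage via the primitives of Lemmata \ref{lemma:primitives2} and \ref{lemma:primitives4} together with the vertex/edge count of Lemma \ref{lem:number of vertices/edges}, then cancellation of the monomial multiplicities $\nv^{(2)}_\cg!\,\nv^{(4)}_\cg!$ and $|\sg|_\vee$ against $|\mathcal{I}(\sg,\cg)|$ via Lemma \ref{lem:factorization} and Prop.~\ref{prop:grafting expansion}, and finally the $1/\maxf$ normalisation of insertions with primitive cograph. The one point you flag as the remaining obstacle --- that distinct maximal forests in $\hfd$ correspond bijectively to non-equivalent insertions --- is asserted rather than verified in the paper as well, so your account matches its level of rigour.
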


\subsection{Subalgebra of loop orders}
An important property in the renormalization Hopf algebra is the subalgebra structure. For the 4D MFT model, the subalgebra will be studied in details which takes an important part in revealing the complexity of renormalizing this model. The following Proposition provides the subalgebra structure on the perturbative series.
\begin{proposition}\label{prop:coprodX}
Let $\bullet\in \{\vtwo{e\equiv\edge, v\equiv\vertex} \}$. Then, the following holds
    \begin{align}\label{coproductX}
        \cop(\xr)=\sum_{k\ge0} \xr \bigg(\frac{\xv}{(\xe)^{2}}\bigg)^k\otimes c^\bullet_k
    \end{align}
    \begin{proof}
        The Connes-Kreimer Hopf algebra considered here is a special case of the Hopf algebra of 2-graphs in \cite{Thurigen:2102}. 
        This implies the so-called central identity \cite[Thm. 1]{Thurigen:2102} which translates into our setting by the following. Note that we consider the $D=4$ MFT with 4-valent vertices. The renomalization Hopf algebra is 
        $
    \hfd=\langle {}^\opi\RG{0,1}{2}(\vertex)\cup {}^\opi\RG{0,1}{4}(\vertex) \rangle.
    $
    Then the central identity reads
    \begin{align}\label{coproductcentralidentity}
        \cop(\xr)=\sum_{\res(\rg)=\bullet}\frac{(\xv)^{\nv_\rg}}{(\xe)^{\nei_\rg}}\otimes \rg
    \end{align}
    where $\bullet\in\{\edge,\vertex\}\equiv\{e,v\}$ and $\nv_\rg,\nei_\rg$ 
    is the number of vertices or internal edges of $\rg$, respectively. 
    {These are expressed in the loop order $k$ according to Lemma~\ref{lem:number of vertices/edges}.}
The central identity \eqref{coproductcentralidentity} implies further that at each loop order $k$ and for any graph in the perturbative series $\xr$ the same factor appears on the left of the coproduct. 
Upon factorization
this yields $\sum_{\res(\rg)=\bullet,\nf_\rg=k}\rg=c_k^\bullet$ on the right. This derives the desired coproduct formulas.
    \end{proof}
\end{proposition}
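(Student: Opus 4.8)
The plan is to obtain Eq.~\eqref{coproductX} as a direct consequence of the central identity for the Hopf algebra of 2-graphs \cite[Thm.~1]{Thurigen:2102}, specialized to $\hfd$. Since $\hfd$ is a contraction-closed subalgebra of the 2-graph Hopf algebra, as established in Thm.~\ref{theorem:Hopf algebra}, that theorem applies verbatim and yields, for $\bullet\in\{e,v\}$,
\begin{equation*}
\cop(\xr)=\sum_{\res(\rg)=\bullet}\frac{(\xv)^{\nv_\rg}}{(\xe)^{\nei_\rg}}\otimes \rg \, .
\end{equation*}
The entire remaining content of the proposition is to recognise that the left-hand tensor factor is constant along each loop order and collapses to a single power of $Q=\xv/(\xe)^{2}$.

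The first step is to invoke Lemma~\ref{lem:number of vertices/edges} to replace $\nv_\rg$ and $\nei_\rg$ by their values in terms of the number of faces $k=\nf_\rg$. For $\bullet=e$ one has $(\nv_\rg,\nei_\rg)=(k,2k-1)$, so that $(\xv)^{\nv_\rg}/(\xe)^{\nei_\rg}=(\xv)^{k}/(\xe)^{2k-1}=\xe\,Q^{k}$; for $\bullet=v$ one has $(\nv_\rg,\nei_\rg)=(k+1,2k)$, so that $(\xv)^{\nv_\rg}/(\xe)^{\nei_\rg}=(\xv)^{k+1}/(\xe)^{2k}=\xv\,Q^{k}$. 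In both cases the left factor is exactly $\xr\,Q^{k}$. This is the crucial uniformisation: it no longer depends on the individual graph $\rg$ but only on its loop order.

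With this, the second step is to reorganise the sum over graphs by loop order. Since every $\rg$ with $\res(\rg)=\bullet$ and $\nf_\rg=k$ carries the same left factor $\xr\,Q^{k}$, it may be pulled out of the inner sum, which leaves $\sum_{\res(\rg)=\bullet,\,\nf_\rg=k}\rg=c^\bullet_k$ on the right. Summing over $k\ge0$ then gives
\begin{equation*}
\cop(\xr)=\sum_{k\ge0}\xr\,Q^{k}\otimes c^\bullet_k \, ,
\end{equation*}
which is \eqref{coproductX}; here $1/\xe$ is read as the geometric series fixed in Thm.~\ref{thm:DSE}, and the signs and powers of $\alpha$ carried by the defining series \eqref{eq:seriesXe}--\eqref{eq:seriesXv} are propagated routinely through the identity.

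The genuine obstacle is not this final algebraic manipulation but the rigidity on which it rests: the collapse to a single factor $Q^{k}$ works only because $\nv_\rg$ and $\nei_\rg$ are functions of $\nf_\rg$ alone. This is exactly the content of Lemma~\ref{lem:number of vertices/edges} and is special to the genus-zero, single-boundary, purely $4$-valent setting via Euler's formula; were graphs of equal loop order allowed to differ in their vertex or edge counts, the left tensor factor could not be factored out and the clean coproduct \eqref{coproductX} would fail. Verifying the central identity itself in the form above for $\hfd$---in particular that the projection $\ep$ deleting bivalent vertices and the restriction to single-boundary graphs leave only the monomials $(\xv)^{\nv_\rg}/(\xe)^{\nei_\rg}$, with no multi-trace contributions---is likewise delegated to \cite[Thm.~1]{Thurigen:2102} together with the contraction-closedness proved in Thm.~\ref{theorem:Hopf algebra}.
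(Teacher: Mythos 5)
Your proposal is correct and follows essentially the same route as the paper's own proof: invoke the central identity of \cite[Thm.~1]{Thurigen:2102} for the contraction-closed subalgebra $\hfd$, use Lemma~\ref{lem:number of vertices/edges} to write $(\nv_\rg,\nei_\rg)$ as functions of the loop order $k$ alone, and factor the now graph-independent left tensor factor $\xr\,Q^k$ out of the sum at each loop order so that the right side collapses to $c^\bullet_k$. Your explicit verification that $(\xv)^{k}/(\xe)^{2k-1}=\xe\,Q^{k}$ and $(\xv)^{k+1}/(\xe)^{2k}=\xv\,Q^{k}$, and your remark that the uniformisation hinges on Euler's formula in the genus-zero single-boundary setting, merely make explicit what the paper leaves implicit.
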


\begin{example}
Consider the coproduct of two-loop 2-point graphs. To save space we write out only the reduced part $\rcop(\rg)=\cop(\rg)-\rg\otimes\res(\rg)-\skl(\rg)\otimes\rg$:
\[
\rcop
\begingroup
\renewcommand*{\arraystretch}{2}
\begin{pmatrix}
\sunrise \\ \tadpoleuu \\ \tadpoleud \\ \tadpoledu \\ \tadpoledd
\end{pmatrix}
\endgroup
=
\begin{pmatrix}
\begin{matrix} \\\fishright\end{matrix} & 0 & \begin{matrix}\fishright\\ \end{matrix} & 0 \\
\fishup & \begin{matrix}\tadpoleup\\\vertex\end{matrix} & 0 & 0 \\
0 & \begin{matrix}\tadpoledown\\\vertex\end{matrix} & 0 & 0 \\
0 & 0 & 0 & \begin{matrix}\vertex\\\tadpoleup\end{matrix} \\
0 & 0 & \fishup & \begin{matrix}\vertex\\\tadpoledown\end{matrix}
\end{pmatrix}
\cdot_\otimes
\begingroup
\renewcommand*{\arraystretch}{1.5}
\begin{pmatrix} 
\tadpoleumv \\  \tadpoleume \\  \tadpoledmv \\ \tadpoledme
\end{pmatrix}
\endgroup \, ,
\]
Here we use a matrix notation with product $\cdot_\otimes$ referring to a matrix product in which the coefficients multiply as tensor product.
Furthermore we have distinguished the ribbon graphs on the right-hand side of the coproduct by indicating the new vertices resulting form non-trivial contraction by white vertices. 
Summing the rows, we find
\[
\rcop(\ce_2) = \cv_1 \otimes \ce_{1} + \cv_0 \ce_1 \otimes \ce_1 
= (\cv_1 + \cv_0 \ce_1)\otimes \ce_1 
= \pe_{2,1} \otimes \ce_1 
\]
with a degree-2 polynomial $\pe_{2,1} = \cv_1 + \cv_0 \ce_1$ on the left of the coproduct.
\end{example}

For fixed loop order, the left of the coproduct is always a polynomial in the loop coefficients $\ce_n$ or $\cv_n$ which we can give explicitly:
\begin{theorem}\label{thm:coproduct}
The coproduct of $c_{n}^{\bullet}$ for any $n\in\mathbb{N}$ and $\bullet\in \{\edge,\vertex\}\equiv\{e,v\}$ is given by
\begin{align}\label{coproductsc}
    \cop (c_n^{\bullet})=\sum_{k=0}^n P^\bullet_{n,k}\otimes c_{n-k}^{\bullet},
\end{align}
where $P^{\bullet}_{n,k}$ are polynomials of loop order $k$ in the generators $\cv_i, \ce_j$ of the form:\\
For $0<k<n$ we have 
\begin{align}
    \pv_{n,k} &=\sum_{j=0}^{k} \qv_{n-k+1,j} \, \qe_{2(n-k),k-j} \\ 
    \pe_{n,k} &=\sum_{j=0}^{k} \qv_{n-k,j} \, \qe_{2(n-k)-1,k-j} \, ,
\end{align}
where 
\begin{align*}
    \qv_{k,j}:=& \sum_{l_1+...+l_k=j} \cv_{l_1}...\cv_{l_k}\\
    \qe_{k,j}:=&\sum_{1s_1+2s_2+...+js_j=j}\frac{(s_1+...+s_j+k-1)!}{s_1!...s_j!(k-1)!}\prod_{i=1}^j(\ce_i)^{s_i}.
\end{align*}
The trivial cases are $P^{\bullet}_{n,n}(c)=c_n^{\bullet}$ and $P^{\bullet}_{n,0}(c)=\skl(c_n^{\bullet})$. 
\begin{proof}
    Take the coproduct formula of Proposition \ref{prop:coprodX} and restrict to a fixed $n$ loop order. Thus, we directly get the form of \eqref{coproductsc} with $P^\bullet_{n,k}$ the $k$'th loop order. 

    We define the following restriction to the $k$-loop order of an arbitrary perturbative series $Y$ by $(Y)_k$. Then, from \eqref{coproductX} we conclude
    \begin{align}\label{cne}
        \cop (\ce_n) &= \sum_{k=0}^n\underbrace{\bigg(\frac{(\xv)^{n-k}}{(\xe)^{2(n-k)-1}}\bigg)_k}_{\pe_{n,k}} \otimes \, \ce_{n-k},\\\label{cnv}
        \cop (\cv_n) &= \sum_{k=0}^n\underbrace{\bigg(\frac{(\xv)^{n-k+1}}{(\xe)^{2(n-k)}}\bigg)_k}_{\pv_{n,k}} \otimes \, \cv_{n-k}.
    \end{align}
    This amounts to compute the following coefficients
    \begin{align*}
        \qv_{k,j}:=\big((\xv)^k\big)_j,\qquad \qe_{k,j}:=\bigg(\frac{1}{(\xe)^k}\bigg)_j,
    \end{align*}
    where again, the index $j$ restricts to the $j$th loop order. The coefficient $\qv_{k,j}$ is straightforwardly given by \cite{Bergbauer:2005fb} 
    \begin{align*}
        \qv_{k,j}=\sum_{l_1+...+l_k=j} \cv_{l_1} ... \cv_{l_k}.
    \end{align*}
    The coefficient $\qe_{k,j}$ is slightly more involved but generalizing \cite{Broadhurst:2000dq} we can still derive it explicitly via Faa di Bruno's formula:
    \begin{align*}
        \qe_{k,j}=&[\alpha^j]\frac{1}{(\xe)^k}=\frac{1}{j!}\frac{\partial^j}{\partial \alpha^j}\frac{1}{(\xe)^k}\\
        =&\frac{1}{j!}\sum_{1s_1+2s_2+...+js_j=j}\frac{j!}{s_1!...s_j!}\frac{\partial^{s_1+...+s_j}}{\partial x^{s_1+...+s_j}}\frac{1}{(1-x)^k}\bigg\vert_{x=0}\prod_{i=1}^j(\frac{1}{i!}\frac{\partial^i \xe}{\partial \alpha^i})^{s_i}\\
        =&\sum_{1s_1+2s_2+...+js_j=j}\frac{(s_1+...+s_j+k-1)!}{s_1!...s_j!(k-1)!}\prod_{i=1}^j(\ce)^{s_i}.
    \end{align*}
    Inserting the explicit formula of $\qv_{k,j}$ and $\qe_{k,j}$ into \eqref{cne} and \eqref{cnv} finishes the proof.
\end{proof}

\end{theorem}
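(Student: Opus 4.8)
The plan is to obtain the closed formula by extracting a single loop order from the subalgebra coproduct of Proposition~\ref{prop:coprodX}. Since $\cop$ is graded by the face/loop counting parameter $\alpha$ — contraction preserves the number of faces, as recorded in Lemma~\ref{lem:number of vertices/edges} — each term of $\cop(\cv_n)$ or $\cop(\ce_n)$ must split the total loop order $n$ additively between the two tensor legs. Writing $\xr=\sum_{j\ge0}\alpha^j c^\bullet_j$ and reading off the loop-order-$n$ component of $\cop(\xr)=\sum_{m\ge0}\xr\,(\xv/(\xe)^2)^m\otimes c^\bullet_m$, the right leg $c^\bullet_m$ carries loop order $m$, so the matching left leg is the loop-order-$(n-m)$ part of $\xr\,(\xv/(\xe)^2)^m$. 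Substituting $k=n-m$ produces exactly the shape asserted in \eqref{coproductsc}, with
$$P^\bullet_{n,k}=\Big(\xr\,\big(\tfrac{\xv}{(\xe)^2}\big)^{n-k}\Big)_k,$$
where $(\,\cdot\,)_k$ denotes the loop-order-$k$ part. Specializing $\bullet$, the left leg collapses to $(\xv)^{n-k}/(\xe)^{2(n-k)-1}$ for $\bullet=e$ and to $(\xv)^{n-k+1}/(\xe)^{2(n-k)}$ for $\bullet=v$.

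Next I would reduce each $P^\bullet_{n,k}$ to the two families of building blocks $\qv_{K,j}=((\xv)^K)_j$ and $\qe_{K,j}=(1/(\xe)^K)_j$. Because the loop grading is multiplicative, the Cauchy product at fixed order $k$ factorizes the positive power of $\xv$ against the negative power of $\xe$,
$$P^e_{n,k}=\sum_{j=0}^k \qv_{n-k,j}\,\qe_{2(n-k)-1,k-j},\qquad P^v_{n,k}=\sum_{j=0}^k \qv_{n-k+1,j}\,\qe_{2(n-k),k-j},$$
which is precisely the claimed form. It then remains only to evaluate the two coefficient families.

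The coefficient $\qv_{K,j}$ is immediate: expanding $(\xv)^K$ and collecting the order-$j$ term gives the multinomial sum $\sum_{l_1+\dots+l_K=j}\cv_{l_1}\cdots\cv_{l_K}$, exactly as in \cite{Bergbauer:2005fb}. The genuinely technical step — and the place where I expect the real work to sit — is $\qe_{K,j}$, the $j$-th loop coefficient of a \emph{negative} power of the formal series $\xe$, whose zeroth order is the bare edge $\edge$. Here I would apply Faà di Bruno's formula to the composition $u\mapsto(1-u)^{-K}$ with the reduced series obtained from $\xe=\edge-(\text{higher orders})$, generalizing the computation of \cite{Broadhurst:2000dq}: the inner derivatives contribute $\prod_i(\ce_i)^{s_i}$ over partitions $\sum_i i\,s_i=j$, while the derivatives of $(1-u)^{-K}$ evaluated at $0$ supply the rising factorial $(s_1+\dots+s_j+K-1)!/(K-1)!$, producing the stated multinomial-with-repetition weight. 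Care is needed to track the normalization $\xe|_{\alpha=0}=\edge$ and the sign conventions absorbed into the $\ce_i$ so that every factor comes out positive. Finally, the boundary cases fall out directly: $k=n$ forces the empty power and yields $P^\bullet_{n,n}=(\xr)_n=c^\bullet_n$, while for $k=0$ only the loop-order-zero part survives, which replaces every $\xv$ by $\vertex$ and every $\xe$ by $\edge$ and hence reproduces the skeleton $\skl(c^\bullet_n)$ via Lemma~\ref{lem:number of vertices/edges}.
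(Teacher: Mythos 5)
Your proposal is correct and takes essentially the same route as the paper's own proof: you extract the fixed loop order from Proposition~\ref{prop:coprodX}, identify the left tensor leg as $\left(\xr\,(\xv/(\xe)^2)^{n-k}\right)_k$, factorize it into the building blocks $\qv_{K,j}=((\xv)^K)_j$ and $\qe_{K,j}=(1/(\xe)^K)_j$, and evaluate these via the multinomial expansion and Fa\`a di Bruno's formula for $(1-u)^{-K}$, exactly as the paper does (including the citations to the same prior computations). The treatment of the trivial cases $k=n$ and $k=0$ also coincides, so nothing further is needed.
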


\begin{example}
Let us consider $n=4$ loop coefficient of the vertex series $\xv$.
We get for $k=3$
\begin{align*}
    \pv_{4,3}=&\sum_{\substack{l_1+l_{2}=3\\ l_i\geq 0}}\cv_{l_1}\cv_{l_2} +\sum_{\substack{l_1+l_{2}=2\\ l_i\geq 0}}\cv_{l_1}\cv_{l_2}\sum_{1s_1=1} (\ce_1)^{s_1}\frac{(s_1+1)!}{s_1!1!}\\
    &+\sum_{\substack{l_1+l_{2}=1\\ l_i\geq 0}}\cv_{l_1}\cv_{l_2}\sum_{1s_1+2s_2= 2} (\ce_1)^{s_1}(\ce_2)^{s_2}\frac{(s_1+s_2+1)!}{s_1!s_2!1!}\\
    &+\sum_{\substack{l_1+l_{2}=0\\ l_i\geq 0}}\cv_{l_1}\cv_{l_2}\sum_{1s_1+2s_2+3s_3=3} (\ce_1)^{s_1}(\ce_2)^{s_2}(\ce_3)^{s_3}\frac{(s_1+s_2+s_3+1)!}{s_1!s_2!s_3!1!}\\
    =& \, 2\cv_0\cv_3 +2\cv_1\cv_2 +(2\cv_0\cv_2+(\cv_1)^2)2\ce_1 +2\cv_0 \cv_1(2 \ce_2+ 3(\ce_1)^2) \\
    &+ (\cv_0)^2\left(2\ce_3+6\ce_1\ce_2+4(\ce_1)^3\right).
\end{align*}

For $k=2$, we have
\begin{align*}
    \pv_{4,2}=&\sum_{\substack{l_1+l_{2}+l_3=2\\ l_i\geq 0}}\cv_{l_1}\cv_{l_2}\cv_{l_3} +\sum_{\substack{l_1+l_{2}+l_3=1\\ l_i\geq 0}}\cv_{l_1}\cv_{l_2}\cv_{l_3}\sum_{1s_1= 1} (\ce_1)^{s_1}\frac{(s_1+3)!}{s_1!3!}\\
    &+\sum_{\substack{l_1+l_{2}+l_3=0\\ l_i\geq 0}}\cv_{l_1}\cv_{l_2}\cv_{l_3}\sum_{1s_1+2s_2= 2} (\ce_1)^{s_1}(\ce_2)^{s_2}\frac{(s_1+s_2+3)!}{s_1!s_2!3!}\\
    =&3(\cv_0)^2\cv_2 +3\cv_0(\cv_1)^2+3(\cv_0)^2\cv_1 \cdot 4\ce_1 + (\cv_0)^2\left(4\ce_2+10(\ce_1)^2\right).
\end{align*}
For $k=1$, we find
\begin{align*}
    \pv_{4,1}=&\sum_{\substack{l_1+l_{2}+l_3+l_4=1\\ l_i\geq 0}}\cv_{l_1}\cv_{l_2}\cv_{l_3}\cv_{l_4} +\sum_{\substack{l_1+l_{2}+l_3+l_4=0\\ l_i\geq 0}}\cv_{l_1}\cv_{l_2}\cv_{l_3}\cv_{l_4}\sum_{1s_1= 1} (\ce_1)^{s_1}\frac{(s_1+5)!}{5!}\\
    =&4(\cv_0)^3\cv_1+6(\cv_0)^4 \ce_1.
\end{align*}
This gives the polynomials in the coproduct of $\cv_4$,
\begin{align*}
    \cop (\cv_4)=\sum_{k=0}^4 \pv_{4,k}(c)\otimes \cv_{4-k}.
\end{align*}
\end{example}

\begin{proposition}\label{prop:coproductmonomial}
    Let us denote by $(Y)_k$ the $k$th order of some perturbative series $Y$. Let $f(\xv,\xe)$ be some monomial of $\xv$ and $\frac{1}{\xe}$, then the more general Hopf subalgebra structure holds
    \begin{align*}
        \cop\left(f(\xv,\xe)\right)=\sum_kf(\xv,\xe)\left(\frac{\xv}{(\xe)^2}\right)^k\otimes f(\xv,\xe)_k \, .
    \end{align*}
    \begin{proof}
        By assumption $f(\xv,\xe)$ has the form $\frac{(\xv)^{n_1}}{(\xe)^{n_2}}$ for some $n_i\in \mathbb{Z}_{\geq 0}$. Note that the trivial zero'th order is given by one, or more precisely $\left(\frac{(\xv)^{n_1}}{(\xe)^{n_2}}\right)_0 = \left(\vertex\right)^{n_1}$. 

        The proof essentially follows from Proposition \ref{prop:coprodX} and the fact that a tensor space behaves under a product in the algebra as 
        \begin{align*}
            (a\otimes b)(c\otimes d)=(ac\otimes bd).
        \end{align*}
        This means that the coproduct of a product of graphs $\rg=\rg_1\rg_2$ is
        \begin{align*}
            &\cop(\rg)=\cop(\rg_1\rg_2)=\cop(\rg_1)\cop(\rg_2)=\sum_{k_1}(a_{k_1}\otimes \rg_1/a_{k_1})\sum_{k_2}(b_{k_2}\otimes \rg_2/b_{k_2})\\
            =&\sum_{k_1,k_2}a_{k_1}b_{k_2}\otimes \rg_1/a_{k_1}\rg_2/b_{k_2}.
        \end{align*}
        Extrapolating this to any $m$-folded product of graphs $\rg=\rg_1...\rg_m$ together with Proposition \ref{prop:coprodX} proves the assertion for any monomial of the form $f(\xv,\xe)=\frac{(\xv)^{n_1}}{(\xe)^{n_2}}$, where $\frac{1}{\xe}$ is understood as the geometric series 
        $\frac{1}{\xe}=1+(\edge-\xe)+(\edge-\xe)^2+...$ . 
        Performing the computation on the level of coefficients can be very ugly, but it is for instance performed for some specific Hopf algebras in \cite{Bergbauer:2005fb}.
    \end{proof}
\end{proposition}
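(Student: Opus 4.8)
The plan is to exploit that $\cop$ is an algebra homomorphism (part of the bialgebra structure of Theorem~\ref{theorem:Hopf algebra}) and that Proposition~\ref{prop:coprodX} already establishes the desired identity for the two generating series $\xe$ and $\xv$. Writing $Q=\xv/(\xe)^2$ and denoting by $(Y)_k$ the order-$k$ coefficient of a formal series $Y$ in $\alpha$, I would single out the class $\mathcal{M}$ of those series $f$ (with coefficients in the completed algebra in which $Q$ and $1/\xe$ live) that satisfy $\cop(f)=\sum_k f\,Q^k\otimes (f)_k$. Proposition~\ref{prop:coprodX} says precisely that $\xe,\xv\in\mathcal{M}$, and the assertion to be proved is exactly that every monomial $f(\xv,\xe)=(\xv)^{n_1}(1/\xe)^{n_2}$ lies in $\mathcal{M}$. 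Since such a monomial is a finite product of the factors $\xv$ and $1/\xe$, it suffices to show that $\mathcal{M}$ is closed under multiplication and that it contains $1/\xe$.

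The central step is a Cauchy-product lemma: if $f,g\in\mathcal{M}$ then $fg\in\mathcal{M}$. First I would apply $\cop(fg)=\cop(f)\cop(g)$ together with the tensor rule $(a\otimes b)(c\otimes d)=ac\otimes bd$ to obtain $\cop(fg)=\sum_{k,l}(fQ^k)(gQ^l)\otimes (f)_k (g)_l$. Using commutativity of $\hfd$ the left factor collapses to $fg\,Q^{k+l}$, and regrouping the double sum by $m=k+l$ gives $\cop(fg)=\sum_m fg\,Q^m\otimes\big(\sum_{k+l=m}(f)_k(g)_l\big)$. The inner sum is exactly the order-$m$ coefficient $(fg)_m$ of the product series, so $fg\in\mathcal{M}$. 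By induction this already yields $(\xv)^{n_1}\in\mathcal{M}$, leaving only the reciprocal factor.

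To place $1/\xe$ in $\mathcal{M}$ one cannot simply expand the geometric series $1/\xe=\sum_{j\ge0}(\edge-\xe)^j$ and apply the lemma term by term, because $\mathcal{M}$ is not closed under addition (the cross terms obstruct it). Instead I would argue that $\cop$, being an algebra homomorphism into the completed algebra $\hfd\otimes\hfd$, sends the inverse of $\xe$ to the inverse of $\cop(\xe)$; here $1/\xe$ is the genuine two-sided inverse of $\xe$ once $\edge$ is read as the effective unit of the propagator sector, so that $(1/\xe)_0=\one$ and $\xe\cdot(1/\xe)=\edge$, consistent with the stated zeroth order $(f)_0=(\vertex)^{n_1}$. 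It then remains to identify $\cop(\xe)^{-1}$ with the candidate $B:=\sum_l (1/\xe)Q^l\otimes (1/\xe)_l$. Multiplying $B$ by $\cop(\xe)=\sum_k \xe Q^k\otimes (\xe)_k$ exactly as in the Cauchy-product lemma, using $\xe\cdot(1/\xe)=\edge$ on the left and $\sum_{k+l=m}(\xe)_k(1/\xe)_l=(\xe\cdot 1/\xe)_m=\one\,\delta_{m0}$ on the right, gives $B\cdot\cop(\xe)=\one\otimes\one$. As $\cop(\xe)$ has invertible leading order $\one\otimes\one$, its inverse is unique, whence $\cop(1/\xe)=B$ and $1/\xe\in\mathcal{M}$.

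Combining the two closure properties, any monomial $(\xv)^{n_1}(1/\xe)^{n_2}$ is a finite product of elements of $\mathcal{M}$ and therefore lies in $\mathcal{M}$, which is precisely the claimed formula; the explicit bookkeeping of the coefficients $\qv_{k,j}$ and $\qe_{k,j}$ is not required for the statement and can be left to the reader as in Theorem~\ref{thm:coproduct}. I expect the main obstacle to be foundational rather than combinatorial: making rigorous the completed algebra (or localization) in which the geometric series $1/\xe$, the quotient $Q$, and the infinite sums $\sum_k f\,Q^k\otimes(f)_k$ all live, and checking that every regrouping converges in the $\alpha$-adic sense, i.e.\ that at each fixed order in $\alpha$ only finitely many terms contribute. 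In particular one must fix that $\edge$ acts as the unit for these series, so that the statement ``$\cop$ preserves inverses'' is legitimate; once this framework is in place, the computations of the two preceding paragraphs are short and formal.
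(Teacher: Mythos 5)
Your proof is correct and rests on the same two pillars as the paper's: multiplicativity of $\cop$ (the bialgebra property) and Proposition~\ref{prop:coprodX}, with the Cauchy-product regrouping $\sum_{k,l}(fQ^k)(gQ^l)\otimes(f)_k(g)_l=\sum_m fgQ^m\otimes(fg)_m$ playing the role of the paper's computation for $\rg=\rg_1\rg_2$. Where you genuinely depart from the paper is the treatment of the factor $1/\xe$. The paper disposes of it in a single clause --- the $m$-fold product argument is ``extrapolated'' to the monomial, ``where $\frac{1}{\xe}$ is understood as the geometric series'' --- and defers the resulting ``very ugly'' coefficient bookkeeping to \cite{Bergbauer:2005fb}. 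Read naively, as a termwise application of the product rule to $\sum_j(\edge-\xe)^j$, that route does not go through, for exactly the reason you identify: the class $\mathcal{M}$ of series satisfying $\cop(f)=\sum_k fQ^k\otimes(f)_k$ is closed under products but not under sums, so the cross terms obstruct a term-by-term argument. Your replacement --- $\cop$ maps inverses to inverses, the candidate $B=\sum_l(1/\xe)Q^l\otimes(1/\xe)_l$ inverts $\cop(\xe)$ by the same Cauchy-product computation, and the inverse is unique because $\cop(\xe)$ has invertible leading order --- is clean and actually closes this gap, so your write-up is, if anything, the more rigorous of the two. What it costs is exactly what you flag as foundational: one needs a completion/localization of $\hfd$ in which $Q$, $1/\xe$ and the infinite tensor sums live and to which $\cop$ extends continuously, and one must identify the residues $\edge$, $\vertex$ with the unit $\one$, since otherwise $\xe\cdot(1/\xe)=\one$ fails and your two statements ``$\xe\cdot(1/\xe)=\edge$'' and ``$(\xe\cdot(1/\xe))_m=\one\,\delta_{m0}$'' would be incompatible. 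These identifications are implicitly assumed by the paper as well (its geometric-series definition of $1/\xe$ and the group-like residues already force them), so this is not a defect of your argument relative to the paper's, merely a shared hypothesis you make explicit.
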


One specific example of this proposition is of primary interest, which the coproduct of the argument of the grafting operator in the combinatorial DSE. This reads
\begin{lemma}\label{lem:coprodPnk}
Let $\bullet\in \{\vertex,\edge\}$.
    The coproduct of $\xr \left(\frac{\xv}{(\xe)^2}\right)^n$ reads
    \begin{align*}
        \cop\bigg(\xr \left(\frac{\xv}{(\xe)^2}\right)^n\bigg)=\sum_k\xr \left(\frac{\xv}{(\xe)^2}\right)^{n+k}\otimes \bigg[\xr \left(\frac{\xv}{(\xe)^2}\right)^n\bigg]_k
    \end{align*}
    \begin{proof}
        Insert $\xr \left(\frac{\xv}{(\xe)^2}\right)^n$ into Proposition \ref{prop:coproductmonomial} and the left hand side of the coproduct summarises as $\xr\left(\frac{\xv}{(\xe)^2}\right)^n \left(\frac{\xv}{(\xe)^2}\right)^k=\xr\left(\frac{\xv}{(\xe)^2}\right)^{n+k}$.
    \end{proof}
\end{lemma}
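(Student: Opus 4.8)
The plan is to derive Lemma~\ref{lem:coprodPnk} as a direct specialization of Proposition~\ref{prop:coproductmonomial}. The essential point is that the argument $f:=\xr\left(\frac{\xv}{(\xe)^2}\right)^n$ is itself a monomial in $\xv$ and $1/\xe$ of the admissible form $\frac{(\xv)^{n_1}}{(\xe)^{n_2}}$ with $n_1,n_2\in\mathbb{Z}_{\ge0}$: for $\bullet=\vertex$ it is $\frac{(\xv)^{n+1}}{(\xe)^{2n}}$, and for $\bullet=\edge$ with $n\ge1$ it is $\frac{(\xv)^{n}}{(\xe)^{2n-1}}$. First I would verify this bookkeeping of exponents, so that Proposition~\ref{prop:coproductmonomial} is applicable to $f$.

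Substituting $f$ into Proposition~\ref{prop:coproductmonomial} gives
\[
\cop(f)=\sum_k f\left(\frac{\xv}{(\xe)^2}\right)^k\otimes [f]_k ,
\]
and it only remains to simplify the two tensor factors. On the left, the powers of $Q=\frac{\xv}{(\xe)^2}$ combine, $f\left(\frac{\xv}{(\xe)^2}\right)^k=\xr\left(\frac{\xv}{(\xe)^2}\right)^{n}\left(\frac{\xv}{(\xe)^2}\right)^k=\xr\left(\frac{\xv}{(\xe)^2}\right)^{n+k}$, which is precisely the left factor claimed. On the right, $[f]_k=\left[\xr\left(\frac{\xv}{(\xe)^2}\right)^n\right]_k$ is already the asserted term. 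This reproduces the formula of the lemma.

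The one corner case not literally covered is $\bullet=\edge$, $n=0$, where $2n-1=-1$ is negative; there $f=\xe$ and the claim is simply $\cop(\xe)=\sum_k\xe\left(\frac{\xv}{(\xe)^2}\right)^k\otimes c^e_k$, i.e.\ Proposition~\ref{prop:coprodX}. A cleaner route that sidesteps the case distinction altogether is to use multiplicativity of the coproduct: write $\cop(f)=\cop(\xr)\,\cop\!\left(\left(\frac{\xv}{(\xe)^2}\right)^n\right)$, apply Proposition~\ref{prop:coprodX} to $\cop(\xr)$ and Proposition~\ref{prop:coproductmonomial} to the genuine monomial $\left(\frac{\xv}{(\xe)^2}\right)^n$, multiply the two results via $(a\otimes b)(c\otimes d)=ac\otimes bd$, and collect by total order $k$ using the Cauchy product $\left[\xr\left(\frac{\xv}{(\xe)^2}\right)^n\right]_k=\sum_{i+j=k}c^\bullet_i\left[\left(\frac{\xv}{(\xe)^2}\right)^n\right]_j$. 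The proof carries no analytic content and is pure formal-power-series bookkeeping in $\alpha$; the only point deserving care---and hence the only ``obstacle''---is using the geometric-series definition $\frac{1}{\xe}=1+(\edge-\xe)+(\edge-\xe)^2+\dots$ consistently, so that all products above are well-defined order by order. Once this convention is fixed there is no genuine difficulty.
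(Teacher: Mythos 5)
Your proof is correct and takes essentially the same route as the paper: insert $\xr\left(\frac{\xv}{(\xe)^2}\right)^n$ into Proposition~\ref{prop:coproductmonomial} and combine the powers of $\frac{\xv}{(\xe)^2}$ in the left tensor factor. Your additional exponent bookkeeping and the treatment of the corner case $\bullet=e$, $n=0$ (falling back on Proposition~\ref{prop:coprodX}, since $\xe$ itself is not literally of the form $\frac{(\xv)^{n_1}}{(\xe)^{n_2}}$ with $n_2\ge 0$) address a detail the paper's one-line proof silently glosses over, but change nothing essential.
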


\subsection{Hochschild 1-cocycle}
The Hochschild 1-cocycle property is a direct consequence of Theorem \ref{thm:coproduct} by acting with the coproduct on the Dyson-Schwinger equation
\begin{theorem}\label{Thm:Hochschild}
The grafting operator is \vtwo{a} Hochschild 1-cocycle, i.e.
\begin{align}\label{Hochschild}
    \cop\graft^\vtwo{\bullet}=\graft^\vtwo{\bullet}\otimes\bullet + (\id\otimes\graft^\vtwo{\bullet})\cop
\end{align}
where $\displaystyle \graft^\vtwo{\bullet}=\sum_{\substack{\Gamma\, primitive\\ \res(\Gamma)=\bullet}}\graft^\Gamma$ with fixed $\bullet\in \{\edge,\vertex\}$ and $\graft^\Gamma$ acts on $Q^{F_\Gamma}\xr$ with $F_\Gamma$ the number of faces of the primitive graph $\Gamma$.
\begin{proof}
From the combinatorial DSE \eqref{2PcDSE} 
we have 
    \begin{align*}
        \xr= \bullet \pm \sum_{\substack{\Gamma\, primitive\\ \res(\Gamma)=\bullet}}B_+^\Gamma\bigg(\xr \left(\frac{\xv}{(\xe)^2}\right)^{F_\Gamma}\bigg).
    \end{align*}
    Let now the identity \eqref{Hochschild} act on $\xr \left(\frac{\xv}{(\xe)^2}\right)^n$. The lhs reads
    \begin{align*}
        \cop \sum_{\substack{\Gamma\, primitive\\ \res(\Gamma)=\bullet}}B_+^\Gamma\bigg(\xr \left(\frac{\xv}{(\xe)^2}\right)^{F_\Gamma}\bigg)
        =\cop \xr
        =\sum_k\xr \left(\frac{\xv}{(\xe)^2}\right)^k\otimes c_k^\bullet.
    \end{align*}
    The first term on the rhs of \eqref{Hochschild} reads 
    \begin{align*}
        \sum_{\substack{\Gamma\, primitive\\ \res(\Gamma)=\bullet}}B_+^\Gamma\bigg(\xr \left(\frac{\xv}{(\xe)^2}\right)^{F_\Gamma}\bigg)\otimes \bullet=\xr\otimes \bullet.
    \end{align*}
    The second term reads by Lemma \ref{lem:coprodPnk}
    \begin{align*}
        &(\id\otimes \sum_{\substack{\Gamma\, primitive\\ \res(\Gamma)=\bullet}}B_+^\Gamma)\cop\bigg( \xr \left(\frac{\xv}{(\xe)^2}\right)^{F_\Gamma}\bigg)\\
        =&(\id\otimes \sum_{\substack{\Gamma\, primitive\\ \res(\Gamma)=\bullet}}B_+^\Gamma)\bigg(\sum_k\xr \left(\frac{\xv}{(\xe)^2}\right)^{F_\Gamma+k}\otimes \bigg[\xr \left(\frac{\xv}{(\xe)^2}\right)^{F_\Gamma}\bigg]_k\bigg)\\
        =&\sum_{k,n} \xr \left(\frac{\xv}{(\xe)^2}\right)^{n+k}\otimes \sum_{\substack{\Gamma\, primitive\\ \res(\Gamma)=\bullet\\ F_\Gamma=n}}B_+^\Gamma\bigg(\bigg[\xr \left(\frac{\xv}{(\xe)^2}\right)^n\bigg]_k\bigg)\\
        =&\sum_{l\geq 1}\xr \left(\frac{\xv}{(\xe)^2}\right)^{l}\otimes c_l^\bullet,
    \end{align*}
    where in the last step we have used 
    \[
    \sum_{n}\sum_{\substack{\Gamma\, primitive\\ \res(\Gamma)=\bullet\\ F_\Gamma=n}}B_+^\Gamma\bigg(\bigg[\xr \left(\frac{\xv}{(\xe)^2}\right)^n\bigg]_k\bigg)=c_{n+k}^\bullet
    \]
    since the primitive graph is of loop order $n$ and the inserted graph of loop order $k$ we have to sum over all $n,k$ to generate $c_{n+k}$ due to the combinatorial DSE \eqref{2PcDSE}. 
\end{proof}
\end{theorem}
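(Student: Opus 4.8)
The plan is to verify the cocycle identity as an operator equation by evaluating both sides on exactly the arguments that occur in the combinatorial Dyson--Schwinger equations, namely on $\xr\left(\frac{\xv}{(\xe)^2}\right)^n$, and to reduce everything to the coproduct formula of Proposition~\ref{prop:coprodX} together with the DSE of Theorem~\ref{thm:DSE}. First I would recall from the combinatorial DSE~\eqref{2PcDSE} that $\xr=\bullet\pm\sum_{\Gamma}\graft^\Gamma\bigl(\xr\left(\frac{\xv}{(\xe)^2}\right)^{F_\Gamma}\bigr)$, where the sum runs over primitive $\Gamma$ with $\res(\Gamma)=\bullet$. Hence the total operator $\graft^\bullet=\sum_\Gamma\graft^\Gamma$, applied to the family $\{\left(\frac{\xv}{(\xe)^2}\right)^{F_\Gamma}\xr\}$, reproduces $\xr$ up to the leading term $\bullet$; this is precisely the domain on which $\graft^\bullet$ is defined, so the identity need only be checked there.

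Second, I would compute the left-hand side $\cop\,\graft^\bullet$ on this family. By the DSE it equals $\cop\,\xr$, and Proposition~\ref{prop:coprodX} then gives directly
\[
\cop\,\xr=\sum_{k\geq0}\xr\left(\frac{\xv}{(\xe)^2}\right)^k\otimes c_k^\bullet .
\]
Third, I would treat the two terms on the right-hand side separately. The term $\graft^\bullet\otimes\bullet$ contributes $\xr\otimes\bullet$, which matches the $k=0$ summand on the left since $c_0^\bullet=\bullet$. For $(\id\otimes\graft^\bullet)\cop$, I would apply the coproduct to each argument $\left(\frac{\xv}{(\xe)^2}\right)^{F_\Gamma}\xr$ via Lemma~\ref{lem:coprodPnk}, then act with $\id\otimes\graft^\bullet$ on the right tensor factor. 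This produces a double sum over the internal loop order $k$ and the number of faces $n=F_\Gamma$ of the primitive.

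Fourth, and this is the heart of the argument, I would collapse the double sum by invoking the DSE once more, establishing
\[
\sum_{n}\sum_{\substack{\Gamma\ \mathrm{primitive},\ \res(\Gamma)=\bullet\\ F_\Gamma=n}}\graft^\Gamma\Bigl(\bigl[\xr\left(\tfrac{\xv}{(\xe)^2}\right)^n\bigr]_k\Bigr)=c_{n+k}^\bullet,
\]
i.e. grafting the $k$-loop part of the DSE argument onto all primitives of loop order $n$ reproduces the full loop-$(n+k)$ coefficient of $\xr$. Summing over $n,k$ yields $\sum_{l\geq1}\xr\left(\frac{\xv}{(\xe)^2}\right)^l\otimes c_l^\bullet$, which together with the $k=0$ contribution reassembles the left-hand side exactly. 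The main obstacle will be proving this collapse identity rigorously: it amounts to re-deriving the combinatorial DSE loop order by loop order, and the subtlety lies entirely in the grafting weights. Each $\graft^\Gamma$ carries the factors $1/|\mathcal{I}(\sg,\cg)|$ and $1/\maxf$ from Definition~\ref{def:graftingOP}, and I would have to confirm that, when applied to $[\xr\left(\frac{\xv}{(\xe)^2}\right)^n]_k$, these weights cancel the multiplicities coming from the argument so that every loop-$(n+k)$ ribbon graph appears with coefficient one in $c_{n+k}^\bullet$. This cancellation is already guaranteed by the proof of Theorem~\ref{thm:DSE}, so the cocycle property follows; what requires care is tracking the bookkeeping that links the grafting weights to the coproduct polynomials $P^\bullet_{n,k}$ of Theorem~\ref{thm:coproduct}.
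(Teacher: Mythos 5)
Your proposal is correct and follows essentially the same route as the paper's own proof: evaluate both sides on $\xr\left(\frac{\xv}{(\xe)^2}\right)^n$, reduce the left-hand side to $\cop\,\xr$ via the DSE and Proposition~\ref{prop:coprodX}, handle the second right-hand term with Lemma~\ref{lem:coprodPnk}, and collapse the double sum using the DSE again, with the grafting-weight cancellations already secured by the proof of Theorem~\ref{thm:DSE}. Your closing remark about tracking the weights $1/|\mathcal{I}(\sg,\cg)|$ and $1/\maxf$ is a fair point of emphasis, but it is the same justification the paper invokes implicitly when it cites the combinatorial DSE for the final identity.
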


Theorem \ref{Thm:Hochschild} is stated in a slightly weaker way than it is generally stated in the literature, see for instance \cite{Kreimer:2005rw,Kreimer:2009iy}. 
The stronger statement in literature is that already the operator $\graft^{\bullet,n}$ satisfies the Hochschild 1-cocycle property, which is defined by 
\begin{align}\label{graftingn}
    \graft^{\bullet,n} := \sum_{\substack{\Gamma\, primitive\\ \res(\Gamma)=\bullet\\ F_\Gamma=n}}\graft^\Gamma.
\end{align}
The difference to Theorem \ref{Thm:Hochschild} is that $\graft^{\bullet,n}$ of \eqref{graftingn} has fixed loop order $F_\Gamma=n$ of all primitive graphs $\Gamma$, whereas Theorem \ref{Thm:Hochschild} has an additional sum over $n$ on top. 
However, to the best of our knowledge, there is no satisfactory proof in the literature that already \eqref{graftingn} satisfies the Hochschild 1-cocycle property. 

The Hochschild 1-cocycle property of \eqref{graftingn} is a desirable property due to further understanding of the structure of the Connes-Kreimer Hopf algebra. 
For this reason, \vtwo{we provide the following necessary and sufficient property for} 
the statement that \eqref{graftingn} is Hochschild:
\begin{proposition}\label{prop:Hochschildn}
    Let $\graft^{\bullet,n}$ be as in \eqref{graftingn}. Then, $\graft^{\bullet,n}$ satisfies the Hochschild 1-cocycle property, that is 
    \begin{align}\label{Hochschildn}
    \cop \graft^{\bullet,n}=\graft^{\bullet,n}\otimes \bullet+(\id\otimes \graft^{\bullet,n})\cop 
\end{align} 
acting on $\left(\frac{\xv}{(\xe)^2}\right)^n \xr$, if and only if for each single graph $\rg\in \hfd$ all 
cographs \vtwo{$\rg/\sg$ which are primitive} 
have the same loop number. 
\begin{proof}
    Let us construct the following sub-series generated by the insertion into primitives of a fixed loop order $n$, that is
    \begin{align}\label{defXbn}
        X^{\bullet,n}:=\graft^{\bullet,n}\left(\left(\frac{\xv}{(\xe)^2}\right)^n\xr\right)=\sum_{\substack{\Gamma\, primitive\\ \res(\Gamma)=\bullet\\ F_\Gamma=n}}\graft^\Gamma\left(\left(\frac{\xv}{(\xe)^2}\right)^n\xr\right)
    \end{align}
    where we have by linearity $\xr=\sum_n X^{\bullet,n}$. Assuming that the Hochschild 1-cocycle property holds for $\graft^{\bullet,n}$, we can derive the coproduct of $X^{\bullet,n}$
    \begin{align*}
        \cop X^{\bullet,n}=&\cop\graft^{\bullet,n}\left(\left(\frac{\xv}{(\xe)^2}\right)^n\xr\right)\\
        =&\graft^{\bullet,n}\left(\left(\frac{\xv}{(\xe)^2}\right)^n\xr\right)\otimes \bullet+(\id\otimes \graft^{\bullet,n})\cop\left(\left(\frac{\xv}{(\xe)^2}\right)^n\xr\right)\\
        =&X^{\bullet,n}\otimes \bullet+(\id\otimes \graft^{\bullet,n})\bigg(\sum_k\xr \left(\frac{\xv}{(\xe)^2}\right)^{n+k}\otimes \bigg[\xr \left(\frac{\xv}{(\xe)^2}\right)^{n}\bigg]_k\bigg)\\
        =&X^{\bullet,n}\otimes \bullet+\sum_{k}\xr\left(\frac{\xv}{(\xe)^2}\right)^{n+k}\otimes (X^{\bullet,n})_{n+k}.
    \end{align*}
    Note that $k$ starts from 0 and there is no contribution for negative $k$ since $(X^{\bullet,n})_{l}$ is the $l$ loop order of $X^{\bullet,n}$ which is zero for $l\leq n$. Summing over $n$ yields the coproduct of $\xr$ (Proposition \ref{prop:coprodX}). Now, the only primitives on the rhs of the tensor product need to be of loop order $n$, since by definition \eqref{defXbn} all primitives in $X^{\bullet,n}$ are at loop $n$. Any graph with higher loop order than $n$ has already a nontrivial inserted graph. Thus assuming Hochschild for $\graft^{\bullet,n}$, the reduced coproduct of each single graph $\rg\in \hfd$ (let say it is contained in $X^{\bullet,n}$) has just primitives at loop order $n$, and all primitive cographs have the same loop order~$n$.

    Now, we want to prove the converse. Let us assume that for each graph in $\hfd$ all primitive cographs have the same loop order. Let $\RG{\text{co-}n}{}$ be the set of single graphs where all primitive cographs have just loop order $n$. We define
    \begin{align*}
        \tilde{X}^{\bullet,n}=\sum_{\substack{\rg\in\RG{\text{co-}n}{}\\ \res(\rg)=\bullet}} \rg.
    \end{align*}
    By linearity, we have $\xr=\sum_n\tilde{X}^{\bullet,n}$. Note also that a graph contained in $\tilde{X}^{\bullet,n}$ must be different of any graph in $\tilde{X}^{\bullet,m}$ for $n\neq m$ since they have different coproducts. By the central identity  \cite[Thm. 4.6]{Thurigen:2102} we have
    \begin{align*}
        \cop\tilde{X}^{\bullet,n}=\tilde{X}^{\bullet,n}\otimes \bullet+\sum_{k}\xr\left(\frac{\xv}{(\xe)^2}\right)^{k}\otimes (\tilde{X}^{\bullet,n})_{k},
    \end{align*}
    where $k\geq n$. Next note that $(X^{\bullet,n})_n$ and $(\tilde{X}^{\bullet,n})_n$ just consists of all primitives of loop order $n$, i.e.
    \begin{align*}
        (X^{\bullet,n})_n=(\tilde{X}^{\bullet,n})_n=\sum_{\substack{\Gamma\, primitive\\ \res(\Gamma)=\bullet\\ F_\Gamma=n}}\gamma,
    \end{align*}
    from which one conclude inductively by the coproduct formula that $X^{\bullet,n}=\tilde{X}^{\bullet,n}$. The central identity of $\tilde{X}^{\bullet,n}$ becomes therefore an equivalent statement to the Hochschild 1-cocycle property of $X^{\bullet,n}$.

\end{proof}
\end{proposition}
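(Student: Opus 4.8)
The plan is to prove both implications of the biconditional by studying, for each fixed loop order $n$, the sub-series generated by insertions into primitives of that order. Concretely, I would set
$X^{\bullet,n} := \graft^{\bullet,n}\big(\big(\frac{\xv}{(\xe)^2}\big)^n \xr\big)$,
so that by linearity of the combinatorial DSE (Thm.~\ref{thm:DSE}) one has the decomposition $\xr = \sum_n X^{\bullet,n}$. The entire argument hinges on computing $\cop X^{\bullet,n}$ in two different ways and comparing the right-hand tensor factors, using Lemma~\ref{lem:coprodPnk} on one side and the central identity of \cite{Thurigen:2102} on the other.

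For the forward direction I would assume the cocycle identity \eqref{Hochschildn} for $\graft^{\bullet,n}$ and apply it to $\big(\frac{\xv}{(\xe)^2}\big)^n\xr$. Expanding $\cop\big(\big(\frac{\xv}{(\xe)^2}\big)^n\xr\big)$ by Lemma~\ref{lem:coprodPnk} and feeding it into $(\id\otimes\graft^{\bullet,n})$ collapses the coproduct to $X^{\bullet,n}\otimes\bullet + \sum_k \xr\big(\frac{\xv}{(\xe)^2}\big)^{n+k}\otimes (X^{\bullet,n})_{n+k}$, where the key observation is that $(X^{\bullet,n})_{l}=0$ for $l\le n$ (a nontrivial insertion raises the loop number by at least one), so the sum runs over $k\ge 0$. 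Since $X^{\bullet,n}$ is built exclusively from primitives at loop order $n$, I would then argue that for a single graph $\rg$ contained in $X^{\bullet,n}$ the terms of $\rcop(\rg)$ with a primitive right factor are precisely those indexed by subgraphs $\sg$ with primitive cograph $\rg/\sg$; the displayed coproduct forces all of these to sit at loop order $n$, which is the asserted property.

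For the converse I would reverse the construction. Assuming all primitive cographs of every graph share a common loop order, let $\RG{\text{co-}n}{}$ be the set of graphs whose primitive cographs all have loop order $n$, and set $\tilde X^{\bullet,n} := \sum_{\rg\in\RG{\text{co-}n}{},\ \res(\rg)=\bullet}\rg$, so again $\xr=\sum_n \tilde X^{\bullet,n}$, with the summands pairwise disjoint because graphs with different coproducts cannot coincide. I would invoke the central identity \cite[Thm.~4.6]{Thurigen:2102} to write $\cop\tilde X^{\bullet,n}$ in closed form, note that $(\tilde X^{\bullet,n})_n=(X^{\bullet,n})_n$ is exactly the sum of all loop-$n$ primitives, and then run an induction on loop order, propagated by the explicit coproduct of Prop.~\ref{prop:coprodX} and Thm.~\ref{thm:coproduct}, to deduce $\tilde X^{\bullet,n}=X^{\bullet,n}$. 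Once this identification holds, the central identity for $\tilde X^{\bullet,n}$ is literally the cocycle identity \eqref{Hochschildn} for $\graft^{\bullet,n}$.

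The main obstacle I anticipate is exactly the inductive identification $X^{\bullet,n}=\tilde X^{\bullet,n}$: the two series are produced by genuinely different mechanisms, one by the grafting operator through iterated insertion and the other by the cograph-loop-order condition, and matching them order by order requires controlling that insertion into loop-$n$ primitives reproduces precisely the graphs all of whose primitive cographs lie at loop $n$. Establishing that the leading $n$-loop pieces agree and that the coproduct recursion transports this agreement to all higher orders is where the real content sits; the vanishing fact $(X^{\bullet,n})_{l}=0$ for $l\le n$ and the disjointness of the $\tilde X^{\bullet,n}$ are the supporting ingredients that make the induction well-posed.
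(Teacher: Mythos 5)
Your proposal follows the paper's own proof essentially step for step: the same sub-series $X^{\bullet,n}=\graft^{\bullet,n}\bigl(\bigl(\tfrac{\xv}{(\xe)^2}\bigr)^n\xr\bigr)$ and $\tilde X^{\bullet,n}$ indexed by the cograph condition, the same use of Lemma~\ref{lem:coprodPnk} for the forward direction, the same appeal to the central identity of \cite{Thurigen:2102} plus the inductive matching $(X^{\bullet,n})_n=(\tilde X^{\bullet,n})_n$ for the converse. The obstacle you flag, the order-by-order identification $X^{\bullet,n}=\tilde X^{\bullet,n}$, is exactly the step the paper itself dispatches with a brief inductive appeal to the coproduct formula, so your plan is faithful to the published argument in both structure and level of detail.
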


Proposition \ref{prop:Hochschildn} and its proof discuss the property of $\graft^{\bullet,n}$ being Hochschild 1-cocycle which comes together with an additional explicit decomposition of the combinatorial perturbation series into sub-series $X^{\bullet,n}$
\begin{align*}
    \xr=\sum_nX^{\bullet,n},\qquad X^{\bullet,n}:=\sum_{\substack{\gamma\, primitive\\ \res(\gamma)=\bullet\\ F_\gamma=n}}\graft^\gamma\left(\left(\frac{\xv}{(\xe)^2}\right)^n\xr\right).
\end{align*}
Thus if $\graft^{\bullet,n}$ is Hochschild, the perturbation series $\xr$ splits into a direct sum of $X^{\bullet,n}$. The perturbation series $X^{\bullet,n}$ is built by insertions of graphs into primitive graphs just of loop number $n$. 
Since there are infinitely many loop orders where primitive graphs occur for $\xv$ (Lemma \ref{lemma:primitives4}), we find infinitely many sub-series  $X^{v,n}$ where each consists of infinitely many graphs. 
It seems quite remarkable that any graph of a renormalizable QFT should have primitive cographs just with constant loop number. However, it seems to be believed for the ordinary $\phi^4$ QFT,  QED and gauge theories in general \cite{Kreimer:2005rw}. 
In the core Hopf algebra \cite{Kreimer:2009iy} this statement is certainly true since all primitive graphs have the same loop number, that is one.

However, to the best of our knowledge this question has never been studied in details. Prop.~\ref{prop:Hochschildn} provides therefore a new point of view which might help to give rigorous proofs that $\graft^{\bullet,n}$ is Hochschild plus a new conjectured algebraic structure for a renormalizable QFT. 

Similar questions have been studied considering the poset and lattice structure of subdivergencies in QFT \cite{Figueroa:0408, Berghoff:1411, Borinsky:2015mga}. 
Local $\phi^3$ and $\phi^4$ scalar QFT are known to have a lattice structure, whereas the $\phi^6$ theory has not. 
Furthermore, \cite{Borinsky:2015mga} shows further for $\phi^3$ and the 4-point function in $\phi^4$-theory this lattice is semimodular which yields a grading both in loop number and complete forests. 
Despite this, it is not possible to conclude that for each graph there are only primitive cographs with the same loop number. 
This structure (if it is true) would be a further algebraic structure compatible with the lattice structure. 
Hopefully, similar methods might allow to tackle also the question if (and which) renormalizable QFT consists of graphs which have primitive cographs with a constant loop number.


\appendix

\section{Feynman diagrams, ribbon graphs and maps}
\label{sec:ribbon graphs}

For precise statements we have introduced in Def.~\ref{def:ribbon graph} a combinatorial definition of ribbon graphs.
This definition is closely related to the notion of a combinatorial map \cite{Gurau:2014, Eynard:2016yaa}.
In this appendix we give the details how a combinatorial ribbon graph, upon completion, is dual to a combinatorial map, with vertices taken as faces of the map and vice versa.

\begin{definition}
\label{def:map}
    A \emph{combinatorial map} $M=(\H, \H^*,\sigma,\ei)$ consists of finite sets $\H$ and $\H^*$ of unmarked and marked \emph{half edges}, \vtwo{respectively}, and two permutations $\sigma$ and $\ei$ on $\H\sqcup\H^*$ where $\ei$ is an involution \emph{without} fixed points. 
    \vtwo{Cycles of $\ei$ define edges} 
    and cycles of $\sigma$ define vertices.
    Cycles of $\phi=\sigma^{-1}\circ\ei$ are called \emph{faces} 
    \vtwo{and they contain at most one marked half edge $h^*\in\H^*$ each, also called a root.  
    Faces containing a root are \emph{boundary faces}. 
    }

    A map is called \emph{fully simple} if 
    any vertex \vtwo{is incident} to at most one boundary and is incident to at most two edges which belong to the boundary.%
\footnote{This is a slight modification of the definition given in \cite{Borot:2017agy} which excludes also peculiar cases as \inftymap which seem to be permitted in their definition, but should actually not be included.}
\end{definition}


In this definition we have used the convention that permutation cycles define counter-clockwise orientation to vertices and faces and that half edges considered as darts pointing outward from vertices belong to the respective left face (as e.g. in \cite{Garcia-Failde:2018ylj}; alternatively,\cite{Eynard:2016yaa} for example defines faces on the right of their darts with clockwise orientation but vertices as counter-clockwise). 

\begin{definition}
\label{def:dual}
Given a combinatorial map $M=(\H, \H^*,\sigma,\ei)$, there is a dual map $M_\star = (\H, \H^*,\sigma_\star=\ei\circ\sigma,\ei_\star=\ei)$.
\end{definition}
Duality is in fact an involution, $(M_\star)_\star = M$, due to $\ei$ being an involution.
However, since duality interchanges vertices and faces,
for each boundary face with $n$ vertices in $M$ there is a boundary vertex incident to $n$ boundary faces in the dual map $M_\star$. 
Thus, there is a different definition of boundary for dual maps: each vertex incident to a root defines a boundary in the sense that all faces incident to it belong to this boundary.
Furthermore, the dual map corresponds to opposite orientation: For half-edges pointing outward from vertices one has to consider faces on the right to the half-edge they contain and vertices and faces have clockwise orientation since $\sigma_\star=\phi^{-1}$ and $\phi_\star\equiv\sigma_\star^{-1}\circ\alpha=\sigma^{-1}$. 
Thus one calls a map explicitly a \emph{dual map} to account for these differences in interpretation.

\

To illustrate this duality of maps, consider the combinatorial map corresponding to the fish graph in MFT:
\[
M \cong \fishmap  \, .
\]
One can explicitly define $M=(\H = \{1,2,3,4,5,6,7,8,2',6',7'\}, \, \H^* = \{1'\}, \sigma, \ei)$ with the cycles of $\sigma$ and $\ei$
\begin{align}
    \mathcal{C}(\sigma) &= (1'2)(2'63)(6'7)(187')(45)  \\
    \mathcal{C}(\ei) &= (1'1),(2'2)(35)(48)(6'6)(7'7) \, ,
\end{align}
such that the faces are
\[
\mathcal{C}(\phi) = \mathcal{C}(\sigma^{-1}\circ\ei) = (1234)(5678)(7'6'2'1')
\]
and $(7'6'2'1')$ is the boundary since $1'$ is the single root of $M$.
Then the dual $M_\star=(\H,\H^*,\mathcal{C}(\sigma_\star) = (4321)(8765)(1'2'6'7'),\ei)$ can be pictured as
\[\label{eq:fishmapdual}
\fishmapdual \rightarrow M_\star \cong \fishdual \,.
\]
While labels are on the left of half edges in $M$, they have to be on the right in the dual $M_\star$ so that the dual edges carry the same labels, $\ei_\star = \ei$.
In $M_\star$, the vertex $(1'2'6'7')$ corresponds to the boundary.

\

That boundary faces of maps become vertices in the dual map 
is different to physics where QFT Feynman diagrams have open external edges.
The usual MFT ribbon graph is therefore the \emph{decompletion} of a dual combinatorial map with respect to its boundary vertices, that is the object obtained by deleting all boundary vertices and incident half-edges.
In the example \eqref{eq:fishmapdual}, decompletion of~$M_\star$ yields the planar fish ribbon graph
\[
\fishl \equiv \left(\{1,2,3,4,5,6,7,8\},(4321)(8765),(1)(2)(35)(48)(6)(7))\right)
\]
From this perspective, one can even give meaning to the ribbon graph $G=(\emptyset, \sigma, \alpha)$ as an open edge ``$\edge$'' being the unique decompletion of the dual map
\[\label{eq:edgecompletion}
\overline{G}= M_\star = \left(\H=\{2'\},\H^*=\{1'\},(1'2'),(1'2') \right) \cong \edgecompletion 
\]
which is the dual to $M \cong \edgemap$.

Thus, a ribbon graph in the combinatorial sense of Def.~\ref{def:ribbon graph} is a combinatorial map upon \emph{completion} of external half-edges and vertices:

\vtwo{
\begin{definition}\label{def:completion}
Given a ribbon graph $G=(\H,\sigma,\ei)$ with $\H=\Hint\sqcup\Hext \ne \emptyset$, 
we define the \emph{completion} $\overline{G}=(\Hc,\sigmac,\eic)$ in the following way:
\begin{itemize}
    \item $\Hc
    := \H\sqcup\Hext'$ where $\Hext'\cong\Hext$ is defined by doubling the external half edges, i.e.
    for every $h\in\Hext$ there is an $h'\in\Hext'$ with $\eic(h):= h'$.
    \item For all internal half-edges $h\in\Hint$ define $\eic(h):=\ei(h)$.
    \item Internal vertices of $\overline{G}$ are the vertices of $G$, that is $\sigmac\vert_{\H}=\sigma$.
    \item Boundary vertices of $\overline{G}$ are the cycles $\mathcal{C}(\sigmac\vert_{\Hext'}):=f_{\eic} \left(\mathcal{C}(\ei\circ\sigma)\vert_{\Hext}\right)$,
     where the restriction to $\Hext$ on cycles means deleting all $h\in\H\setminus\Hext=\Hint$ in the cycles of $\mathcal{C}(\ei\circ\sigma)$ and $f_{\eic}$ operates on the cycle set by mapping each $\Hext\ni h \mapsto h'\in\Hext'$ in each cycle.
\end{itemize}
\end{definition}
The rationale for the construction is to mirror each boundary. Therefore, for each boundary given by a cycle of $\sigma^{-1}\circ \alpha$ which contains an $h\in\Hext$, one flips the orientation to $(\sigma^{-1}\circ \alpha)^{-1}=\alpha\circ\sigma$;
then one eliminates internal half edges;
and finally maps the external half edges $h\in \Hext$ to their partners $h'=\eic(h)\in\Hext'$.
These resulting cycles on $\Hext'$ define the boundary vertices $\mathcal{C}(\sigmac)$, and thereby $\sigmac$.
}

\begin{proposition}\label{prop:completion as fully simple map}
\vtwo{Up to choosing the marked edges, 
the completion $\overline{G}$ of} a ribbon graph $G$ 
is a combinatorial map dual to a fully simply map.
\end{proposition}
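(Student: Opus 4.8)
The plan is to verify first that the completion $\overline G=(\Hc,\sigmac,\eic)$ is a genuine combinatorial map in the sense of Def.~\ref{def:map}, and then to show that its dual is fully simple. For the first part, observe that $\eic$ is a fixed-point-free involution: the internal pairs of $\ei$ are kept, while every external half edge $h\in\Hext$ is paired with its double $h'\in\Hext'$, so the fixed points of $\ei$ (exactly the elements of $\Hext$) are removed. The map $\sigmac$ is a permutation of $\Hc$ because it restricts to the permutation $\sigma$ on $\H$ and, on $\Hext'$, to $f_{\eic}$ applied to the cycles of $\ei\circ\sigma=(\sigma^{-1}\circ\ei)^{-1}$ with the internal half edges deleted; since deleting elements from the cycles of a permutation of $\H$ yields a permutation of the retained set $\Hext$, and $f_{\eic}\colon\Hext\to\Hext'$ is a bijection, this is a permutation of $\Hext'$. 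The marked half edges are then fixed by choosing one root in each boundary vertex, i.e.\ one per cycle of $\sigmac\vert_{\Hext'}$; this is the freedom referred to by ``up to choosing the marked edges''.

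Next I set $N:=\overline G_\star$. Since duality is an involution (Def.~\ref{def:dual}), $\overline G=N_\star$ is dual to $N$, so it remains to prove that $N$ is fully simple. By Def.~\ref{def:dual} the vertex permutation of $N$ is $\sigma_N=\eic\circ\sigmac$ and its involution is $\eic$; its faces are the cycles of $\sigma_N^{-1}\circ\eic=\sigmac^{-1}$, so the boundary faces are the cycles of $\sigmac^{-1}$ supported on $\Hext'$, i.e.\ exactly the boundary vertices of $\overline G$ carrying a root, and the boundary edges are the doubled external edges $(h\,h')$ with $h\in\Hext$. Thus a vertex of $N$ is incident to a boundary precisely when the corresponding cycle of $\sigma_N$ contains an element of $\Hext'$, and the boundary edges incident to it correspond to its half edges lying in $\Hext\sqcup\Hext'$. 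The whole statement therefore reduces to the combinatorial claim that \emph{every cycle of $\sigma_N$ contains at most one element of $\Hext'$ and at most one element of $\Hext$}.

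To prove this claim I would analyse the transitions of $\sigma_N$: on $\Hext'$ one has $\sigmac(\Hext')\subseteq\Hext'$ and $\eic(\Hext')=\Hext$, so $\Hext'\to\Hext$; on $\H$ one has $\sigma_N=\eic\circ\sigma$, which sends a half edge into $\Hint$ when its $\sigma$-image is internal and into $\Hext'$ exactly when its $\sigma$-image is external. Hence $\Hext$ is reached only from $\Hext'$, so externals and their immediately preceding primes occur in consecutive pairs; in particular a cycle without a prime contains no external either. Starting a cycle at a prime $p=h_i'$, the step into $\Hext$ lands on the partner of $\sigmac(p)$, after which $\sigma_N$ follows the boundary walk $\ei\circ\sigma$ of $G$ through internal half edges until the next external half edge $e$ is about to be reached, where it jumps to $e'\in\Hext'$ instead. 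Using that $\sigmac\vert_{\Hext'}$ encodes the reversed cyclic order of the externals along that boundary of $G$, one identifies $e$ with the partner $\bar p$ of the starting prime, so the cycle closes after traversing exactly one external and one prime together with the intervening internal half edges. This establishes the claim, and with it: each vertex of $N$ is incident to at most one boundary (at most one $\Hext'$ half edge) and to at most two boundary edges (at most one $\Hext$ and one $\Hext'$ half edge), which is exactly the fully simple condition of Def.~\ref{def:map}. The trivial edge $\edge=(\emptyset,\sigma,\ei)$ is the base case already recorded in Eq.~\eqref{eq:edgecompletion}.

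The main obstacle will be the orientation bookkeeping in the last step: making precise, for an arbitrary number of boundaries and with the counter-clockwise/clockwise conventions fixed in Def.~\ref{def:map} and Def.~\ref{def:completion}, that the reversed order carried by $\sigmac\vert_{\Hext'}$ matches the forward boundary order of $G$ so that each $\sigma_N$-cycle closes after a single external--prime pair. Once this matching is nailed down the counting is immediate, and everything else is a routine check that the completion data define a permutation together with a fixed-point-free involution.
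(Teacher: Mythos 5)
Your overall strategy is sound and in one respect sharper than the paper's own proof: the paper checks the map axioms and then simply asserts that ``by construction'' every external face of $\overline{G}$ is incident to exactly two external edges, whereas you correctly isolate the combinatorial statement that carries all of the content, namely that every cycle of $\sigma_N=\eic\circ\sigmac$ contains at most one element of $\Hext$ and at most one element of $\Hext'$. Your verification of the map axioms, the use of duality as an involution, and the transition analysis ($\Hext'\to\Hext$ under $\sigma_N$, while $\H$ maps into $\Hint\sqcup\Hext'$, so externals are entered only from primes) are all correct, and the reduction of full simplicity to that cycle claim is valid.

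The genuine gap is the one step you defer as ``the main obstacle'': the closure of each $\sigma_N$-cycle after a single external--prime pair. This is not routine bookkeeping --- it is the entire proposition --- and, as you have set it up, it fails. Following a prime $p$, one lands on $\eic(\sigmac(p))$, walks along $\ei\circ\sigma$ through internal half edges, and jumps to $e'$ at the first external $e$; closure requires $e'=p$, i.e.\ $\tau\bigl(\eic(\sigmac(p))\bigr)=\eic(p)$ with $\tau=\mathcal{C}(\ei\circ\sigma)\vert_{\Hext}$, which forces $\sigmac\vert_{\Hext'}=f_{\eic}(\tau^{-1})$. In words: the boundary vertices must carry the \emph{same} cyclic order as the boundary face $\sigma^{-1}\circ\ei$ of $G$, which is the opposite of what you assert (``the reversed cyclic order'') and also the opposite of what Def.~\ref{def:completion} literally prescribes, since that definition sets $\sigmac\vert_{\Hext'}=f_{\eic}(\tau)$. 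Concretely, for the fish graph of Eq.~\eqref{eq:fish graph} one has $\ei\circ\sigma=(187632)(45)$, so the definition yields the boundary vertex $(1'7'6'2')$, and then $\eic\circ\sigmac$ contains the cycle $(1\,8\,7'\,6\,3\,2')$ with two externals and two primes; the corresponding dual vertex is incident to the four boundary edges $\{1,1'\},\{2,2'\},\{6,6'\},\{7,7'\}$, so the dual is \emph{not} fully simple, and this failure persists under either composition convention for the dual. With the opposite orientation $(1'2'6'7')$ --- which is what the paper's own appendix example \eqref{eq:fishmapdual} implicitly uses --- every cycle closes after one pair and the dual is exactly the fully simple fish map. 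So to complete your proof you must both supply the closure argument and carry it out with the corrected (unflipped) orientation of the boundary vertices; as sketched, carrying out your plan literally would run into this counterexample rather than finish the proof.
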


\begin{proof}
Let $G=(\H=\Hint\sqcup\Hext,\sigma,\ei)$ be a ribbon graph 
\vtwo{and $\overline{G} = (\Hc=\H\sqcup\Hext',\sigmac,\eic)$ its completion according to Def.~\ref{def:completion}.
The cycles of $\sigmac$ on $\Hext'$ define boundary vertices but they do not yet contain a marked half edge. 
Thus, to obtain a dual combinatorial map, choose one $h\in\Hext'$ for each such cycle as the root, defining a partition of $\Hc$ into unmarked and marked half edges.
Now this} defines a dual combinatorial map since $\eic$ is an involution without fixed points, $\sigmac$ a permutation, and roots designate boundary vertices.
By construction of the completion Def. \ref{def:completion}, 
every external face is incident to exactly two external edges 
(since the boundary vertex cycles are constructed to mirror the boundary external faces of $G$ thereby closing them along two new edges). 
This means that in the dual map $\overline{G}_\star$ a vertex at a boundary face is incident to two boundary edges part of only that boundary. 
This is the defining property of a fully simple map, Def.~\ref{def:map}.
Thus, $\overline{G}$ with arbitrarily chosen roots on the boundary vertices is dual to a fully simple map.
\end{proof}

Full understanding of the relation of ribbon graphs as occurring in MFT to combinatorial maps finally allows  for a straightforward definition of (external) \emph{faces} of a ribbon graph $G$: 

\begin{definition}
    Let $G=(\Hint\sqcup\Hext,\sigma,\ei)$ be a ribbon graph and $\overline{G}=(\Hc,
    \sigmac,\eic)$ its completion.
    A \emph{face} of $G$ is a cycle of $\sigmac^{-1}\circ\eic$. 
    A face is external iff it contains an $h\in\Hext$, or equivalently iff it is incident to a boundary vertex in $\overline{G}$.
    Otherwise it is an internal face.
    We denote the number of internal faces $F_G$.
\end{definition}
It follows that each connected component $j=1,...,b$ of the boundary of $\overline{G}$ has $n_j$ external faces which is the degree of the respective boundary vertex.
In contrast, in the corresponding ribbon graph $G$ the external edges are open and thus there is only a single cycle of $\sigma^{-1}\circ\alpha$ for each boundary component.

\bibliographystyle{halpha-abbrv}
\bibliography{main}

\newcommand{\etalchar}[1]{$^{#1}$}
\begin{thebibliography}{ABDB{\etalchar{+}}22}
\expandafter\ifx\csname url\endcsname\relax
  \def\url#1{\texttt{#1}}\fi
\expandafter\ifx\csname doi\endcsname\relax
  \def\doi#1{\burlalt{doi:#1}{http://dx.doi.org/#1}}\fi
\expandafter\ifx\csname urlprefix\endcsname\relax\def\urlprefix{URL }\fi
\expandafter\ifx\csname href\endcsname\relax
  \def\href#1#2{#2}\fi
\expandafter\ifx\csname burlalt\endcsname\relax
  \def\burlalt#1#2{\href{#2}{#1}}\fi

\bibitem[ABDB{\etalchar{+}}22]{Alexandrov:2022ydc}
A.~Alexandrov, B.~Bychkov, P.~Dunin-Barkowski, M.~Kazarian, and S.~Shadrin.
\newblock {A universal formula for the $x-y$ swap in topological recursion}.
\newblock 12 2022, \burlalt{2212.00320}{http://arxiv.org/abs/2212.00320}.

\bibitem[ADC21]{Aizenman:2019yuo}
M.~Aizenman and H.~Duminil-Copin.
\newblock {Marginal triviality of the scaling limits of critical 4D Ising and
  $\phi_4^4$ models}.
\newblock {\em Ann. Math}, 194:163--235, 2021,
  \burlalt{1912.07973}{http://arxiv.org/abs/1912.07973}.
\newblock \doi{10.4007/annals.2021.194.1.3}.

\bibitem[Aiz81]{Aizenman:1981du}
M.~Aizenman.
\newblock {Proof of the triviality of $\phi^4_d$ field theory and some mean
  field features of {I}sing models for $d>4$}.
\newblock {\em Phys. Rev. Lett.}, 47:1--4, 1981.
\newblock \doi{10.1103/PhysRevLett.47.1}.

\bibitem[BCGF21]{Borot:2021eif}
G.~Borot, S.~Charbonnier, and E.~Garcia-Failde.
\newblock {Topological recursion for fully simple maps from ciliated maps}.
\newblock 6 2021, \burlalt{2106.09002}{http://arxiv.org/abs/2106.09002}.

\bibitem[BDBKS23]{Bychkov:2021hfh}
B.~Bychkov, P.~Dunin-Barkowski, M.~Kazarian, and S.~Shadrin.
\newblock {Generalised Ordinary vs Fully Simple Duality for n-Point Functions
  and a Proof of the Borot\textendash{}Garcia-Failde Conjecture}.
\newblock {\em Commun. Math. Phys.}, 402(1):665--694, 2023,
  \burlalt{2106.08368}{http://arxiv.org/abs/2106.08368}.
\newblock \doi{10.1007/s00220-023-04732-7}.

\bibitem[Ber15]{Berghoff:1411}
M.~Berghoff.
\newblock {Wonderful compactifications in quantum field theory}.
\newblock {\em Communications in Number Theory and Physics}, 9:477--547, 2015.
\newblock \doi{10.4310/CNTP.2015.v9.n3.a2 10.4310/CNTP.2015.v9.n3.a2}.

\bibitem[BG14]{BenGeloun:1306}
J.~Ben~Geloun.
\newblock {Renormalizable Models in Rank $d\geq 2$ Tensorial Group Field
  Theory}.
\newblock {\em Comm. Math. Phys.}, 332:117--188, 2014,
  \burlalt{1306.1201}{http://arxiv.org/abs/1306.1201}.
\newblock \doi{10.1007/s00220-014-2142-6}.

\bibitem[BGF20]{Borot:2017agy}
G.~Borot and E.~Garcia-Failde.
\newblock {Simple maps, Hurwitz numbers, and Topological Recursion}.
\newblock {\em Commun. Math. Phys.}, 380(2):581--654, 2020,
  \burlalt{1710.07851}{http://arxiv.org/abs/1710.07851}.
\newblock \doi{10.1007/s00220-020-03867-1}.

\bibitem[BGHW22]{Branahl:2021slr}
J.~Branahl, H.~Grosse, A.~Hock, and R.~Wulkenhaar.
\newblock {From scalar fields on quantum spaces to blobbed topological
  recursion}.
\newblock {\em J. Phys. A}, 55(42):423001, 2022,
  \burlalt{2110.11789}{http://arxiv.org/abs/2110.11789}.
\newblock \doi{10.1088/1751-8121/ac9260}.

\bibitem[BGR13]{BenGeloun:1111}
J.~Ben~Geloun and V.~Rivasseau.
\newblock {A Renormalizable 4-Dimensional Tensor Field Theory}.
\newblock {\em Comm. Math. Phys.}, 318:69--109, 2013,
  \burlalt{1111.4997}{http://arxiv.org/abs/1111.4997}.
\newblock \doi{10.1007/s00220-012-1549-1}.

\bibitem[BH22]{Branahl:2022xdm}
J.~Branahl and A.~Hock.
\newblock {An irregular spectral curve for the generation of bipartite maps in
  topological recursion}.
\newblock {\em to appear in Ann. Inst. H. Poincare D Comb. Phys. Interact.}, 4
  2022, \burlalt{2204.05181}{http://arxiv.org/abs/2204.05181}.

\bibitem[BH23]{Branahl:2022uge}
J.~Branahl and A.~Hock.
\newblock {Complete Solution of the LSZ Model via Topological Recursion}.
\newblock {\em Commun. Math. Phys.}, 401(3):2845--2899, 2023,
  \burlalt{2205.12166}{http://arxiv.org/abs/2205.12166}.
\newblock \doi{10.1007/s00220-023-04702-z}.

\bibitem[BHW21]{Branahl:2020uxs}
J.~Branahl, A.~Hock, and R.~Wulkenhaar.
\newblock {Perturbative and Geometric Analysis of the Quartic Kontsevich
  Model}.
\newblock {\em SIGMA}, 17:085, 2021,
  \burlalt{2012.02622}{http://arxiv.org/abs/2012.02622}.
\newblock \doi{10.3842/SIGMA.2021.085}.

\bibitem[BHW22]{Branahl:2020yru}
J.~Branahl, A.~Hock, and R.~Wulkenhaar.
\newblock {Blobbed Topological Recursion of the Quartic Kontsevich Model I:
  Loop Equations and Conjectures}.
\newblock {\em Commun. Math. Phys.}, 393(3):1529--1582, 2022,
  \burlalt{2008.12201}{http://arxiv.org/abs/2008.12201}.
\newblock \doi{10.1007/s00220-022-04392-z}.

\bibitem[BIPZ78]{Brezin:1977sv}
E.~Brezin, C.~Itzykson, G.~Parisi, and J.~B. Zuber.
\newblock {Planar diagrams}.
\newblock {\em Commun. Math. Phys.}, 59:35, 1978.
\newblock \doi{10.1007/BF01614153}.

\bibitem[BK01]{Broadhurst:2000dq}
D.~J. Broadhurst and D.~Kreimer.
\newblock {Exact solutions of Dyson-Schwinger equations for iterated one loop
  integrals and propagator coupling duality}.
\newblock {\em Nucl. Phys. B}, 600:403--422, 2001,
  \burlalt{hep-th/0012146}{http://arxiv.org/abs/hep-th/0012146}.
\newblock \doi{10.1016/S0550-3213(01)00071-2}.

\bibitem[BK06]{Bergbauer:2005fb}
C.~Bergbauer and D.~Kreimer.
\newblock {Hopf algebras in renormalization theory: Locality and
  Dyson-Schwinger equations from Hochschild cohomology}.
\newblock {\em IRMA Lect. Math. Theor. Phys.}, 10:133--164, 2006,
  \burlalt{hep-th/0506190}{http://arxiv.org/abs/hep-th/0506190}.
\newblock \doi{10.4171/028-1/4}.

\bibitem[Bor16]{Borinsky:2015mga}
M.~Borinsky.
\newblock {Algebraic lattices in QFT renormalization}.
\newblock {\em Lett. Math. Phys.}, 106(7):879--911, 2016,
  \burlalt{1509.01862}{http://arxiv.org/abs/1509.01862}.
\newblock \doi{10.1007/s11005-016-0843-9}.

\bibitem[Bor18]{Borinsky:2018}
M.~Borinsky.
\newblock {\em {Graphs in Perturbation Theory: Algebraic Structure and
  Asymptotics}}.
\newblock Springer Theses Recognizing Outstanding Ph.D. Research. Springer
  International Publishing, 2018,
  \burlalt{1807.02046}{http://arxiv.org/abs/1807.02046}.
\newblock \doi{10.1007/978-3-030-03541-9}.

\bibitem[BS17]{Borot:2015hna}
G.~Borot and S.~Shadrin.
\newblock {Blobbed topological recursion: properties and applications}.
\newblock {\em Math. Proc. Cambridge Phil. Soc.}, 162(1):39--87, 2017,
  \burlalt{1502.00981}{http://arxiv.org/abs/1502.00981}.
\newblock \doi{10.1017/S0305004116000323}.

\bibitem[BW23]{Borot:2023thu}
G.~Borot and R.~Wulkenhaar.
\newblock {A short note on BKP for the Kontsevich matrix model with arbitrary
  potential}.
\newblock 6 2023, \burlalt{2306.01501}{http://arxiv.org/abs/2306.01501}.

\bibitem[Car13]{Carrozza:13}
S.~Carrozza.
\newblock {\em {Tensorial methods and renormalization in Group Field
  Theories}}.
\newblock Springer Cham, 2013,
  \burlalt{1310.3736}{http://arxiv.org/abs/1310.3736}.
\newblock \doi{10.1007/978-3-319-05867-2}.

\bibitem[CK98]{Connes:1998qv}
A.~Connes and D.~Kreimer.
\newblock {Hopf algebras, renormalization and noncommutative geometry}.
\newblock {\em Commun. Math. Phys.}, 199:203--242, 1998,
  \burlalt{hep-th/9808042}{http://arxiv.org/abs/hep-th/9808042}.
\newblock \doi{10.1007/s002200050499}.

\bibitem[CK00]{Connes:1999yr}
A.~Connes and D.~Kreimer.
\newblock {Renormalization in quantum field theory and the Riemann-Hilbert
  problem. 1. The Hopf algebra structure of graphs and the main theorem}.
\newblock {\em Commun. Math. Phys.}, 210:249--273, 2000,
  \burlalt{hep-th/9912092}{http://arxiv.org/abs/hep-th/9912092}.
\newblock \doi{10.1007/s002200050779}.

\bibitem[DGMR07]{Disertori:2006nq}
M.~Disertori, R.~Gurau, J.~Magnen, and V.~Rivasseau.
\newblock {Vanishing of beta function of non commutative $\Phi^4_4$ theory to
  all orders}.
\newblock {\em Phys. Lett.}, B649:95--102, 2007,
  \burlalt{hep-th/0612251}{http://arxiv.org/abs/hep-th/0612251}.
\newblock \doi{10.1016/j.physletb.2007.04.007}.

\bibitem[dJHW22]{deJong:2019oez}
J.~de~Jong, A.~Hock, and R.~Wulkenhaar.
\newblock {Nested Catalan tables and a recurrence relation in noncommutative
  quantum field theory}.
\newblock {\em Ann. Inst. H. Poincare D Comb. Phys. Interact.}, 9(1):47--72,
  2022, \burlalt{1904.11231}{http://arxiv.org/abs/1904.11231}.
\newblock \doi{10.4171/aihpd/113}.

\bibitem[EO07a]{Eynard_2008}
B.~Eynard and N.~Orantin.
\newblock Topological expansion of mixed correlations in the hermitian 2-matrix
  model and x–y symmetry of the fg algebraic invariants.
\newblock {\em Journal of Physics A: Mathematical and Theoretical},
  41(1):015203, dec 2007.
\newblock \doi{10.1088/1751-8113/41/1/015203}.

\bibitem[EO07b]{Eynard:2007kz}
B.~Eynard and N.~Orantin.
\newblock {Invariants of algebraic curves and topological expansion}.
\newblock {\em Commun. Num. Theor. Phys.}, 1:347--452, 2007,
  \burlalt{math-ph/0702045}{http://arxiv.org/abs/math-ph/0702045}.
\newblock \doi{10.4310/CNTP.2007.v1.n2.a4}.

\bibitem[Eyn16]{Eynard:2016yaa}
B.~Eynard.
\newblock {\em {Counting Surfaces}}, volume~70 of {\em Progress in Mathematical
  Physics}.
\newblock Springer, 2016.
\newblock \doi{10.1007/978-3-7643-8797-6}.

\bibitem[FGB05]{Figueroa:0408}
H.~Figueroa and J.~M. Gracia-Bondia.
\newblock {Combinatorial Hopf algebras in quantum field theory. I}.
\newblock {\em Rev. Math. Phys.}, 17:881, 2005,
  \burlalt{hep-th/0408145}{http://arxiv.org/abs/hep-th/0408145}.
\newblock \doi{10.1142/S0129055X05002467}.

\bibitem[Foi14]{Foissy:1112}
L.~Foissy.
\newblock {General Dyson-Schwinger Equations and Systems}.
\newblock {\em Comm. Math. Phys.}, 327:151--179, 2014,
  \burlalt{1112.2606}{http://arxiv.org/abs/1112.2606}.
\newblock \doi{10.1007/s00220-014-1941-0}.

\bibitem[Fr{\"{o}}82]{Frohlich:1982tw}
J.~Fr{\"{o}}hlich.
\newblock {On the triviality of $\lambda \phi^4_d$ theories and the approach to
  the critical point in $d\geq 4$ dimensions}.
\newblock {\em Nucl. Phys.}, B200:281--296, 1982.
\newblock \doi{10.1016/0550-3213(82)90088-8}.

\bibitem[GF18]{Garcia-Failde:2018ylj}
E.~Garcia-Failde.
\newblock {\em {On discrete surfaces: Enumerative geometry, matrix models and
  universality classes via topological recursion}}.
\newblock PhD thesis, Bonn U., Math. Inst., 2018,
  \burlalt{2002.00316}{http://arxiv.org/abs/2002.00316}.

\bibitem[GHW19]{Grosse:2019jnv}
H.~Grosse, A.~Hock, and R.~Wulkenhaar.
\newblock {Solution of all quartic matrix models}.
\newblock 2019, \burlalt{1906.04600}{http://arxiv.org/abs/1906.04600}.

\bibitem[GHW20]{Grosse:2019qps}
H.~Grosse, A.~Hock, and R.~Wulkenhaar.
\newblock {Solution of the self-dual $\Phi^4$ QFT-model on four-dimensional
  Moyal space}.
\newblock {\em JHEP}, 01:081, 2020,
  \burlalt{1908.04543}{http://arxiv.org/abs/1908.04543}.
\newblock \doi{10.1007/JHEP01(2020)081}.

\bibitem[GRS14]{Gurau:2014}
R.~Gurau, V.~Rivasseau, and A.~Sfondrini.
\newblock {Renormalization: an advanced overview}.
\newblock {\em arXiv}, 2014,
  \burlalt{1401.5003}{http://arxiv.org/abs/1401.5003}.

\bibitem[GS23]{Grosse:2023jcb}
H.~Grosse and A.~Sako.
\newblock {Integrability of $\Phi^4$ Matrix Model as $N$-body Harmonic
  Oscillator System}.
\newblock 8 2023, \burlalt{2308.11523}{http://arxiv.org/abs/2308.11523}.

\bibitem[Gur16]{Gurau:16}
R.~Gurau.
\newblock {\em {Random Tensors}}.
\newblock Oxford University Press, 2016.
\newblock \doi{10.1093/acprof:oso/9780198787938.001.0001}.

\bibitem[GW05a]{Grosse:2003aj}
H.~Grosse and R.~Wulkenhaar.
\newblock {Power counting theorem for nonlocal matrix models and
  renormalization}.
\newblock {\em Commun. Math. Phys.}, 254:91--127, 2005,
  \burlalt{hep-th/0305066}{http://arxiv.org/abs/hep-th/0305066}.
\newblock \doi{10.1007/s00220-004-1238-9}.

\bibitem[GW05b]{Grosse:2004yu}
H.~Grosse and R.~Wulkenhaar.
\newblock {Renormalisation of $\phi^4$-theory on noncommutative $\mathbb{R}^4$
  in the matrix base}.
\newblock {\em Commun. Math. Phys.}, 256:305--374, 2005,
  \burlalt{hep-th/0401128}{http://arxiv.org/abs/hep-th/0401128}.
\newblock \doi{10.1007/s00220-004-1285-2}.

\bibitem[GW09]{Grosse:2009pa}
H.~Grosse and R.~Wulkenhaar.
\newblock {Progress in solving a noncommutative quantum field theory in four
  dimensions}.
\newblock 2009, \burlalt{0909.1389}{http://arxiv.org/abs/0909.1389}.

\bibitem[GW14]{Grosse:2012uv}
H.~Grosse and R.~Wulkenhaar.
\newblock {Self-dual noncommutative $\phi^4$-theory in four dimensions is a
  non-perturbatively solvable and non-trivial quantum field theory}.
\newblock {\em Commun. Math. Phys.}, 329:1069--1130, 2014,
  \burlalt{1205.0465}{http://arxiv.org/abs/1205.0465}.
\newblock \doi{10.1007/s00220-014-1906-3}.

\bibitem[Hoc20]{Hock:2020rje}
A.~Hock.
\newblock {\em {Matrix Field Theory}}.
\newblock PhD thesis, U. Munster, 2020,
  \burlalt{2005.07525}{http://arxiv.org/abs/2005.07525}.

\bibitem[Hoc22]{Hock:2022wer}
A.~Hock.
\newblock {On the $x$-$y$ Symmetry of Correlators in Topological Recursion via
  Loop Insertion Operator}.
\newblock 1 2022, \burlalt{2201.05357}{http://arxiv.org/abs/2201.05357}.

\bibitem[Hoc23]{Hock:2022pbw}
A.~Hock.
\newblock {A simple formula for the x-y symplectic transformation in
  topological recursion}.
\newblock {\em J. Geom. Phys.}, 194:105027, 2023,
  \burlalt{2211.08917}{http://arxiv.org/abs/2211.08917}.
\newblock \doi{10.1016/j.geomphys.2023.105027}.

\bibitem[HW18]{Hock:2018wup}
A.~Hock and R.~Wulkenhaar.
\newblock {Noncommutative 3-colour scalar quantum field theory model in 2D}.
\newblock {\em Eur. Phys. J. C}, 78(7):580, 2018,
  \burlalt{1804.06075}{http://arxiv.org/abs/1804.06075}.
\newblock \doi{10.1140/epjc/s10052-018-6042-3}.

\bibitem[HW21]{Hock:2021tbl}
A.~Hock and R.~Wulkenhaar.
\newblock {Blobbed topological recursion of the quartic Kontsevich model II:
  Genus=0}.
\newblock 3 2021, \burlalt{2103.13271}{http://arxiv.org/abs/2103.13271}.

\bibitem[HW23]{Hock:2023nki}
A.~Hock and R.~Wulkenhaar.
\newblock {Blobbed topological recursion from extended loop equations}.
\newblock 1 2023, \burlalt{2301.04068}{http://arxiv.org/abs/2301.04068}.

\bibitem[Kon92]{Kontsevich:1992ti}
M.~Kontsevich.
\newblock {Intersection theory on the moduli space of curves and the matrix
  Airy function}.
\newblock {\em Commun. Math. Phys.}, 147:1--23, 1992.
\newblock \doi{10.1007/BF02099526}.

\bibitem[Kre98]{Kreimer:1997dp}
D.~Kreimer.
\newblock {On the Hopf algebra structure of perturbative quantum field
  theories}.
\newblock {\em Adv. Theor. Math. Phys.}, 2:303--334, 1998,
  \burlalt{q-alg/9707029}{http://arxiv.org/abs/q-alg/9707029}.
\newblock \doi{10.4310/ATMP.1998.v2.n2.a4}.

\bibitem[Kre06]{Kreimer:2005rw}
D.~Kreimer.
\newblock {Anatomy of a gauge theory}.
\newblock {\em Annals Phys.}, 321:2757--2781, 2006,
  \burlalt{hep-th/0509135}{http://arxiv.org/abs/hep-th/0509135}.
\newblock \doi{10.1016/j.aop.2006.01.004}.

\bibitem[KvS09]{Kreimer:2009iy}
D.~Kreimer and W.~D. van Suijlekom.
\newblock {Recursive relations in the core Hopf algebra}.
\newblock {\em Nucl. Phys. B}, 820:682--693, 2009,
  \burlalt{0903.2849}{http://arxiv.org/abs/0903.2849}.
\newblock \doi{10.1016/j.nuclphysb.2009.04.025}.

\bibitem[KY08]{Kreimer:0612}
D.~Kreimer and K.~Yeats.
\newblock {Recursion and Growth Estimates in Renormalizable Quantum Field
  Theory}.
\newblock {\em Comm. Math. Phys.}, 279:401--427, 2008,
  \burlalt{hep-th/0612179}{http://arxiv.org/abs/hep-th/0612179}.

\bibitem[LAK54]{Landau:1954??}
L.~D. Landau, A.~A. Abrikosov, and I.~M. Khalatnikov.
\newblock On the removal of infinities in quantum electrodynamics (in russ.).
\newblock {\em Dokl. Akad. Nauk SSSR}, 95:497--500, 1954.

\bibitem[Ori12]{Oriti:1110}
D.~Oriti.
\newblock {The microscopic dynamics of quantum space as a group field theory}.
\newblock In {\em Foundations of Space and Time}. Cambridge University Press,
  Cambridge, UK, 2012, \burlalt{1110.5606}{http://arxiv.org/abs/1110.5606}.

\bibitem[ORT15]{Oriti:1409}
D.~Oriti, J.~P. Ryan, and J.~Th{\"u}rigen.
\newblock {Group field theories for all loop quantum gravity}.
\newblock {\em New J. Phys.}, 17:023042, 2015,
  \burlalt{1409.3150}{http://arxiv.org/abs/1409.3150}.
\newblock \doi{10.1088/1367-2630/17/2/023042}.

\bibitem[Pas20]{Pascalie:1903}
R.~Pascalie.
\newblock {A Solvable Tensor Field Theory}.
\newblock {\em Letters in Mathematical Physics}, 110:925--943, 2020,
  \burlalt{1903.02907}{http://arxiv.org/abs/1903.02907}.
\newblock \doi{10.1007/s11005-019-01245-0}.

\bibitem[PPSTW19]{Pascalie:1810}
R.~Pascalie, C.~I. Perez-Sanchez, A.~Tanasa, and R.~Wulkenhaar.
\newblock {On the large N limit of Schwinger-Dyson equations of a rank-3 tensor
  field theory}.
\newblock {\em Journal of Mathematical Physics}, 60:073502, 2019,
  \burlalt{1810.09867}{http://arxiv.org/abs/1810.09867}.
\newblock \doi{10.1063/1.5080306}.

\bibitem[PPSW21]{Pascalie:1706}
R.~Pascalie, C.~I. Perez-Sanchez, and R.~Wulkenhaar.
\newblock {Correlation functions of $\mathrm{U}(N)$-tensor models and their
  Schwinger-Dyson equations}.
\newblock {\em Ann. Inst. Henri Poincar{\'e} Comb. Phys. Interact.},
  8:377--458, 2021, \burlalt{1706.07358}{http://arxiv.org/abs/1706.07358}.
\newblock \doi{10.4171/AIHPD/107}.

\bibitem[Pri22]{Prinz:2001}
D.~Prinz.
\newblock {Gauge Symmetries and Renormalization}.
\newblock {\em Math. Phys. Anal. Geom.}, 25(3):20, 2022,
  \burlalt{2001.00104}{http://arxiv.org/abs/2001.00104}.
\newblock \doi{10.1007/s11040-022-09423-8}.

\bibitem[PS18]{PerezSanchez:1608}
C.~I. Perez-Sanchez.
\newblock {The Full Ward-Takahashi Identity for Colored Tensor Models}.
\newblock {\em Comm. Math. Phys.}, A6:1133, 2018,
  \burlalt{1608.08134}{http://arxiv.org/abs/1608.08134}.

\bibitem[PW20]{Panzer:2018tvy}
E.~Panzer and R.~Wulkenhaar.
\newblock {Lambert-W solves the noncommutative $\Phi^4$-model}.
\newblock {\em Commun. Math. Phys.}, 374:1935--1961, 2020,
  \burlalt{1807.02945}{http://arxiv.org/abs/1807.02945}.
\newblock \doi{10.1007/s00220-019-03592-4}.

\bibitem[RVTW06]{Rivasseau:2005bh}
V.~Rivasseau, F.~Vignes-Tourneret, and R.~Wulkenhaar.
\newblock {Renormalization of noncommutative $\phi^4$-theory by multi-scale
  analysis}.
\newblock {\em Commun. Math. Phys.}, 262:565--594, 2006,
  \burlalt{hep-th/0501036}{http://arxiv.org/abs/hep-th/0501036}.
\newblock \doi{10.1007/s00220-005-1440-4}.

\bibitem[SW22]{schuermann_wulkenhaar_2022}
J.~Schürmann and R.~Wulkenhaar.
\newblock An algebraic approach to a quartic analogue of the kontsevich model.
\newblock {\em Mathematical Proceedings of the Cambridge Philosophical
  Society}, page 1–25, 2022.
\newblock \doi{10.1017/S0305004122000366}.

\bibitem[Tan21]{Tanasa:21}
A.~Tanasa.
\newblock {\em {Combinatorial Physics: Combinatorics, Quantum Field Theory, and
  Quantum Gravity Models}}.
\newblock Oxford University Press, 2021.
\newblock \doi{10.1093/oso/9780192895493.001.0001}.

\bibitem[tH74]{tHooft:1973alw}
G.~'t~Hooft.
\newblock {A Planar Diagram Theory for Strong Interactions}.
\newblock {\em Nucl. Phys. B}, 72:461, 1974.
\newblock \doi{10.1016/0550-3213(74)90154-0}.

\bibitem[Th{\"u}21a]{Thurigen:2103}
J.~Th{\"u}rigen.
\newblock {Renormalization in combinatorially non-local field theories: the
  BPHZ momentum scheme}.
\newblock {\em SIGMA}, 17:094, 2021,
  \burlalt{2103.01136}{http://arxiv.org/abs/2103.01136}.
\newblock \doi{10.3842/SIGMA.2021.094}.

\bibitem[Th{\"u}21b]{Thurigen:2102}
J.~Th{\"u}rigen.
\newblock {Renormalization in combinatorially non-local field theories: the
  Hopf algebra of 2-graphs}.
\newblock {\em Math Phys Anal Geom}, 24(2):19, 2021,
  \burlalt{2102.12453}{http://arxiv.org/abs/2102.12453}.
\newblock \doi{10.1007/s11040-021-09390-6}.

\bibitem[TK13]{Tanasa:2009hb}
A.~Tanasa and D.~Kreimer.
\newblock {Combinatorial Dyson\textendash{}Schwinger equations in
  noncommutative field theory}.
\newblock {\em J. Noncommut. Geom.}, 7(1):255--289, 2013,
  \burlalt{0907.2182}{http://arxiv.org/abs/0907.2182}.
\newblock \doi{10.4171/jncg/116}.

\bibitem[TVT08]{Tanasa:0707}
A.~Tanasa and F.~Vignes-Tourneret.
\newblock {Hopf algebra of non-commutative field theory}.
\newblock {\em Journal of Noncommutative Geometry}, 2:125, 2008,
  \burlalt{0707.4143}{http://arxiv.org/abs/0707.4143}.
\newblock \doi{10.4171/JNCG/17}.

\bibitem[vS07]{vanSuijlekom:0610}
W.~D. van Suijlekom.
\newblock {Renormalization of gauge fields: A Hopf algebra approach}.
\newblock {\em Comm. Math. Phys.}, 276(3):773--798, 2007,
  \burlalt{hep-th/0610137}{http://arxiv.org/abs/hep-th/0610137}.
\newblock \doi{10.1007/s00220-007-0353-9}.

\bibitem[Wey11]{Weyl1911}
H.~Weyl.
\newblock {\"Uber die asymptotische Verteilung der Eigenwerte}.
\newblock {\em Nachrichten von der Gesellschaft der Wissenschaften zu
  Göttingen, Mathematisch-Physikalische Klasse}, 1911:110--117, 1911.
\newblock \urlprefix\url{http://eudml.org/doc/58792}.

\bibitem[Wul19]{Wulkenhaar2019}
R.~Wulkenhaar.
\newblock {\em Quantum field theory on noncommutative spaces}, pages 607--690.
\newblock Springer International Publishing, Cham, 2019.
\newblock \doi{10.1007/978-3-030-29597-4\_11}.

\bibitem[Yea17]{Yeats:17}
K.~Yeats.
\newblock {\em {A Combinatorial Perspective on Quantum Field Theory}},
  volume~15 of {\em Springer Briefs in Mathematical Physics}.
\newblock Springer International Publishing, Cham, 2017.
\newblock \doi{10.1007/978-3-319-47551-6}.

\bibitem[Zim69]{Zimmermann:1969up}
W.~Zimmermann.
\newblock {Convergence of Bogoliubov's method of renormalization in momentum
  space}.
\newblock {\em Comm. Math. Phys.}, 15:208--234, 1969.

\end{thebibliography}

\end{document}